\newcommand{\be}{\begin{equation}}
\newcommand{\ee}{\end{equation}}
\newcommand{\f}{\frac}
\newcommand{\p}{\partial}
\newcommand{\Tr}{{\rm Tr}}
\newcommand{\la}{\langle}
\newcommand{\ra}{\rangle}
\newtheorem{proposition}{Proposition}
\newtheorem{definition}{Definition}
\newtheorem{lemma}{Lemma}
\newtheorem{theorem}{Theorem}
\newtheorem{claim}{Claim}
\theoremstyle{remark}
\newtheorem{remark}{Remark}
\def\uN{\mathrm{U}(N)}
\def\oD{\mathrm{O}(D)}
\def\Ne{\mathrm{N{}_e}}
\def\No{\mathrm{N{}_o}}
\let\a=\alpha \let\b=\beta  \let\g=\gamma  \let\d=\delta
       \let\k=\kappa \let\l=\lambda
\let\m=\mu    \let\n=\nu           \let\om=\omega
 \let\t=\tau     
\let\G=\Gamma     \let\X=F
  \let\eps=\epsilon
\newcommand{\cC}{\mathcal{C}}
\newcommand{\cD}{\mathcal{D}}
\newcommand{\cF}{\mathcal{F}}
\newcommand{\cG}{\mathcal{G}}
\newcommand{\cL}{\mathcal{L}}
\newcommand{\cM}{\mathcal{M}}
\newcommand{\cS}{\mathcal{S}}
\newcommand{\cT}{\mathcal{T}}
\begin{document}

\title{\bf Multiple scaling limits of $\uN^2 \times \oD$\\ multi-matrix models}

\author[1]{Dario Benedetti}
\author[2]{Sylvain Carrozza}
\author[3]{Reiko Toriumi}
\author[4]{Guillaume Valette}

\affil[1]{\normalsize\it CPHT, CNRS, Ecole Polytechnique, Institut Polytechnique de Paris,\authorcr\it Route de Saclay, 91128 Palaiseau, France. \hfill}
\affil[2]{\normalsize\it  Perimeter Institute for Theoretical Physics, 31 Caroline St N, Waterloo, ON N2L 2Y5, Canada. \hfill}
\affil[3]{\normalsize\it Okinawa Institute of Science and Technology Graduate University, 1919-1, Tancha, Onna, Kunigami District, Okinawa 904-0495, Japan. \hfill}
\affil[4]{\normalsize\it Service de Physique Th\'eorique et Math\'ematique,
Universit\'e Libre de Bruxelles (ULB) and The International Solvay Institutes,
Campus de la Plaine, CP 231, B-1050 Bruxelles, Belgium. \authorcr
Emails: {\rm\url{dario.benedetti@polytechnique.edu}, \url{scarrozza@perimeterinstitute.ca}, \url{reiko.toriumi@oist.jp}, \url{gvalette@ulb.ac.be}.} \authorcr \hfill }

\date{}

\maketitle

\hrule\bigskip

\begin{abstract}

\noindent We study the double- and triple-scaling limits of a complex multi-matrix model, with $\mathrm{U}(N)^2\times \mathrm{O}(D)$ symmetry. The double-scaling limit amounts to taking simultaneously the large-$N$ (matrix size) and large-$D$ (number of matrices) limits while keeping the ratio $N/\sqrt{D}=M$ fixed. The triple-scaling limit consists in taking the large-$M$ limit while tuning the coupling constant $\lambda$ to its critical value $\lambda_c$ and keeping fixed the product $M(\lambda_c-\lambda)^\alpha$, for some value of $\alpha$ that depends on the particular combinatorial restrictions imposed on the model. 
Our first main result is the complete recursive characterization of the Feynman graphs of arbitrary genus which survive in the double-scaling limit.
Next, we classify all the dominant graphs in the triple-scaling limit, which we find to have a plane binary tree structure with decorations. Their critical behavior belongs to the universality class of branched polymers.
Lastly, we classify all the dominant graphs in the triple-scaling limit under the restriction to three-edge connected (or two-particle irreducible) graphs. 
Their critical behavior falls in  the  universality  class  of Liouville  quantum  gravity  (or,  in  other  words,  the  Brownian sphere).

\end{abstract}

\hrule\bigskip

\tableofcontents

\section{Introduction}

Field theories in zero spacetime dimensions are in principle nothing but ordinary integrals, yet they are of great interest in physics and combinatorics. Their perturbative expansion in Feynman diagrams requires no integrals over spacetime, hence it reduces to counting certain classes of diagrams: standard field-theoretic objects, such as the partition function, the free energy, and so on, are generating functions of counting sequences for various combinatorial classes.
In physics, such theories are often viewed as toy models, but for certain models of multi-variable integrals the relevant representation of Feynman diagrams has a natural interpretation in terms of geometrical objects, thus opening the possibility of using such models to construct theories of random geometry, which is of interest for quantum gravity and string theory.

Matrix and tensor models, distinguished by different symmetry groups or different representations of the same group, are clear examples of such an application. Their Feynman diagrams are dual to piecewise flat manifolds of dimension two (for matrices) and higher (for tensors), hence their free energy is the generating function of connected piecewise flat manifolds, which in the continuum limit is expected to define a functional integral for Euclidean quantum gravity  \cite{DiFrancesco:1993nw,EDTAmbjorn}.
However, such generating functions are only a formal construction, as the perturbative expansion is divergent due to the large proliferation of Feynman diagrams. A powerful organizational principle that allows to construct well-defined generating functions is the large-$N$ limit, where $N$ is the dimension of the (real or complex) vector space on which the matrices or tensors act as (multi-)linear maps.
For matrix models, the large-$N$ limit leads to an expansion in $1/N$ indexed by the genus of the piecewise flat surfaces \cite{'tHooft:1973jz}, and for each fixed genus the generating function is a convergent series. For tensor models, there is also a $1/N$ expansion, indexed in this case by a non-topological number, known as the degree \cite{expansion1,expansion2,expansion3,expansion4,Dartois:2013he,Carrozza:2015adg,Gurau:2017qya,Benedetti:2017qxl,Carrozza:2018ewt}. The leading order (vanishing degree) of such an expansion is well understood, and it is dominated by melonic graphs \cite{critical}. For the higher orders, it is known that there is only a finite number of schemes of fixed degree \cite{GurSch,Fusy:2014rba}, where each scheme corresponds to a summable class of graphs. However, except for the lowest orders (small degree) \cite{Kaminski:2013maa,Bonzom:2017pqs, Bonzom:2019yik, Bonzom:2019moj}, little is known about the classification of graphs of arbitrary degree.

Given a summable class of Feynman diagrams, we are interested in their continuum limit and continuum probability distribution, the functional integral. For the class of diagrams selected by the large-$N$, such a continuum limit is possible, and it leads to two well-defined and ubiquitous continuum probabilistic models: the continuous random tree (branched polymers) \cite{aldous1991, aldous} and the Brownian sphere \cite{legall2013, miermont2013}.
Typical matrix models lead to the latter, while typical tensor models to the former \cite{melbp}, but special models can be built with exchanged continuum limits \cite{Das:1989fq,Bonzom:2015axa}.
Finding alternatives to these two limits remains an open problem, of particular relevance to quantum gravity, as neither branched polymers nor the Brownian spheres approach a classical geometry at large distances.
It would therefore be desirable to find other variants of large-$N$ limits selecting different classes of Feynman diagrams that could lead to new continuum models.\footnote{The inverse problem could also be interesting. For example, higher-dimensional generalizations of the Brownian map have been introduced in \cite{Lioni:2019tdb}, but at the moment it is not known how to obtain them from a field theory.}

One interesting variant of large-$N$ limit was introduced by Ferrari in \cite{Ferrari:2017ryl}, and further developed and generalized in \cite{Ferrari:2017jgw, Azeyanagi:2017mre}. The fundamental variables can be described either as $D$ different $N\times N$ matrices, or as a tensor describing a bilinear map from a vector space of dimension $N$ to one of dimension $D$ (i.e.\ a tensor whose array forms a rectangular cuboid rather than a cube).
Then, one has two parameters to play with, and different limits can be considered.
In particular, the large-$N$ limit of such models leads again to a genus expansion, but on top of that one can perform the large-$D$ limit as well, thus introducing a new expansion at each fixed genus.
In a typical model, sending both $N$ and $D$ to infinity leads back to the large-$N$ limit of cubic tensors, with melonic dominance, and hence to a branched polymer model.
However, higher orders of such double expansion have not been explored yet.

In this paper we consider a complex multi-matrix model, with $\mathrm{U}(N)^2\times \oD$ symmetry. The $\uN^2$ group acts on the single matrices, while the $\oD$ group acts as a mixing of the matrices. The interaction vertex is taken to be of a tetrahedral type, with two pairs of matrices appearing in a trace of order four. This results in a nice geometric interpretation of the Feynman graphs. First, as in similar one-matrix models, the ribbon structure generated by the propagation of $\mathrm{U}(N)^2$ indices is dual to quadrangulated orientable surfaces of arbitrary genus. Second, as the result of the $\oD$ symmetry, these surfaces are decorated by specific patterns of cycles, referred to as $\oD$-loops, which only intersect at vertices. In this combinatorial space, the statistical properties of the genus are controlled by the large-$N$ limit, while the proliferation of $\oD$-loops -- captured by a second integer number \cite{Ferrari:2017ryl} that we call the grade -- is directly tied to the large-$D$ limit. This already suggests that, by suitably correlating the two limits, one might be able to balance the interactions between topological (genus) and combinatorial ($\oD$-loops) aspects, in such a way that the model is driven to different universality classes in the continuum. With this idea in mind, we address three questions in the present manuscript:
\begin{enumerate}
    \item Given a fixed value of the genus (that is, at fixed order in the $1/N$ expansion), what are the Feynman graphs that survive in the large-$D$ limit?
    \item Are there interesting double-scaling limits, in which $N$ and $D$ are sent to infinity in a correlated way, which allow to resum all the dominant fixed-genus contributions into a single generating function? 
    \item What universality classes of continuum geometry do these double-scaled generating functions lead to at criticality? 
\end{enumerate}
We finally note that, while the formulation and resolution of the last two questions is tied to the random-geometric context, the general characterization of higher-genus leading-order Feynman graphs we will provide might be of broader interest. In higher dimension, it might for instance provide an opportunity to embed the melonic regime of SYK-like tensor/matrix quantum-mechanical models \cite{Witten:2016iux,Klebanov:2016xxf,Azeyanagi:2017drg,Carrozza:2018psc} and large-$N$ tensor quantum field theory \cite{Giombi:2017dtl,Prakash:2017hwq,Benedetti:2017fmp,Giombi:2018qgp,Benedetti:2018ghn,Benedetti:2019eyl,Benedetti:2019ikb,Benedetti:2019rja}, into a genus expansion tractable enough to allow explicit computations (of e.g. quantum corrections to operator dimensions).

\paragraph{Main results.} Our first main result will be the complete recursive characterization -- in Proposition~\ref{propo:g1} and Theorem~\ref{thm:induction} -- of Feynman graphs with vanishing grade and arbitrary genus, which are precisely the ones that survive in a double-scaling limit where $N$ and $D$ are sent to infinity while keeping the 
ratio
$N^2/D$ finite. 

We will then restrict our attention to a particular subclass of graphs -- the \emph{dominant} graphs -- which turn out to govern the critical regime of this theory. Proposition \ref{propo:dominant-schemes} will establish that the latter have a plane binary tree structure. As a result, their partition function (which we obtain through a triple-scaling of the multi-matrix model) has a critical point dominated by large trees, which however describe orientable surfaces with large genus. The expectation value of the genus (or equivalently the size of the trees) diverges at criticality, and even though its samples look naively quite different from a tree, this ensemble converges in the continuum limit to the universality class of branched polymers.

Besides the free energy, which can be viewed as the generating function of vacuum diagrams with $v$ vertices, it is quite natural in combinatorics to consider other generating functions, with further restrictions. 
In Sec.~\ref{sec:2PI}, we will restrict to the class of 3-edge connected graphs, also known in physics as two-particle irreducible (2PI) graphs. This in particular forbids tadpoles and triple edges.
On top of being an interesting class of diagrams from the combinatorial point of view, the 2PI restriction is also natural in the context of two-dimensional quantum gravity, where tadpoles and multiple edges are viewed as dual to degenerate quadrangulations \cite{EDTAmbjorn}. 
Surprisingly, with such a restriction in the triple-scaling limit, we will find a very different critical behavior, falling in the universality class of Liouville quantum gravity (or, in other words, the Brownian sphere) \cite{David:1984tx, duplantier2011liouville, Duplantier:2014daa, David:2014aha}. This change of universality class resulting from the 2PI condition is an interesting new feature, not shared by the usual large-$N$ limit, which in contrast displays universality under such type of change of ensembles \cite{Brezin:1977sv}.

\paragraph{Plan of the paper.} The paper is organized as follows. In section \ref{sec:model}, we introduce the multi-matrix model, review what was previously known about it, and describe our general strategy towards the definition of non-trivial double- and triple-scaling limits. We then move on to the algorithmic characterization of Feynman graphs with vanishing grade, which is the subject of section \ref{sec:recursion}, and culminates in Theorem~\ref{thm:induction}. 
Most of the combinatorial concepts and tools we rely on in the paper are introduced in this section. In particular, the key notion of \emph{scheme}, which describes equivalence classes of graphs defined up to insertion of melonic or ladder subgraphs, is thoroughly reviewed \cite{GurSch,Fusy:2014rba}. In section \ref{sec:dominantSchemes}, we specialize our analysis to the subclass of schemes which govern the dominant singularities of the partition function. A more precise (and non-inductive) characterization of the dominant schemes in terms of plane binary trees is proposed in Proposition~\ref{propo:dominant-schemes}, which in turn allows to pinpoint that the model flows to the universality class of branched polymers at criticality. Finally, section \ref{sec:2PI} focuses on the more interesting phase generated by the 2PI generating function. After showing how to directly impose the 2PI restriction at the level of the matrix integral, we explain that its main virtue is to remove the two classes of subgraphs which typically proliferate in tensor models, and are responsible for the tree-like structure of the dominant schemes: namely, melonic subgraphs and broken ladders. As a result, the dominant singularities are determined by a richer family of geometries, which we call \emph{2PI-dominant schemes}. Their allowed combinatorial structures are elucidated in subsection~\ref{sec:combinatorics_2PI} (Proposition~\ref{propo:dominant-2PI}) and mapped in subsection~\ref{sec:Ising} to Ising states on cubic planar maps. As a result, and quite remarkably, the generating function of 2PI-dominant schemes is equivalent to an Ising model on a certain family of random spheres. This model is investigated in some details by means of an effective two-matrix model in its planar limit (subsection~\ref{sec:effective-matrix}), and with the help of known map enumeration results (subsection~\ref{sec:2PI-enumeration}). The two methods yield consistent results, and allow to conclude that the triple-scaling of the 2PI model lies in the universality class of two-dimensional quantum gravity. In particular, the expectation value of the genus of 2PI-dominant schemes (or equivalently the number of nodes in their cubic map representation) remains finite, but its variance diverges at the critical point.

\section{The model and the general idea}\label{sec:model}

We consider an $\oD$-invariant complex matrix model in zero dimension. The basic degrees of freedom are given by $D$ complex matrices $X_\m$ of size $N\times N$: $(X_\mu)_{ab} = X_{\mu ab}$ with $1\leq \mu\leq D$ and $1\leq a,b\leq N$. Remark that writing the matrices $X_\mu$ in terms of their components $X_{\mu ab}$ makes it evident that we can think of them as the components of a rank-$3$ tensor with indices having different ranges. 
A fundamental quantity to be determined is the free energy
\be \label{eq:free-en-def}
\cF(\l) = \log \int [dX] \, e^{-S[X,X^\dagger]} \,,
\ee
with an action  $S[X,X^\dagger]$ to be specified, and a measure $[dX] = \prod_{\mu,a,b} d{\rm Re} (X_\mu)_{ab}\, d{\rm Im}(X_\mu)_{ab}$.
The global symmetry group, under which the  action is invariant, is assumed to be $U(N)^2\times O(D)$, with the following transformation law 
\be \label{eq:transfLaws}
X_\mu \rightarrow X'_\mu = O_{\mu \mu'} U_{(L)} X_{\mu'} U_{(R)}^{\dagger}\, ,
\ee
where $O$ is an orthogonal matrix in $O(D)$, while $U_{(L)}$ and $U_{(R)}$ are two independent unitary matrices in two distinct copies of the group $U(N)$, which we call left and right, respectively. As a result, the two matrix indices (which we omitted in Eq.\eqref{eq:transfLaws} and in the following, as standard matrix multiplication is assumed) are distinguishable because they transform with respect to two distinct $U(N)$ groups.

Models of this type have been studied in \cite{Ferrari:2017ryl, Azeyanagi:2017drg, Ferrari:2017jgw, Azeyanagi:2017mre, Ferrari:2019ogc}. In this paper, we focus on the following invariant action:\footnote{Other quartic interaction terms compatible with the symmetries, namely $\Tr\bigl[ X^\dagger_\m X_\m X^\dagger_\n X_\n \bigr] $ and $\Tr\bigl[ X^\dagger_\m X_\m \bigr] \Tr\bigl[  X^\dagger_\m X_\n \bigr] $, could be added to the action, but they are of lesser interest to us, as they do not form melonic diagrams at leading order, and thus we will not consider them.}
\be \label{eq:actionND}
S[X,X^\dagger] = ND \Bigl(  \Tr\bigl[ X^\dagger_\m X_\m \bigr] - \frac{\l}{2}\sqrt{D} \, \Tr\bigl[ X^\dagger_\m X_\n X^\dagger_\m X_\n \bigr] \Bigr)\,,
\ee
where the interaction term is known as the tetrahedral interaction and $\lambda$ is the corresponding coupling constant. 
In this action, the coupling constant has been scaled in such a way that it is kept fixed as $N,D \rightarrow + \infty$. Indeed, this is the right scaling so as to obtain well-defined large $N$ and large $D$ expansions \cite{Ferrari:2017ryl}, as further detailed below. 

The perturbative expansion in $\l$ of the free energy $\cF(\l)$ admits a graphical representation in terms of Feynman graphs. These Feynman graphs can be represented in three equivalent ways:
\begin{itemize}

\item as connected $4$-regular directed orientable maps (with self-loops/tadpoles and multiple edges allowed) such that: each vertex has two outgoing and two ingoing half-edges, and furthermore, the two outgoing (resp.\ two ingoing) half-edges appear on opposite sides of the vertex (see Figure \ref{fig:vertex_propa}, left panel);

\item as connected $4$-regular directed orientable stranded graphs, obtained from the above representation upon replacement of each edge by a triple of parallel strands: two external and one internal, as illustrated in the top right corner of Figure \ref{fig:vertex_propa}. The external strands carry the indices of the two $\uN$ symmetry groups, and can therefore be distinguished. An external strand is called \emph{left} (resp.\ \emph{right}) if it is on the left (resp.\ right) side, with respect to the orientation, of an edge connecting two half-edges. Besides, the internal strand corresponds to the $\oD$ symmetry group. The contraction pattern of the three types of strands at each vertex follows from the structure of the tetrahedral interaction (see Figure \ref{fig:vertex_propa}, right panel). It is such that a strand of a given type (left, right, or internal) is always connected to another strand of the same type. In a Feynman graph, the strands are closed into loops. The loops made out of external strands correspond to the faces of the underlying $4$-regular map, and together form a ribbon graph; we call them \emph{L- or R-faces}, depending on whether they are constituted of left or right strands. As for the loops made out of internal strands, we will call them \emph{straight faces} or \emph{$\oD$-loops}.

\item as connected 4-colored graphs, obtained in the usual way as in tensor models \cite{RTM}.

\end{itemize}

\begin{remark}\label{rem:surfaces}
The Feynman graphs correspond to graphs embedded on Riemann surfaces and are thus dual to discretized surfaces. From this perspective, the $\oD$-loops can be thought of as loops drawn on these discretized surfaces.
In the spirit of tensor models, the Feynman graphs can also be viewed as dual to discretizations of three-dimensional pseudo-manifolds, in general non-orientable \cite{RTM,Tanasa:2015uhr}, but we will not rely on this interpretation in the present paper.
\end{remark} 

\begin{figure}[htb]
    \centering
    \includegraphics[scale=.9]{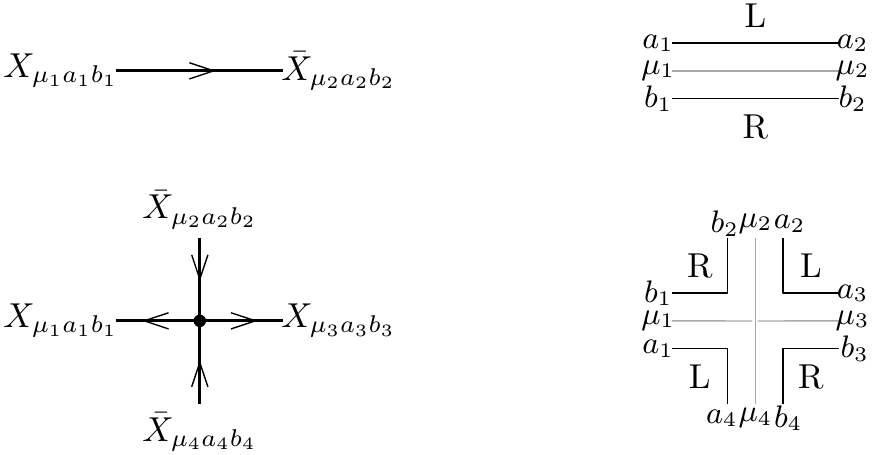}
    \caption{\small Propagator and vertex in the map (left) and stranded graph (right) representations. For simplicity, the edge orientations are left implicit in the stranded representation (note however that fixing the orientation is equivalent to choosing R and L sides).}
    \label{fig:vertex_propa}
\end{figure}

\medskip

As first shown in \cite{Ferrari:2017ryl}, the free energy has a double expansion in $1/N$ and $1/\sqrt{D}$, reading:
\be \label{eq:free-en}
\cF(\l) = \sum_{g\in \mathbb{N}} N^{2-2g} \sum_{\ell\in \mathbb{N}} D^{1+g-\frac{\ell}{2}} \cF_{g,\ell}(\l)\, .
\ee
In this expression, $g\in\mathbb{N}$ is the \emph{genus} of the Feynman graphs, which corresponds to the genus of the $4$-regular maps, or equivalently, to the genus of the corresponding $U(N)^2$ ribbon graphs. It is defined through Euler's relation
\be \label{eq:g}
2-2g = -e+v+f_L+f_R = -v +f \,,
\ee
where $e$ is the number of edges or propagators, $v$ is the number of vertices ($e=2v$ since the maps are $4$-regular), $f_L$ (resp.\ $f_R$) is the number of L-faces (resp.\ R-faces) and $f=f_L+f_R$. The quantity $\ell\in\mathbb{N}$ is another parameter associated with the Feynman graphs; it is related to the \emph{index} (see below), which was defined in full generality in \cite{Ferrari:2017jgw}. In the present case, the parameter $\ell$ is given by:
\be \label{eq:ell}
\frac{\ell}{2} = 2 +  v - \frac{1}{2}f  - \varphi \,,
\ee
where $\varphi$ is the number of straight faces or $\oD$-loops. It can also be expressed, using Eq.\ \eqref{eq:g}, as
\be \label{eq:ell2}
\frac{\ell}{2} = 1+g+ \frac{1}{2}v-\varphi \,.
\ee
As seen from Eq.\ \eqref{eq:free-en}, the parameter $\ell$ introduces an extra grading in the standard genus expansion of matrix models. We therefore refer to it as the \emph{grade}.\footnote{Notice that the parameter $\ell$ was called index in  \cite{Ferrari:2017ryl}, while this name was used for a different quantity in \cite{Ferrari:2017jgw}. In order to avoid confusion, here we introduce a new name for it.}

The fact that the grade is non-negative is made evident by rewriting it as
\be \label{eq:ell3}
\f{\ell}{2} = g_L + g_R \,,
\ee
where $g_L$ (resp. $g_R$) is the genus of the ribbon graph obtained from a Feynman graph in the stranded representation, by deletion of the $L$ (resp. $R$) strands.
Since $g_i \in\frac{\mathbb{N}}{2}$ for $i=L,R$ (the corresponding ribbon graphs are not necessarily orientable), it follows that $\ell\in\mathbb{N}$.

\medskip

Another important combinatorial quantity is the \emph{degree} \cite{Dartois:2013he, Carrozza:2015adg}:
\be \label{eq:omega}
\omega = g+\frac{\ell}{2} \,,
\ee
a close relative of the \emph{Gurau degree} \cite{expansion1, expansion2}, also known as the \emph{index} in the more general context of \cite{Ferrari:2017jgw}. For $D=N$, we recover the large $N$ structure of  $\mathrm{U}(N)^2 \times \mathrm{O}(N)$ \cite{Dartois:2013he} and $\mathrm{O}(N)^3$ \cite{Carrozza:2015adg} tensor models
\be \label{eq:free-enND}
\cF(\l) = \sum_{\om \in \frac{\mathbb{N}}{2}} N^{3-\om}  \cF_{\om}(\l)\, .
\ee
Using Eqs.\ \eqref{eq:g} and \eqref{eq:ell}, the degree can also be written as
\be \label{eq:omega2}
\omega = 3+\frac{3}{2}v-f-\varphi \,,
\ee
which does not refer to two-dimensional topology and coincides with the familiar expression found in the tensor models literature. 

\medskip

We wish to reorganize \eqref{eq:free-en} as
\be \label{eq:free-en-2}
\cF(\l) = \sum_{g\in \mathbb{N}} \left(\f{N}{\sqrt{D}}\right)^{2-2g} \sum_{\ell \in \mathbb{N}} D^{2-\frac{\ell}{2}} \cF_{g,\ell}(\l)\, ,
\ee
from which it is evident that if we keep
\be \label{eq:ratioND}
M:= \f{N}{\sqrt{D}} 
\ee
fixed as we take $N\to\infty$ and $D\to\infty$, we obtain
\be \label{eq:free-en-0}
\lim_{ \substack{N,D\to\infty \\ M<\infty} } \f{1}{D^2} \cF(\l) = \sum_{g\geq 0} M^{2-2g}  \cF_{g,0}(\l) \equiv \cF^{(0)}(M,\l)\, .
\ee
In other words, by allowing $D\neq N$, but keeping the ratio \eqref{eq:ratioND} fixed, we have a \emph{double-scaling limit} that selects Feynman graphs with $\ell=0$, but of arbitrary genus. Since such graphs are much less than all the possible graphs, they might lead to a summable series.

As usual, we are interested in determining the critical point $\l_c$, which we do not expect to depend on $g$, and the critical exponent $\g(g)$, associated to a non-analytic behavior of the free energy such as $\cF_{g,0}(\l)_{\rm crit}\sim (\l-\l_c)^{2-\g(g)}$. Determining the critical properties of the model is interesting both from the combinatorial and physical point of view: from the critical point and critical exponent we can infer the asymptotic number of graphs for large number of vertices, which is a standard objective in combinatorics \cite{flajolet2009analytic};
and from the physical point of view, the critical model determines the continuum limit of the geometrical objects dual to the Feynman graphs, as in the limit $\lambda\to\lambda_c$ the average number of vertices typically diverges \cite{DiFrancesco:1993nw}.

In principle, we could use $\cF^{(0)}(M,\l)$ as a generating function for $\cF_{g,0}(\l)$, and use the latter to define a continuum limit at fixed $g$. However, since we expect  $\l_c$ to be genus-independent, we can also find a combination of $M$ and $\l-\l_c$ to keep fixed for a \emph{triple-scaling limit}. More precisely, if $\g(g)=a+b g$, then we have
\be \label{eq:cont-lim}
\lim_{ \substack{M\to\infty \\ \l\to \l_c^- } } \f{1}{M^2 (\l-\lambda_c)^{2-a}} \cF^{(0)}(M,\l) = \sum_{g \in \mathbb{N}} \k^{2g} f_g \,,
\ee
with $\k^{-1} = M (\l-\l_c)^{b/2}$ fixed.
Since the large-$D$ limit selects for each genus $g$ a subset of diagrams, we expect that the series in Eq.~\eqref{eq:cont-lim} will have an improved convergence with respect to the usual double-scaling limit of matrix models, which is not even Borel summable \cite{DiFrancesco:1993nw}. In fact, we will see in Sec.~\ref{sec:dominantSchemes} that the triple-scaling limit leads to a series with a finite radius of convergence.

For practical reasons, it will turn out to be more convenient to work with graphs having a marked edge, which can equivalently be thought as two-point graphs, i.e.\ graphs in the perturbative expansion of the two-point function:
\be \label{eq:2pt-def}
\left\la \Tr\left[ X^\dagger_\m X_\m \right]\right\ra 
 = \frac{ \int [dX] \, e^{-S[X,X^\dagger]}\Tr\left[ X^\dagger_\m X_\m \right] }{\int [dX] \, e^{-S[X,X^\dagger]}}\,.
\ee 

\section{Recursive characterization of graphs with vanishing grade}\label{sec:recursion}

In this section, we perform a detailed combinatorial analysis of Feynman graphs with vanishing grade, culminating in the complete characterization of Proposition \ref{propo:g1} and Theorem \ref{thm:induction}. 

\subsection{Combinatorial structures}

Most of the following structures are identical to ones previously introduced in the tensor models literature, and more particularly in \cite{GurSch, Fusy:2014rba}.

\subsubsection{Melon-free graphs}\label{sec:MelonFree}

For convenience, we work with \emph{rooted} Feynman graphs, which are connected Feynman graphs with a marked edge, called the \emph{root-edge}. This choice will play an important role in some of our combinatorial constructions. The root-edge can also be thought of as a book-keeping device allowing to analyze what is really the two-point generating function, while still summing over vacuum graphs. In addition, it is also convenient to represent the root-edge by a \emph{root-vertex} of degree 2, inserted in the middle of the root-edge, with two attached oriented edges. This is of course nothing but the $\Tr\left[ X^\dagger_\m X_\m \right]$ insertion in \eqref{eq:2pt-def}.
In the the rest of the paper, we will only work with rooted Feynman graphs, even if we sometimes keep it implicit.

 We introduce a particular rooted Feynman graph, called the rooted \emph{cycle graph}, which corresponds to an oriented edge that connects the root-vertex to itself; see Figure \ref{fig:cycle-graph}. By convention, it is characterized by $v=0$, $f=2$ and $\varphi=1$. In particular, it has $g=\ell=\omega=0$.

\begin{figure}[htb]
    \centering
    \includegraphics[scale=.8]{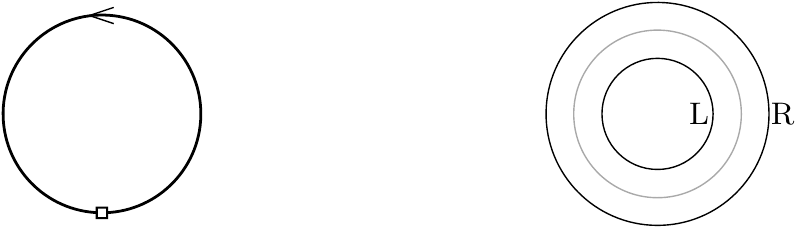}
    \caption{\small The rooted cycle graph (left) and its stranded representation (right). On the right panel, the gray line represents the $\mathrm{O}(D)$-loop or straight face, and the black lines the L- and R-faces.}
    \label{fig:cycle-graph}
\end{figure}

The class of melonic Feynman graphs plays a central role in tensor models \cite{critical, RTM}. In their rooted version, they are defined as follows \cite{GurSch, Fusy:2014rba}.

In a rooted Feynman graph, we first define an \emph{elementary melonic 2-point subgraph} as the subgraph represented on the right side of Figure \ref{fig:melon}, where the two external half-edges or legs are distinct. Note that by definition, an elementary melonic $2$-point subgraph does not contain the root-vertex. We then define a \emph{melonic insertion} on an edge $e$ of a rooted Feynman graph as the replacement of $e$ by an elementary melonic $2$-point subgraph, respecting the orientation (see Figure \ref{fig:melon}). We call the reverse operation a \emph{melonic removal}. Finally, a rooted Feynman graph is called \emph{melonic} if it can be reduced to the rooted cycle-graph by successive removals of elementary melonic $2$-point subgraphs. 

\begin{figure}[htb]
    \centering
    \includegraphics[scale=.8]{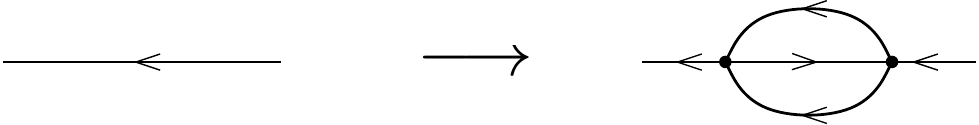}
    \caption{\small Insertion of an elementary melonic $2$-point subgraph.}
    \label{fig:melon}
\end{figure}

It is well-known that \cite{critical,Dartois:2013he}:

\begin{claim} The melonic rooted Feynman graphs are exactly the rooted Feynman graphs of degree $\omega=0$; that is, of genus $g=0$ and of grade $\ell=0$.
\end{claim}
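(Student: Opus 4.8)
The plan is to prove the claim in two directions: first that melonic graphs have degree $\omega=0$, and second that any rooted Feynman graph with $\omega=0$ is melonic. Since $\omega = g + \ell/2$ and both $g\geq 0$ and $\ell\geq 0$, the equality $\omega=0$ is equivalent to $g=0$ and $\ell=0$, so there is nothing extra to check there once we work with $\omega$.

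For the forward direction, I would argue by induction on the number of vertices $v$. The base case is the rooted cycle graph, which has $\omega=0$ by the stated convention. For the inductive step, I would show that a melonic insertion does not change the degree. Concretely, inserting an elementary melonic $2$-point subgraph on an edge adds $\Delta v = 2$ vertices, $\Delta f = 3$ faces (one new L-face, one new R-face, and one new face of the remaining type — here one must count carefully using the stranded picture of Figure~\ref{fig:melon}), and $\Delta\varphi = 1$ straight face, so that by Eq.~\eqref{eq:omega2}, $\Delta\omega = \tfrac32\Delta v - \Delta f - \Delta\varphi = 3 - 3 - 1 \cdot \ldots$ — here I would pin down the exact face counts so that the combination vanishes. (The standard statement is that a melonic two-point insertion is degree-preserving; the precise bookkeeping is routine but must be done with the correct stranded structure of the tetrahedral vertex.) Since melonic graphs are built from the cycle graph by such insertions, they all have $\omega = 0$.

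For the converse, I would use the contrapositive together with the contraction structure: suppose $G$ is a rooted Feynman graph that is \emph{not} melonic. By definition this means $G$ cannot be reduced to the rooted cycle graph by successive melonic removals; equivalently, after performing all possible melonic removals, one reaches a melon-free graph $G_0$ that is not the cycle graph (this uses that melonic removal is confluent / the order does not matter, a fact which — together with the relevant definitions of melon-free graphs — is set up in Section~\ref{sec:MelonFree}). Since melonic removals preserve $\omega$, it suffices to show that any melon-free rooted Feynman graph other than the cycle graph has $\omega \geq 1$, i.e.\ $\omega > 0$. This is the standard ``melonic dominance'' lower bound for this class of models: one shows that in a melon-free graph with at least one vertex, the number of faces plus straight faces is strictly smaller than $3 + \tfrac32 v$, so that $\omega \geq 1$. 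I would cite or adapt the optimal-bound argument of \cite{critical, Dartois:2013he, Carrozza:2015adg} (e.g.\ by a jacket/pinching analysis, or by directly bounding the number of L-faces, R-faces and $\oD$-loops in a graph with no elementary melonic $2$-point subgraph).

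The main obstacle I anticipate is the second half — establishing that melon-free non-cycle graphs have strictly positive degree. The degree-invariance of melonic insertions is a clean local face count, but the lower bound $\omega\geq 1$ for melon-free graphs is the genuinely model-specific input: one must verify that the known tensor-model bound survives in the $\uN^2\times\oD$ setting, where the stranded structure distinguishes L-strands, R-strands and the single internal ($\oD$) strand. A secondary technical point is making precise the claim that the melon-free core $G_0$ obtained by exhaustive melonic removals is well-defined and independent of the removal sequence; I would handle this by the usual argument that removing one elementary melonic $2$-point subgraph cannot destroy another (they are either nested or disjoint), so the process is confluent and terminating.
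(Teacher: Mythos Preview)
The paper does not actually prove this claim: it introduces it with ``It is well-known that \cite{critical,Dartois:2013he}'' and moves on. So there is no paper proof to compare against; your plan is essentially the standard argument from those references (induction on melonic insertions for the forward direction, reduction to a melon-free core plus a strict lower bound on $\omega$ for the converse), and that is the right thing to do here.

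Two small technical corrections are worth flagging. First, your face count for the melonic insertion is slightly off: in this model $f$ counts only L- and R-faces, while the $\oD$-loop is counted by $\varphi$. An elementary melonic $2$-point subgraph contributes $\Delta f = 2$ (one internal L-face and one internal R-face) and $\Delta\varphi = 1$ (one internal $\oD$-loop), so by Eq.~\eqref{eq:omega2} one gets $\Delta\omega = \tfrac32\cdot 2 - 2 - 1 = 0$ as desired; your ``$\Delta f = 3$'' conflates $f$ and $\varphi$. Second, in this model $\omega \in \tfrac12\mathbb{N}$ (since $\omega = g + \ell/2$ with $\ell\in\mathbb{N}$), so ``$\omega \geq 1$'' is not the same as ``$\omega > 0$''; the correct minimal positive value is $\omega = \tfrac12$, and that is all you need for the contrapositive.
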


\begin{claim} A melonic insertion or removal preserves the genus, grade and degree.
\end{claim}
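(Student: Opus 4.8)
The statement to prove is that a melonic insertion or removal preserves the genus $g$, the grade $\ell$, and the degree $\omega$. Since the degree satisfies $\omega = g + \ell/2$ by \eqref{eq:omega}, it suffices to show invariance of $g$ and $\ell$; the degree then follows automatically. My plan is to track how the three defining quantities $v$, $f$ (the number of L- and R-faces), and $\varphi$ (the number of $\oD$-loops) change under a single melonic insertion, and then feed the results into the formulas \eqref{eq:g} and \eqref{eq:ell}. Since a melonic removal is just the inverse operation, it suffices to treat insertions.

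First I would count the change in vertices: an elementary melonic $2$-point subgraph (the right panel of Figure \ref{fig:melon}) adds exactly two new $4$-valent vertices, so $\Delta v = 2$, and correspondingly $\Delta e = 4$ (consistent with $e = 2v$). Next, the key computation is the change in the face/loop counts. Inserting the elementary melon on an edge $e$ amounts, in the stranded representation, to cutting each of the three strands of $e$ and re-routing them through the two new vertices together with the three new edges. The crucial local observation is that the contraction pattern of the tetrahedral vertex forces each strand type (left, right, internal) to reconnect to a strand of the same type, and that the elementary melon is built so that this local surgery creates exactly one new closed loop of each type: one new L-face, one new R-face, and one new $\oD$-loop, while the previously-existing faces/loops passing through $e$ are simply re-routed without merging or splitting. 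Hence $\Delta f_L = \Delta f_R = 1$, so $\Delta f = 2$, and $\Delta \varphi = 1$. I would verify this carefully either by drawing the stranded picture of the elementary melon explicitly, or by using the $4$-colored-graph representation, where a melonic insertion is the standard "dipole creation" whose effect on the bubble/face count is well documented \cite{RTM}.

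With these increments in hand the conclusion is immediate bookkeeping: from \eqref{eq:g}, $\Delta(2-2g) = -\Delta e + \Delta v + \Delta f = -4 + 2 + 2 = 0$, so $\Delta g = 0$; and from \eqref{eq:ell}, $\Delta(\ell/2) = \Delta v - \tfrac12 \Delta f - \Delta\varphi = 2 - 1 - 1 = 0$, so $\Delta \ell = 0$. Finally $\Delta\omega = \Delta g + \tfrac12\Delta\ell = 0$. One should also note for completeness that a melonic insertion on an edge of a rooted Feynman graph yields again a legitimate rooted Feynman graph with the same root-edge (the insertion is performed on a non-root edge, or the root structure is preserved appropriately), and that connectedness is clearly maintained.

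The main obstacle is the local face-counting step: one has to be genuinely careful that the elementary melonic $2$-point subgraph reconnects the strands so as to add \emph{precisely} one loop of each of the three types, with no unexpected merging of pre-existing faces. This is where an explicit stranded diagram (or an appeal to the colored-graph dipole-contraction lemma of \cite{RTM}) is needed; once that increment table $(\Delta v, \Delta f_L, \Delta f_R, \Delta \varphi) = (2,1,1,1)$ is established, the rest is the one-line Euler-relation arithmetic above.
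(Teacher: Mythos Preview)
Your proposal is correct. The paper itself does not supply a proof of this claim: it is listed among the standard ``well-known'' facts (Claims 1--4 in Section \ref{sec:MelonFree}) and left unproven, with implicit reference to the tensor-model literature \cite{critical, Dartois:2013he, RTM}. Your direct computation of the increment table $(\Delta v, \Delta f_L, \Delta f_R, \Delta \varphi) = (2,1,1,1)$ followed by the Euler-relation arithmetic is exactly the standard argument one would give, and nothing is missing.
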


\begin{claim} The generating function $T(\lambda)$ of rooted melonic Feynman graphs obeys the closed equation
\be \label{eq:melonEq}
T(\lambda)=1+\lambda^2 T(\lambda)^4 \, .
\ee
\end{claim}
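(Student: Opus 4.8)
\textbf{Proof plan for the closed equation} $T(\lambda) = 1 + \lambda^2 T(\lambda)^4$.

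The plan is to exploit the recursive structure of rooted melonic graphs directly. First I would recall the defining property: every rooted melonic Feynman graph is obtained from the rooted cycle graph by a finite sequence of melonic insertions, and conversely every melonic insertion preserves melonicity. So I set up a bijective decomposition of the set of rooted melonic graphs, keeping track of the power of $\lambda$. Each vertex carries a factor $\lambda$ (from the tetrahedral interaction in \eqref{eq:actionND}), so $T(\lambda) = \sum_{k\ge 0} t_k \lambda^k$ where $t_k$ counts rooted melonic graphs with $k$ vertices; equivalently, since melonic 2-point subgraphs come with two vertices each, only even powers appear.

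The core step is a ``peeling at the root'' argument. Look at the root-edge of a rooted melonic graph $G$. Either $G$ is the rooted cycle graph itself — contributing the constant term $1$ — or the root-edge is the external edge of an outermost elementary melonic 2-point subgraph, whose structure (Figure~\ref{fig:melon}) consists of one vertex on each of the two strands meeting at a pair of vertices, i.e.\ a configuration with exactly two vertices and four edges dangling off, each of which must itself be filled in by an arbitrary rooted melonic graph. More carefully: I would argue that performing a single melonic removal at the root produces four edges, and the graph rooted at each of these four edges is again melonic (using the claim that melonic removal preserves degree, hence melonicity, together with the fact that the four ``branches'' are independent). This gives the decomposition $G \mapsto (G_1, G_2, G_3, G_4)$ of rooted melonic graphs into an ordered quadruple of rooted melonic graphs, at the cost of the two vertices of the elementary melon, i.e.\ a factor $\lambda^2$. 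Translating into generating functions yields exactly $T = 1 + \lambda^2 T^4$.

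The main obstacle — and the step I would write out with care — is verifying that the decomposition is genuinely a bijection: that the outermost melonic subgraph at the root is well-defined and unique, that the four branches are honestly independent rooted melonic graphs (no shared vertices, no constraint linking them), and that reassembling any quadruple of rooted melonic graphs via a melonic insertion yields a melonic graph with the root-edge in the prescribed position. The orientation bookkeeping (the melonic insertion ``respects the orientation'', and the tetrahedral vertex forces left/right/internal strands to connect like-to-like) needs to be checked so that the four slots are genuinely distinguishable and each accepts an arbitrary melonic graph. Once the bijection is established, extracting the functional equation is immediate: $T$ is the generating function of the pointed class, the atom ``$1$'' is the cycle graph, and the construction ``elementary melon with four melonic graphs plugged in'' contributes $\lambda^2 T^4$; by the symbolic method this is precisely \eqref{eq:melonEq}. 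I would also remark that this is the standard melonic Schwinger--Dyson equation and cite \cite{critical, Dartois:2013he} for the analogous statements, so the proof can be kept brief.
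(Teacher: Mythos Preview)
The paper does not actually prove this claim: it is stated as a well-known fact under the umbrella ``It is well-known that \cite{critical,Dartois:2013he}'' and left without argument. Your proposal is precisely the standard root-peeling/Schwinger--Dyson derivation one finds in those references, and it is correct in outline. The one point worth tightening is the phrasing ``the root-edge is the external edge of an outermost elementary melonic 2-point subgraph'': what you really want to say is that if $G$ is not the cycle graph, then the edge incident to the root-vertex carries a (unique) elementary melon, and after removing it you are left with three internal edges plus one outgoing edge, each of which is the root of an independent melonic subgraph --- giving the four factors of $T$. With that clarification the bijection is clean and your plan goes through; the citation you already give is appropriate and matches the paper's treatment.
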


 A rooted Feynman graph is \emph{melon-free} if it does not contain any elementary melonic $2$-point subgraph. By definition, the rooted cycle-graph is the only melon-free rooted Feynman graph with $\omega=g=\ell=0$. We can restrict ourselves to the study of melon-free rooted Feynman graphs because of the following result \cite{GurSch, Fusy:2014rba}.

\begin{claim}\label{cl:melon} Any rooted Feynman graph $G$ can be uniquely obtained from a melon-free rooted Feynman graph $H$, called the core of $G$, by a sequence of successive melonic insertions. In particular, the core of any melonic rooted Feynman graph is the rooted cycle-graph. 
\end{claim}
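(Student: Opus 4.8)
```latex
The plan is to prove the statement by a double induction: first on the number of elementary melonic 2-point subgraphs that can be successively removed, and then to argue that the endpoint of this removal process (the core) is well-defined, i.e.\ independent of the order of removals. The existence part is easy: starting from $G$, if it contains an elementary melonic 2-point subgraph, perform a melonic removal; by the second Claim this preserves $g$, $\ell$, $\omega$, and it strictly decreases $v$. Since $v$ is a non-negative integer, the process terminates after finitely many steps, and the resulting graph $H$ contains no elementary melonic 2-point subgraph, hence is melon-free by definition. Moreover $G$ is recovered from $H$ by performing the reverse sequence of melonic insertions. So the only real content is \emph{uniqueness} of the core.

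For uniqueness, the key tool is a local confluence (diamond lemma) argument. Suppose $G$ contains two distinct elementary melonic 2-point subgraphs $m_1$ and $m_2$, with melonic removals $G \to G_1$ (removing $m_1$) and $G \to G_2$ (removing $m_2$). I would argue that there exists a common graph $G'$ reachable from both $G_1$ and $G_2$ by (possibly empty) sequences of melonic removals. There are two cases. If $m_1$ and $m_2$ are edge-disjoint and neither contains the other, then removing one does not affect the other: $m_2$ still appears as an elementary melonic 2-point subgraph in $G_1$ and vice versa, and removing it from $G_1$ yields the same graph as removing $m_1$ from $G_2$. If instead one is nested inside the other, say the external legs of $m_2$ lie on edges internal to $m_1$, then one checks directly from the explicit structure of an elementary melon (Figure~\ref{fig:melon}) that after removing $m_2$ the subgraph $m_1$ either becomes, or still contains, an elementary melonic 2-point subgraph, so again the two removal sequences can be completed to a common graph. (The root-vertex never lies inside an elementary melon by definition, so it never obstructs a removal; this is why rootedness is harmless here.) Having established local confluence, and knowing that the removal relation is \emph{terminating} (strictly decreasing $v$), Newman's lemma gives global confluence: the normal form---the melon-free core---is unique. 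Finally, applying this to a melonic $G$ and invoking the first Claim (melonic $\iff$ $\omega=0$) together with the fact that the rooted cycle-graph is the unique melon-free graph with $\omega=0$, we conclude that its core is the rooted cycle-graph.

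The main obstacle I anticipate is the nested-melons case in the confluence argument: one must verify carefully, from the stranded (or colored-graph) picture, that a melonic removal performed on an edge internal to another elementary melon does not destroy the melonic character of the enclosing subgraph, and that the two orders of removal genuinely land on the same graph rather than merely on isomorphic-but-differently-rooted graphs. This is a finite case check on a small number of local configurations, but it is the place where the precise definition of ``elementary melonic 2-point subgraph'' (in particular the requirement that the two external legs be distinct) is actually used, so it deserves care. Everything else---termination, and the reconstruction of $G$ from $H$ by the reverse insertion sequence---is routine. I would also remark that this is precisely the statement proved in \cite{GurSch, Fusy:2014rba}, so in the paper it may suffice to cite those references and merely indicate the confluence idea.
```
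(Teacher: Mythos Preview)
The paper does not actually prove this claim: it is stated as a known result, with a citation to \cite{GurSch, Fusy:2014rba}, and no argument is given beyond that. Your confluence (diamond lemma) sketch is correct and is essentially the standard proof one finds in those references; you even anticipated this at the end of your proposal, so there is nothing to compare beyond noting that the paper opts to cite rather than reprove.
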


This result can also be formulated as follows. We define the insertion of a rooted Feynman graph $G_2$ on an edge $e_1$ of another rooted Feynman graph $G_1$, denoted by $(G_1,e_1)\#G_2$, as the replacement of $e_1$ by the $2$-point subgraph obtained from $G_2$ by removing its root-vertex; see Figure \ref{fig:connected_sum}. It is clear that any rooted Feynman graph $G$ can be obtained from its (unique) core $H$ by inserting melonic rooted Feynman graphs on the edges of $H$. 

\begin{figure}[htb]
    \centering
    \includegraphics[scale=.8]{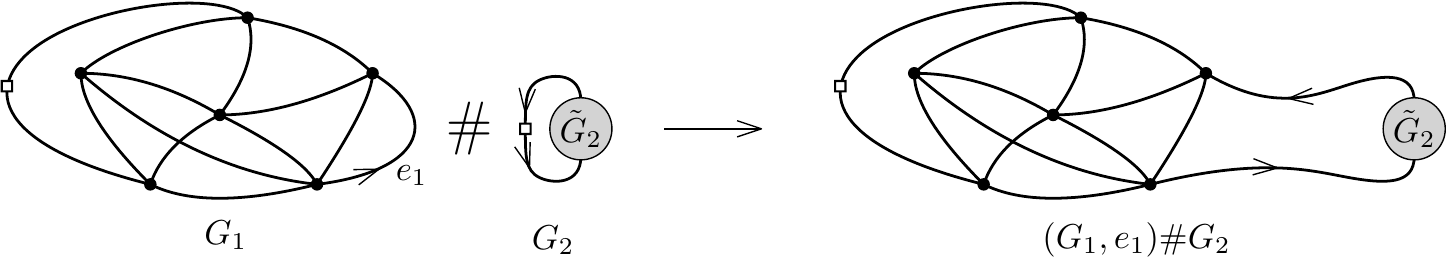}
    \caption{\small Given two rooted graphs $G_1$ and $G_2$, one may construct the graph $(G_1,e_1)\#G_2$, obtain by insertion of a two-point function on the edge $e_1$.}
    \label{fig:connected_sum} 
\end{figure}

\subsubsection{Schemes}\label{sec:schemes}

We define a \emph{dipole} as a two-edge subgraph on two vertices (excluding the root-vertex) which contains a face of length two. There are three types of dipoles (see Figure \ref{fig:03dipoles}):
\begin{itemize}
\item non-planar dipoles or N-dipoles, which contain a length-two straight face;
\item left dipoles or L-dipoles, which contain a length-two L-face;
\item right dipoles or R-dipoles, which contain a length-two R-face.
\end{itemize}

Note that a N-dipole has the topology of a cylinder\footnote{The letter "N" stands for "non-planar", and refers to the fact that the length-two face of a N-dipole may be non-contractible.}, whereas an L- or R-dipoles has the topology of a disk. This is illustrated in Figure \ref{fig:N-dipole_topology}.

The four half-edges incident to a dipole, i.e.\ its external legs, can be canonically partitioned into two pairs as depicted on Figure \ref{fig:03dipoles}, where one pair is represented on the left side of the dipole and the other pair on the right side. This distinction will be useful to compose dipoles into ladders, which we now define.
\begin{figure}
    \centering
    \includegraphics[scale=.8]{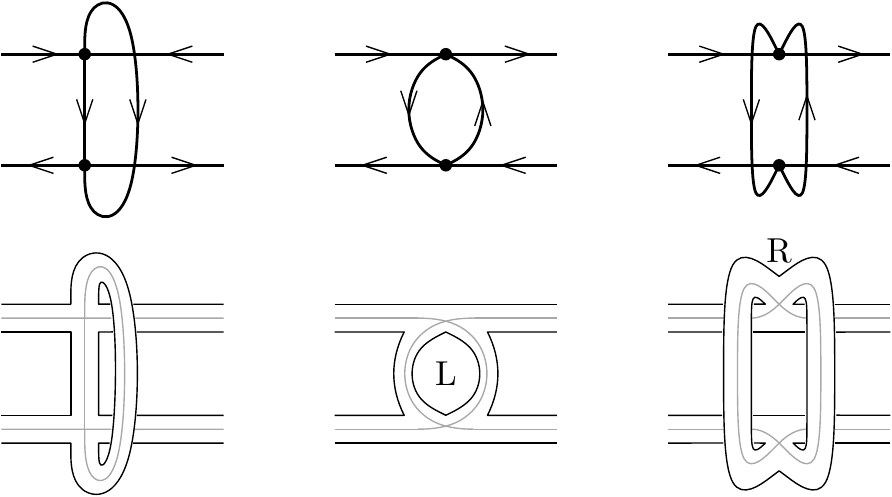}
    \caption{\small From left to right: a N-dipole, a L-dipole and a R-dipole. The Feynman graphs are shown in the upper part, their associated stranded structures are illustrated in the lower part. A L-dipole (resp.\ R-dipole) is defined as containing an internal L-face (resp.\ R-face).}
    \label{fig:03dipoles}
\end{figure}

\begin{figure}[htb]
    \centering
    \includegraphics[scale=.9]{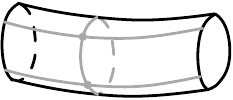}
    \caption{\small Topological structure of a N-dipole: the dipole is represented in grey, and the manifold it is canonically embedded into in black.}
    \label{fig:N-dipole_topology} 
\end{figure}

A \emph{ladder}\footnote{Our notion of ladder is equivalent to the notion of \emph{proper chain} in \cite{Fusy:2014rba}.} is a sequence of $n\geq2$ dipoles $(d_1, ..., d_n)$ such that two consecutive dipoles $d_i$ and $d_{i+1}$ are connected by two edges that involve two half-edges on the same side of $d_i$ and two half-edges on the same side of $d_{i+1}$. The dipoles in a ladder will be referred to as \emph{rungs}. Furthermore, the edges that connect them split into two \emph{rails}. Note that by definition, the root-vertex cannot appear on a rung or on a rail of a ladder.

A ladder is said to be \emph{broken}, or a B-ladder, if it contains rungs of different types; it is \emph{unbroken} otherwise. Accordingly, there are three types of unbroken ladders: N-, L- and R-ladders, which respectively only contain rungs of type N, L and R. To keep track of the external face structure of ladders, we furthermore need to split N-ladders into two subfamilies: $\Ne$-ladders which have an even number of rungs, and $\No$-ladders which have an odd number of rungs. This is summarized in Figure \ref{fig:0chainvertexes}.

A ladder is \emph{maximal} if it is not included into a larger ladder. We have the following property \cite{GurSch, Fusy:2014rba}.

\begin{claim}\label{claim:vertex-disjoint} Any ladder can be uniquely extended into a maximal ladder. In addition, two distinct maximal ladders are vertex-disjoint.\footnote{Note that for these statements to be true, it is important that Feynman graphs are rooted and that ladders contain at least two rungs, as it is the case with our conventions.}
\end{claim}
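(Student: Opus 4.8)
The plan is to prove Claim~\ref{claim:vertex-disjoint} in two parts, following the standard strategy for such ``maximal substructure'' statements: first show that maximal ladders are well-defined (existence and uniqueness of the maximal extension), then show they are pairwise vertex-disjoint.

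For the first part, I would proceed as follows. Given a ladder $L = (d_1,\dots,d_n)$, I want to show there is a unique maximal ladder containing it. The key local observation is that, because Feynman graphs are $4$-regular and a dipole has four external legs partitioned into two sides, the way a ladder can be extended is essentially forced: at either end of $L$ (say at $d_n$, on the side not already used to connect to $d_{n-1}$), the two half-edges on that free side either (i) both connect to the same side of some dipole $d_{n+1}$ not already in $L$ and distinct from the root-vertex, in which case $L$ extends uniquely to the right by one rung, or (ii) they do not, in which case $L$ cannot be extended to the right. Since at each step the extension, when it exists, is uniquely determined, and since the graph is finite so the process terminates, one obtains a well-defined maximal ladder $\bar L \supseteq L$. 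Uniqueness of $\bar L$ follows because any maximal ladder containing $L$ must contain all these forced extensions, so it equals $\bar L$. I would make the ``forced'' claim precise by a short case analysis on how the two half-edges on the free side of an end-rung are paired up in the ambient $4$-regular map, using the definition of dipole (a two-vertex, two-edge subgraph carrying a length-two face).

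For the second part, I would argue by contradiction: suppose two distinct maximal ladders $\bar L$ and $\bar L'$ share a vertex $x$. Since $x$ lies on a rung $d$ of $\bar L$ and a rung $d'$ of $\bar L'$, and a vertex has only four half-edges, I would analyze how the rung(s) through $x$ sit inside each ladder. The crucial point is that knowing $x$ is a dipole-vertex essentially determines its rung: the dipole containing $x$ is detected by the length-two face incident to $x$, and that face is intrinsic to the graph, not to the choice of ladder. Hence $d$ and $d'$ are either equal, or they are two different dipoles both containing $x$ — but a $4$-regular vertex can belong to at most ... here I would check, using the strand structure, that $x$ actually determines the dipole through it (or at worst lies in a bounded local configuration), so that $d = d'$. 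Once $d = d'$, the rails emanating from $d$ are again forced by $4$-regularity, so the extension of $\bar L$ and of $\bar L'$ past $d$ must agree on both sides; iterating, $\bar L$ and $\bar L'$ coincide rung-by-rung, contradicting $\bar L \neq \bar L'$. The role of the hypotheses flagged in the footnote enters here: if ladders of length one were allowed, a single dipole could sit inside several larger ladders in incompatible ways, and if the root-vertex were permitted on a rung or rail, the forced-extension argument would break at the root.

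The main obstacle I anticipate is the case analysis showing that a vertex (together with the ambient graph) determines the dipole through it, and that the two rails leaving a shared rung are uniquely determined. This is where one genuinely uses the specific combinatorics of the tetrahedral vertex and the three strand types: one must verify that a length-two face incident to a given vertex cannot be ``completed into a dipole'' in two essentially different ways, and that no exotic local configuration (e.g. a vertex sitting on two rungs of the same ladder, or a dipole shared between an $\Ne$- and an $\No$-ladder) arises. I expect this can be dispatched by examining the finitely many ways two edges and the strands they carry can close up a length-two face at a $4$-regular vertex, but it is the step requiring care rather than the bookkeeping of the induction. I would also double-check that the definition of ladder — two consecutive rungs connected by two edges using one full side of each — leaves no ambiguity in ``which side is free'' at an end-rung, since this is what makes the extension genuinely unique; most likely one observes that a maximal ladder's end-rung has its free side's two half-edges going to distinct dipoles or back into the ladder in a way that prevents extension, and this dichotomy is exactly what terminates the process.
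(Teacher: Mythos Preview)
The paper does not prove this Claim at all: it is stated as a known result and attributed to the references \cite{GurSch,Fusy:2014rba}, so there is no in-paper argument to compare your proposal against. Your strategy---greedy forced extension at each end for uniqueness, then a shared-rung/forced-rails induction for disjointness---is essentially the one used in those references, so in that sense your approach is the expected one.

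One comment on the substance of your sketch: in Part~2 you correctly identify the delicate step, namely that a shared vertex $x$ forces a shared rung. It is not quite true that ``the dipole through $x$'' is unique---a $4$-valent vertex can belong to two distinct dipoles $d$ and $d'$ if its four half-edges split into two parallel pairs toward two different neighbours. The clean way to close this case is not to argue that $d=d'$, but rather to observe that in this configuration the two external legs of $d$ at $x$ are precisely the two edges of $d'$ (and vice versa), so $d$ and $d'$ are adjacent in a way that forces any ladder through $d$ to continue through $d'$; one then checks that the two maximal ladders $\bar L$ and $\bar L'$ must both contain the pair $(d,d')$ as consecutive rungs, and the forced-extension argument takes over from there. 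This is also where the ``at least two rungs'' and ``no root on a rung or rail'' hypotheses are genuinely used, as you note. With that refinement your outline is correct.
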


In the following, we will replace maximal ladders by new 4-point vertices which we call \emph{ladder-vertices}.\footnote{They are called chain-vertices in \cite{Fusy:2014rba}.} There are five types of ladder-vertices; we call them N${}_e$-, N${}_o$-, L-, R- and B-vertices. Whenever we do not need to distinguish N${}_e$- and N${}_o$-vertices, we simply call them N-vertices. This is illustrated in Figure \ref{fig:0chainvertexes}. 

\medskip

\begin{remark}
The external legs on the two sides of each ladder-vertex are fixed by convention as in Figure \ref{fig:0chainvertexes}. In the case of a B-vertex, this may require `twisting' the two rails at one end of the corresponding B-ladder. 
\end{remark}

In the rest of the paper, we will rely heavily on \emph{schemes}, which characterize equivalence classes of Feynman graphs, defined up to melon and ladder insertions. 
\begin{figure}[htb]
    \centering
    \includegraphics[scale=.8]{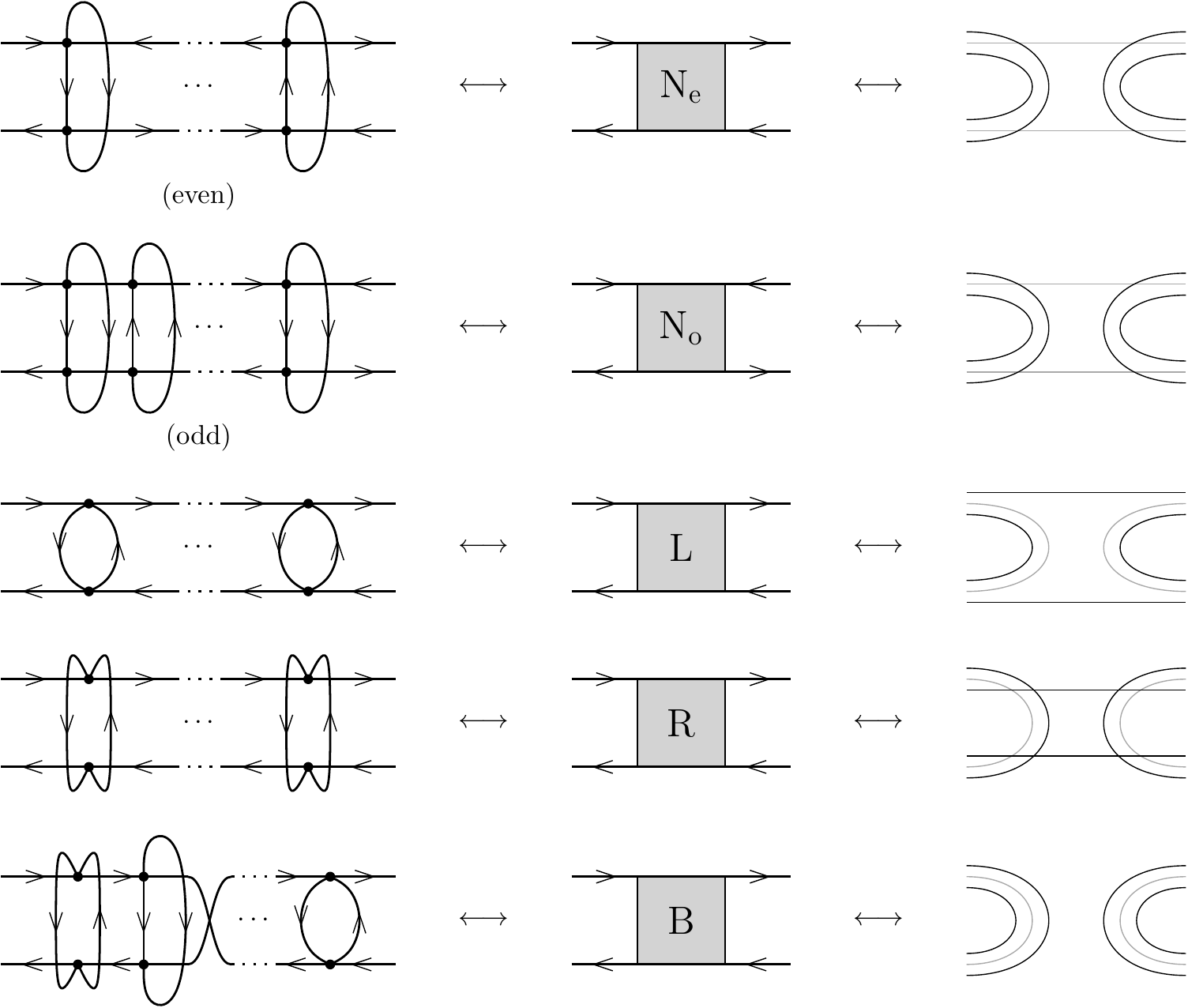}
    \caption{\small Maximal ladders and their associated ladder-vertices. For any type of ladder-vertex (center), we have represented: the parent maximal ladders they represent (left panel), and the structure of their external faces (right panel).}
    \label{fig:0chainvertexes}
\end{figure}

\begin{definition}
Let $G$ be a connected, melon-free and rooted Feynman graph. The \emph{scheme} $S_G$ of $G$ is the graph obtained by replacing any maximal ladder by the ladder-vertex of the corresponding type.
\end{definition}

More generally, we will sometimes consider the larger family of \emph{Feynman graphs with ladder-vertices}, which consists of \emph{all} connected and rooted graphs built out of edges, standard vertices and ladder-vertices (see Figure \ref{fig:0vertexes}). In the class of Feynman graphs with ladder-vertices, an elementary melonic $2$-point subgraph is defined in the same way as for rooted Feynman graphs. Furthermore, the definition of ladder is extended so that rungs may correspond to dipoles or ladder-vertices. By construction, a scheme is a Feynman graph with ladder-vertices. However, the converse is not always true, as made explicit in the following remark.

\begin{figure}[htb]
    \centering
    \includegraphics[scale=.8]{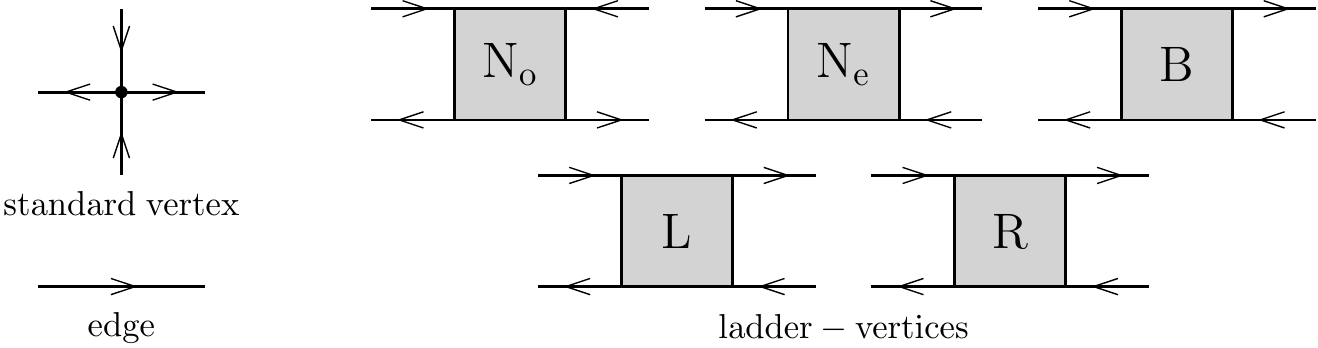}
    \caption{\small Feynman graphs with ladder-vertices and schemes are made out of: edges, standard vertices and ladder-vertices.}
    \label{fig:0vertexes}
\end{figure}

\medskip

\begin{remark}
 A scheme is a Feynman graph with ladder-vertices which \emph{cannot} contain any of the following subgraphs:
\begin{enumerate} 
\item an elementary melonic $2$-point subgraph;
\item an edge that connects the two external legs on the same side of a ladder-vertex;
\item a ladder.
\end{enumerate}
A Feynman graph with ladder-vertices is said to be \emph{melon-free} (resp.\ \emph{ladder-free}) if it obeys the first two (resp.\ the third) conditions.\footnote{A Feynman graph with ladder-vertices that obeys the three conditions is called a \emph{reduced scheme} in \cite{Fusy:2014rba}.} Therefore, a scheme is a melon-free and ladder-free Feynman graph with ladder-vertices. This is illustrated in Figure \ref{fig:schemeNotAllowed}. Remark on the other hand that the subgraph of Figure \ref{fig:schemeAllowed} is allowed in a scheme.
\end{remark}

\medskip

\begin{figure}[htb]
    \centering
    \includegraphics[scale=.8]{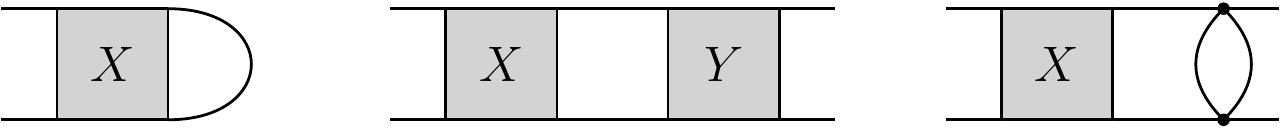}
    \caption{\small Examples of subgraphs which cannot be realized in a scheme, where $X$ and $Y$ are arbitrary ladder-vertices and the edge orientations are left implicit.}
    \label{fig:schemeNotAllowed}
\end{figure}

\begin{figure}[htb]
    \centering
    \includegraphics[scale=.8]{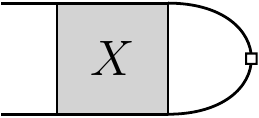}
    \caption{\small A subgraph which may be found in melon-free Feynman graphs with ladder-vertices, where $X$ is a ladder-vertex of any type. Note the crucial presence of the root-vertex.}
    \label{fig:schemeAllowed}
\end{figure}

It is possible and convenient to extend the definitions of genus, degree and grade to the class of Feynman graphs with ladder-vertices. Given $G$ a Feynman graph with ladder-vertices, one can construct an ordinary Feynman graph by replacing each ladder-vertex in $G$ with a ladder of the appropriate type (for instance, a N${}_\mathrm{o}$-vertex is replaced by a N${}_\mathrm{o}$-ladder, etc.). One can then show 
that the genus, grade and degree of the resulting Feynman graph does not depend on the replacement details, that is, on the length of the new ladders or on the structure of the new B-ladders. 
This provides a consistent prescription for the genus, grade or degree of a Feynman graph with ladder-vertices, which is useful to compute these quantities in practice.
However, to shortcut the details of this construction, we will simply rely on the following result \cite{Fusy:2014rba}.

\begin{claim} Let $G_1$ and $G_2$ be two connected, melon-free and rooted Feynman graphs. If $S_{G_1} = S_{G_2}$, then:
\be
g(G_1) = g(G_2) \,, \qquad \omega(G_1) = \omega(G_2) \qquad \mathrm{and} \qquad \ell(G_1) = \ell(G_2)\,. 
\ee
\end{claim}
\noindent In other words, the genus, degree and grade are constant on any equivalence class of graphs defined by a scheme. This consistently extends the definition of these three quantities to schemes, and in a second step, to Feynman graphs with ladder-vertices. Indeed, it is easy to see that a Feynman graph with ladder vertices $G$ can itself be mapped to a unique scheme $S_G$ (obtained by consistent replacement of melon two-point functions by propagators, and ladders by their corresponding ladder-vertices), which allows to define: $g(G) := g(S_G)$, $\omega(G) := \omega(S_G)$ and $\ell(G) := \ell(S_G)$.

Finally, it will be useful to distinguish between two types of Feynman graphs with ladder-vertices\footnote{We recall that the family of Feynman graphs with ladder-vertices contains the family of connected and rooted Feynman graphs.}: whether they are \emph{two-particle reducible} (2PR) or \emph{two-particle irreducible} (2PI). We say that a Feynman graph with ladder-vertices $G$ is 2PR if it contains a \emph{two-edge-cut}, that is, a pair $(e,e')$ of edges in $G$ whose removal disconnects $G$ and such that $e$ and $e'$ are not both incident to the root-vertex. Otherwise, we say that $G$ is 2PI. 

We prove the following result for connected, melon-free and rooted Feynman graphs:

\begin{lemma}\label{lem:schemes2Pi}
Let $G$ be a connected, melon-free and rooted Feynman graph. $G$ is 2PI if and only if $S_G$ is 2PI.
\end{lemma}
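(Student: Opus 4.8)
The plan is to prove both implications by relating two-edge-cuts of $G$ to two-edge-cuts of $S_G$, exploiting the fact that the passage from $G$ to $S_G$ only collapses melonic two-point subgraphs and maximal ladders. The key structural observation is that neither of these operations creates or destroys the property of being 2PI: a melonic two-point subgraph is itself highly two-particle reducible (it has obvious two-edge-cuts isolating it), while a ladder-vertex, although it represents a 2PR piece internally, has exactly the same external leg structure as a single dipole, so its contraction changes neither the global connectivity nor the location of genuine cuts that do not pass through it. I would first record this as a preliminary remark, and then organize the proof around the following dichotomy for a two-edge-cut $(e,e')$: either both edges lie inside a single melonic two-point subgraph or a single maximal ladder, or they do not.

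For the direction ``$S_G$ 2PR $\Rightarrow$ $G$ 2PR'', this is the easy half: given a two-edge-cut $(e,e')$ of $S_G$ not both incident to the root-vertex, each of $e,e'$ is an edge of $S_G$, hence corresponds to an edge of $G$ (edges of $S_G$ are literally a subset of the edges of $G$, since contractions only remove internal edges of melons and ladders). Cutting the corresponding edges in $G$ disconnects $G$, because reinstating melonic subgraphs and ladders on the two sides of the cut keeps the two pieces separate — a melonic insertion or a ladder only ever sits on a single edge and so lives entirely on one side of the cut. One must check the root-vertex condition is preserved, but this is immediate since the root-vertex is never interior to a melon or ladder, so its incident edges in $G$ and $S_G$ are in canonical correspondence. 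Hence $G$ is 2PR.

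For the converse ``$G$ 2PR $\Rightarrow$ $S_G$ 2PR'', suppose $(e,e')$ is a two-edge-cut of $G$, not both incident to the root-vertex. I would argue by cases. If both $e$ and $e'$ are interior edges of a single elementary melonic two-point subgraph or of a single maximal ladder, I claim this is actually impossible for a \emph{minimal} cut, or more precisely that one can slide the cut: cutting two edges inside a melon or a ladder and disconnecting the graph forces one of the two resulting components to be a proper two-point sub-piece of that melon/ladder, and then there is an alternative two-edge-cut using the two external legs of the whole melonic subgraph or ladder, which survives the contraction (it becomes a two-edge-cut at the corresponding propagator or ladder-vertex of $S_G$). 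If instead $e$ and $e'$ are not both swallowed into the same melon/ladder, then in $S_G$ they persist either as genuine edges or get replaced by the two legs on one side of a ladder-vertex; in all sub-cases one exhibits an explicit two-edge-cut of $S_G$ obtained from $(e,e')$ by this replacement, and checks it still disconnects and still avoids the root-vertex constraint (again using that the root-vertex is external to all contracted pieces, cf.\ Figure~\ref{fig:schemeAllowed}).

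The main obstacle I anticipate is the bookkeeping in the case analysis of the converse: carefully enumerating how a two-edge-cut of $G$ can interact with the boundary of a melonic subgraph or a maximal ladder (a cut edge could be a rail edge, a rung edge, an external leg, or an edge entirely outside), and showing in each case that an equivalent cut descends to $S_G$. Claim~\ref{claim:vertex-disjoint} (maximal ladders are vertex-disjoint and uniquely determined) will be essential here to ensure the cases are exhaustive and non-overlapping, and Claim~\ref{cl:melon} plays the analogous role for melonic subgraphs. A secondary subtlety is the special status of the root-vertex in the definition of 2PI (the cut is forbidden only when \emph{both} edges touch the root), which must be tracked through every case but causes no real trouble because the root-vertex never lies in the interior of a contracted subgraph.
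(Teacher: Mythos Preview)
Your overall strategy matches the paper's, but you have overlooked a hypothesis and, as a consequence, made the argument more complicated than it needs to be. The graph $G$ is assumed \emph{melon-free}, so the passage from $G$ to $S_G$ consists solely of replacing maximal ladders by ladder-vertices; there are no melonic two-point subgraphs to collapse, and the entire discussion of melons (including Claim~\ref{cl:melon}) is irrelevant here. Relatedly, your sentence ``a melonic insertion or a ladder only ever sits on a single edge'' is false for ladders: a ladder is a $4$-point subgraph, not a $2$-point one.

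For the converse direction, your proposed case analysis (both cut edges in the same ladder vs.\ not) is workable but misses a simplification that the paper exploits. The paper observes directly that if \emph{one} edge of a two-edge-cut $(e,e')$ lies inside a maximal ladder $\mathcal{L}$, then necessarily \emph{both} do: $e$ must be a rail edge, $e'$ must lie on the opposite rail, and they must sit between the same two consecutive rungs. This is because removing any single interior edge of a ladder (rung or rail) leaves its four external legs mutually connected, so a mixed configuration (one edge inside $\mathcal{L}$, one outside or in a different ladder) cannot be a genuine two-edge-cut. With this observation the case split collapses to exactly two clean cases: (i) neither $e$ nor $e'$ is interior to a maximal ladder, so both survive as edges of $S_G$; (ii) both lie in the same ladder in the configuration just described, and then the two external legs on either side of the corresponding ladder-vertex furnish a two-edge-cut of $S_G$ (melon-freeness of $G$ guarantees at least one of the two resulting $2$-point pieces is nontrivial and not just the root). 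Your ``not both swallowed into the same ladder'' case, where you propose that a cut edge inside a ladder ``gets replaced by the two legs on one side of a ladder-vertex'', is therefore vacuous once this observation is made, and as written that replacement does not quite make sense for a single edge.
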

\begin{proof}
The following implication is immediate: if $S_G$ is 2PR, then $G$ is also 2PR. Conversely, let us assume that $G$ is 2PR. We can then find a pair $(e,e')$ of edges in $G$ forming a two-edge-cut. We distinguish two subcases.
\begin{itemize}

\item[i)] If neither $e$ nor $e'$ are contained in maximal ladders in $G$, then both are realized as edges in $S_G$. They therefore constitute a two-edge-cut in $S_G$.

\item[ii)] If $e$ (or, equivalently, $e'$) is contained in a maximal ladder $\mathcal{L} \in G$, it is then straightforward to realize that: $e$ must lie on one of the two rails of the ladder; and furthermore, $e'$ must lie on the opposite rail. More precisely, $e$ and $e'$ must connect the same two rungs, as shown on the left part of Figure \ref{fig:02PRproof}. Let us call $e_R$ and $e_R'$ (resp. $e_L$ and $e_L'$) the two external legs on the right (resp. left) side of $\mathcal{L}$. The fact that $(e,e')$ is a two-edge-cut in $G$ implies that: in $S_G$, both $(e_R , e_R')$ and $(e_L , e_L')$ are connected by $2$-point subgraphs, as illustrated on the right part of Figure \ref{fig:02PRproof}. At least one of the latter is non-empty (by the melon-free condition) and does not reduce to the root; therefore, at least one of the two pairs of edges constitutes a two-edge-cut in $S_G$.  
\end{itemize}
This concludes the proof.
\end{proof}

\begin{remark}
More generally, the 2PR/2PI property is transitive under replacement of a maximal ladder by a ladder-vertex in the class of melon-free rooted Feynman graphs with ladder-vertices.
\end{remark}

\begin{figure}[htb]
\centering
\includegraphics[scale=.7]{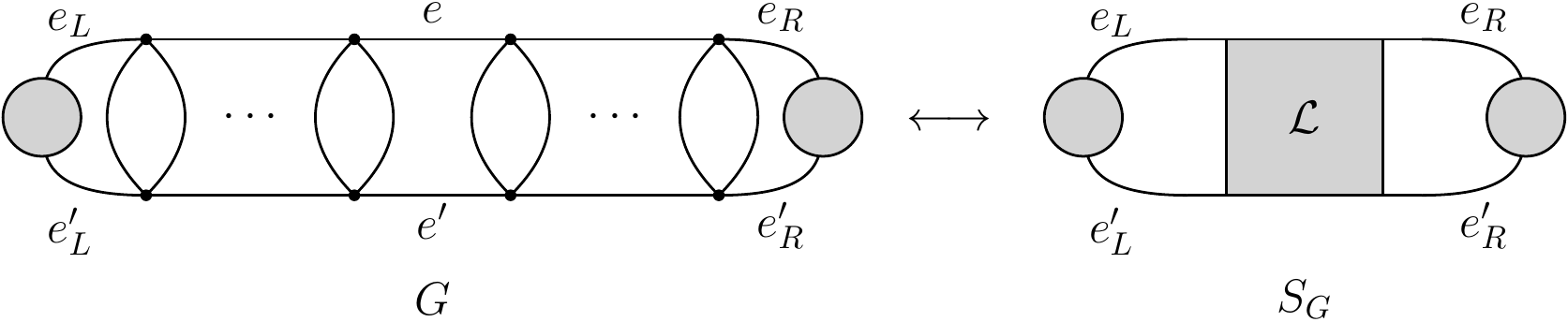}
\caption{\small A two-edge-cut $(e,e')$ inside a maximal ladder of $G$ necessarily translates into a two-edge-cut $(e_L ,e_L')$ or $(e_R ,e_R')$ in the scheme $S_G$, where $2$-point subgraphs are represented as shaded disks.}
\label{fig:02PRproof}
\end{figure}

\subsubsection{Combinatorial moves on Feynman graphs with ladder-vertices}\label{sec:CombMoves}

We now introduce a set of local operations on Feynman graphs with ladder-vertices and we study their effect on the genus, the grade and the degree of these graphs. 

We define a dipole or a ladder-vertex \emph{contraction} as the operation which consists in: 1) removing the dipole or the ladder-vertex, and 2) reconnecting the two half-edges on each side of the dipole or the ladder-vertex. The reverse operation is called a dipole or ladder-vertex \emph{insertion}. This is illustrated in Figure \ref{fig:0movesdipole}.

\begin{figure}[htb]
\centering
\includegraphics[scale=.8]{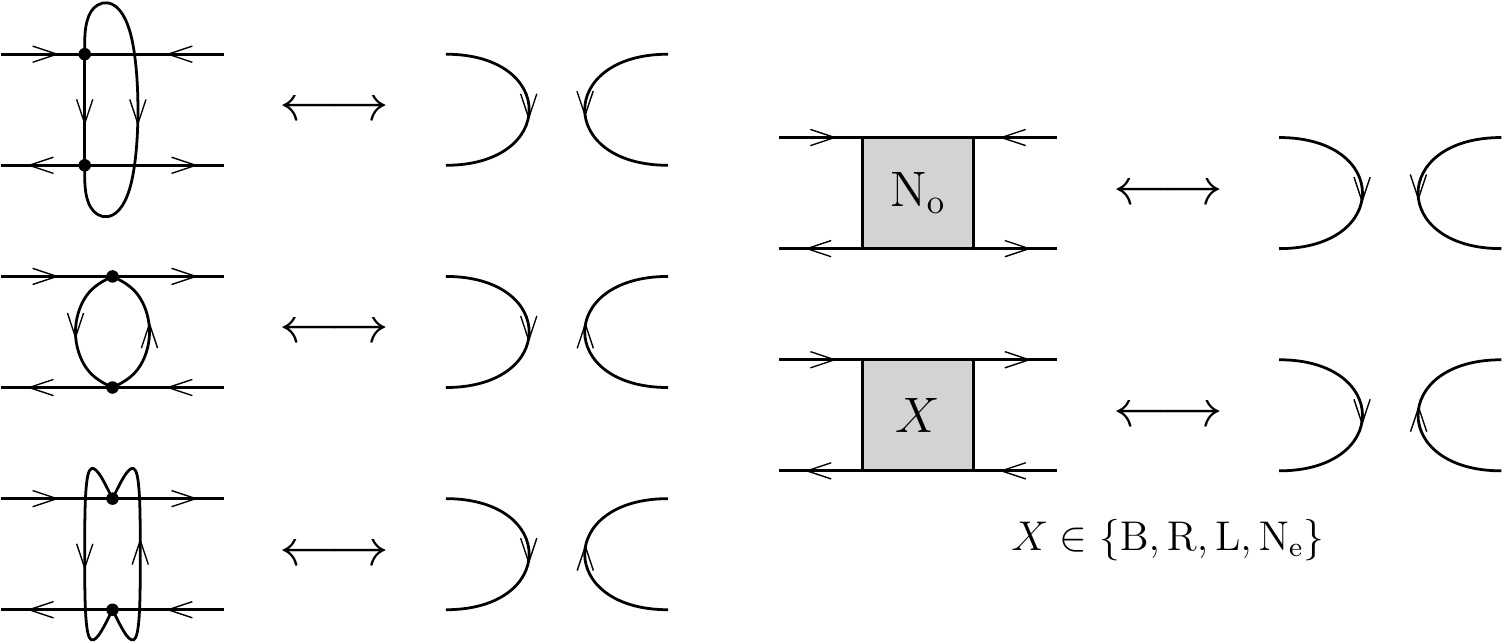}
\caption{\small Contraction/insertion of dipoles (left panel) and ladder-vertices (right panel).}
\label{fig:0movesdipole}
\end{figure}

Note that the contraction of a dipole or a ladder-vertex may disconnect the graph. In addition, it is easy to check that the number of connected components can increase by at most one. A dipole or a ladder-vertex is said to be \emph{separating} if its contraction increases the number of connected components by one, and \emph{non-separating} otherwise. 

\medskip

\begin{remark}
A separating (resp. non-separating) N-dipole is also separating  (resp.\ non-separating) in the topological sense of the term: the cycle it constitutes separates (resp.\ fails to separate) the discretized Riemann surface encoded in the Feynman graph into two disconnected regions. 
\end{remark} 

\medskip

For later convenience, we want to work with connected and rooted graphs. The contraction of a non-separating dipole or ladder-vertex yields a graph in this class. However, we need a prescription so that the contraction of a separating dipole or ladder-vertex yields two connected and rooted graphs instead of one rooted graph with two connected components. Let $G$ be a Feynman graphs with ladder-vertices and suppose that there exists a separating dipole or ladder-vertex in $G$. As a result, $G$ necessarily has a structure similar to that depicted in the left panel of Figure \ref{fig:0separating} (up to a choice of dipole or ladder-vertex type), where $\tilde{G}_1$ and $\tilde{G}_2$ are two connected $2$-point subgraphs. Since $G$ is rooted, the root-vertex is necessarily contained in $\tilde{G}_1$ or $\tilde{G}_2$. Note that it may be adjacent to the separating dipole or ladder-vertex. Suppose that it is contained in $\tilde{G}_1$, as in Figure \ref{fig:0separating} (the case of $\tilde{G}_2$ is similar). Then, when we contract the separating dipole or ladder-vertex, we add a root-vertex in the middle of the edge that reconnects the two half-edges of $\tilde{G}_2$. As a result, the contraction gives rise to two Feynman graphs with ladder-vertices $G_1$ and $G_2$, as depicted in Figure \ref{fig:0separating}, which are both connected and rooted. 

\medskip

\begin{remark}
The above prescription implies that the insertion of a separating dipole or ladder-vertex in between two Feynman graphs with ladder-vertices $G_1$ and $G_2$ must involve the root-vertex of either $G_1$ or $G_2$. This specification is left implicit when it does not play an important role. Otherwise, we specify that the insertion is performed with respect to the root-vertex of $G_1$ or $G_2$. 
\end{remark} 

\medskip

\begin{figure}[htb]
    \centering
    \includegraphics[scale=.8]{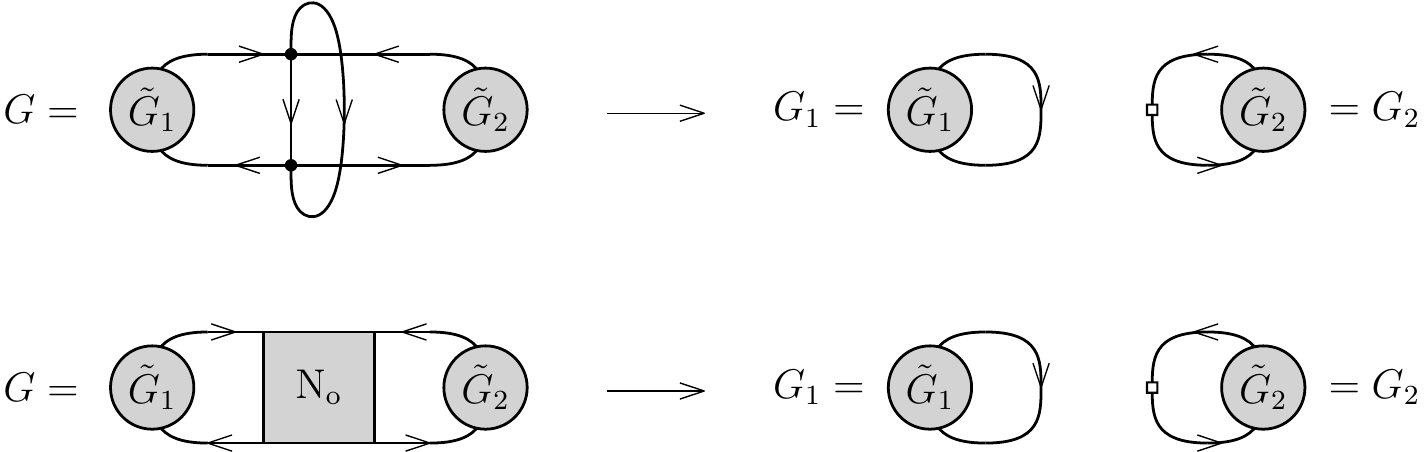}
  \caption{\small Top: a separating dipole and its contraction. Bottom: a separating ladder-vertex and its contraction. In both cases, we have assumed that the root-vertex of $G$ is contained in $\tilde G_1$. In the top figure, the N-dipole also constitutes a separating cycle on the discretized Riemann surface represented by the Feynman graph. Remark here that the opposite operation is referred to as an insertion.}
\label{fig:0separating}
\end{figure}

Let $G$ be a Feynman graph with ladder-vertices and suppose that it contains a dipole. If this dipole is separating, we denote by $G_1$ and $G_2$ the two Feynman graphs with ladder-vertices obtained after contracting this dipole; while if it is non-separating, we denote by $G'$ the resulting Feynman graph with ladder-vertices. Clearly, we have in both cases $v(G_1)+v(G_2)=v(G)-2$ and $v(G')=v(G)-2$, respectively. In order to study the effect of the contraction on the genus, the grade and the degree, one needs to analyze how the total number of faces and $\oD$-loops is affected. We have the following cases:
\begin{itemize}
    \item \textit{Separating N-, L-, or R-dipole}. The contraction deletes one internal $\oD$-loop in the case of a N-dipole and one internal face in the cases of L- and R-dipoles. Furthermore, in the case of a N-dipole, the number of external\footnote{As is customary in the literature, a face is said to be \emph{internal} to a given subgraph (such as e.g. a dipole) if it has only supports on edges of the said subgraph, and \emph{external} otherwise.} faces is unaffected (the external face structure is the same before and after the contraction) and there is one additional external $\oD$-loop which is created due to the separating nature of the dipole. In the case of a L-dipole (resp.\ R-dipole), it is the number of external $\oD$-loops and R-faces (resp.\ L-faces) that  are unaffected while there is one additional external L-face (resp\ R-face) being created. As a result, in the three cases, we have $f(G_1)+f(G_2)=f(G)$ and $\varphi(G_1)+\varphi(G_2)=\varphi(G)$. Using Eqs.\ \eqref{eq:g}, \eqref{eq:ell} and \eqref{eq:omega2}, we thus obtain
    \be \label{eq:contSep}
    g(G_1)+g(G_2)=g(G)\,, \quad \ell(G_1)+\ell(G_2)=\ell(G) \quad \mathrm{and} \quad \omega(G_1)+\omega(G_2)=\omega(G) \, ;
    \ee
    
    \item \textit{Non-separating N-dipole}. The contraction removes one internal $\oD$-loop and the number of external faces is unaffected. Besides, there is one additional external $\oD$-loop which is either created or deleted. Hence, $\varphi(G')=\varphi(G)-1+\sigma$ with $\sigma=\pm1$, and we obtain
    \be \label{eq:contNonSepN}
    g(G')=g(G)-1\,, \quad \ell(G')=\ell(G)-2(\sigma+1) \quad \mathrm{and} \quad \omega(G')=\omega(G)-(\sigma+2) \, ;
    \ee
    
    \item \textit{Non-separating L- or R-dipole}. The contraction removes one internal L- or R-face while the number of external $\oD$-loops and R- or L-faces remains the same, respectively. As for the external L- or R-faces, there is an additional one which is either created or deleted; therefore, in both cases, $f(G')=f(G)-1+\sigma$ with $\sigma=\pm1$. As a result, we have
    \be \label{eq:contNonSepLR}
    g(G')=g(G)-\frac{1}{2}(\sigma+1)\,, \quad \ell(G')=\ell(G)-(\sigma+3) \quad \mathrm{and} \quad \omega(G')=\omega(G)-(\sigma+2) \, .
    \ee
    
\end{itemize}

We now study the effect of contracting a ladder-vertex in a Feynman graph with ladder-vertices $G$. We still denote by $G_1$ and $G_2$ (resp.\ $G'$) the two graphs (resp.\ the graph) obtained after contracting a separating (resp.\ non-separating) ladder-vertex in $G$. As explained in Section \ref{sec:schemes}, one way to analyze how the genus, grade and degree change when contracting a ladder-vertex is to first replace it with a ladder of dipoles of the corresponding type. This ladder can be of arbitrarily length and arbitrary structure in the case of a B-vertex. Then, it is straightforward to see that the contraction of the ladder-vertex is equivalent to the combination of two moves: 1) the contraction of one dipole in the corresponding ladder, whose analysis has been given above; and 2) the removal of up to two melonic $2$-point subgraphs, which may have been generated by the first move due to the presence of other dipoles in the initial ladder. Since the second step preserves the genus, degree and grade (see Section \ref{sec:MelonFree}), it requires no further discussion. We distinguish the following cases: 
\begin{itemize}
    \item \textit{Separating B-, N-, L- or R-vertex}. We first replace the ladder-vertex with a B-, N-, L- or R-ladder, respectively. Because of the separating nature of the ladder-vertex, the dipoles in the corresponding ladder are necessarily separating as well. Hence, we can use the result of Eq.\ \eqref{eq:contSep} for the contraction of a separating dipole. Furthermore, since melonic removals do not change the genus, grade and degree of a Feynman graph, we obtain
    \be \label{eq:LVcontSep}
    g(G_1)+g(G_2)=g(G)\,, \quad \ell(G_1)+\ell(G_2)=\ell(G) \quad \mathrm{and} \quad \omega(G_1)+\omega(G_2)=\omega(G) \, ;
    \ee
    
    \item \textit{Non-separating N-vertex}. We replace the N-vertex with a N-ladder, made out of non-separating N-dipoles. Using the result of Eq.\ \eqref{eq:contNonSepN} and the properties of melonic removal, we have
    \be \label{eq:LVcontNonSepN}
    g(G')=g(G)-1\,, \quad \ell(G')=\ell(G)-2(\sigma+1) \quad \mathrm{and} \quad \omega(G')=\omega(G)-(\sigma+2) \, ,
    \ee
    with $\sigma=\pm1$;
    
    \item \textit{Non-separating L- or R-vertex}. The same reasoning as in the previous case leads, using Eq.\ \eqref{eq:contNonSepLR}, to
    \be \label{eq:LVcontNonSepLR}
    g(G')=g(G)-\frac{1}{2}(\sigma+1)\,, \quad \ell(G')=\ell(G)-(\sigma+3) \quad \mathrm{and} \quad \omega(G')=\omega(G)-(\sigma+2) \, ,
    \ee
    with $\sigma=\pm1$;
    
    \item \textit{Non-separating B-vertex}. We can replace the B-vertex with a B-ladder of arbitrary length and structure, all of its dipoles being necessarily non-separating. Furthermore, because of the structure of a B-ladder, one can check that the contraction of a non-separating N-dipole always yields one additional external $\oD$-loop (case of Eq.\ \eqref{eq:contNonSepN} with $\sigma=+1$); and the contraction of a non-separating L- or R-dipole leads to one additional external face (case of Eq.\ \eqref{eq:contNonSepLR} with $\sigma=+1$). Hence, we obtain in this final case
    \be \label{eq:LVcontNonSepB}
    g(G')=g(G)-1\,, \quad \ell(G')=\ell(G)-4 \quad \mathrm{and} \quad \omega(G')=\omega(G)-3 \, .
    \ee
    
\end{itemize}

\medskip

\begin{remark}
The above discussion on the contraction of a ladder-vertex in a Feynman graph with ladder-vertices naturally extends to the contraction of a ladder, not necessarily maximal, in a connected and rooted Feynman graph.
\end{remark} 

\medskip

Finally, we introduce yet another local operation that will be useful in order to analyze 2PR Feynman graphs, in the same spirit as in \cite{Bonzom:2019yik}. Let $G$ be a Feynman graph with ladder-vertices and suppose that it is 2PR. $G$ therefore contains a two-edge-cut $(e,e')$ and has the structure depicted on the left of Figure \ref{fig:flipOp}, where $\tilde G_1$ and $\tilde G_2$ are, by definition, two connected $2$-point subgraphs that are non-empty and distinct from the root-vertex alone. We define a \emph{flip} on $(e,e')$ as the operation that consists in: 1) cutting the two edges $e$ and $e'$, and 2) reconnecting the four half-edges two by two, as shown on the right side of Figure \ref{fig:flipOp}. A flip necessarily disconnects $G$ into two connected components. As we discussed for separating dipoles and ladder-vertices, we add a root-vertex in the middle of the reconnected edge on the side of $\tilde G_2$ (resp.\ $\tilde G_1$) if the root-vertex of $G$ is contained within $\tilde G_1$ (resp.\ $\tilde G_2$). By doing so, a flip generates two Feynman graphs with ladder-vertices $G_1$ and $G_2$, as indicated in Figure \ref{fig:flipOp}. Besides, we call the reverse operation of a flip a \emph{two-edge-connection insertion}, which must be performed with respect to the root-vertex of either $G_1$ or $G_2$.

\begin{figure}[htb]
\centering
\includegraphics[scale=1.2]{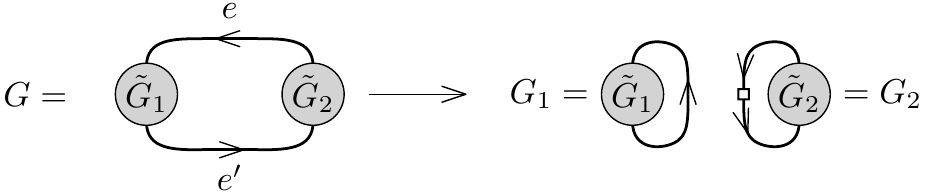}
\caption{\small Flip operation on a 2PR Feynman graph with ladder-vertices, where we assumed that the root-vertex of $G$ is contained in $\tilde G_1$.}
\label{fig:flipOp}
\end{figure}

The following result summarizes the effect of a flip operation on the genus, grade and degree of the graphs:
\begin{lemma}\label{lem:2PR} Let $G$ be a Feynman graph with ladder-vertices and suppose that it contains a two-edge-cut $(e,e')$. Then, the flip operation on $(e,e')$ generates two Feynman graphs with ladder-vertices $G_1$ and $G_2$ satisfying
\be \label{eq:flipOp}
g(G_1)+g(G_2)=g(G)\,, \quad \ell(G_1)+\ell(G_2)=\ell(G) \quad \mathrm{and} \quad \omega(G_1)+\omega(G_2)=\omega(G) \, .
\ee
\end{lemma}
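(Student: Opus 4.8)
The plan is to prove the three identities \eqref{eq:flipOp} by reducing to ordinary Feynman graphs and then pushing the elementary data $(v,f,\varphi)$ through the flip, feeding the outcome into the Euler-type identities \eqref{eq:g}, \eqref{eq:ell} and \eqref{eq:omega2}. First I would replace every ladder-vertex of $G$ by a ladder of the corresponding type, obtaining an ordinary connected rooted Feynman graph $\hat G$. Since $e$ and $e'$ are genuine edges of $G$, they are still edges of $\hat G$ and still form a two-edge-cut, and performing the flip on $\hat G$ along $(e,e')$ produces precisely the ladder-expansions $\hat G_1,\hat G_2$ of the graphs $G_1,G_2$ obtained by flipping $G$. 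By the claim recalled in Section~\ref{sec:schemes}, the genus, grade and degree are insensitive to such ladder-expansions, so it is enough to establish \eqref{eq:flipOp} when $G$, $G_1$, $G_2$ are ordinary Feynman graphs, which I assume from now on.

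I would then isolate two structural observations. \emph{(a) Orientation compatibility.} Each standard vertex carries two incoming and two outgoing half-edges and the degree-$2$ root-vertex carries one of each; hence, summing $(\mathrm{out})-(\mathrm{in})$ over the vertices of either side of the cut gives $0$, so the two cut-edges run in opposite directions across the cut. Consequently the four loose half-edges pair up as in Figure~\ref{fig:flipOp} into two honest directed edges $f_1$ and $f_2$ without any twist, so $G_1$ and $G_2$ are again connected rooted Feynman graphs (with ladder-vertices, before the reduction), and — no standard vertex being created or destroyed, and the extra root-vertex being bivalent and uncounted — $v(G_1)+v(G_2)=v(G)$. \emph{(b) Loop splitting.} Each edge carries exactly one left, one right and one internal strand, and the two strand-ends of a cut-edge lie on opposite sides of the cut; therefore, for each of the three strand types, the loop of $G$ running through $e$ is forced to come back through $e'$, and it is the \emph{unique} loop of that type meeting the cut. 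The flip severs this transversal loop at $e$ and $e'$, and its $\tilde G_1$-arc and $\tilde G_2$-arc close up independently — by (a) the reconnection matches left to left, right to right and internal to internal, so the transversal loop becomes exactly two loops, one in $G_1$ and one in $G_2$ — while every other loop is untouched (the inserted bivalent root-vertex is transparent to all strands). Hence $f(G_1)+f(G_2)=f(G)+2$ and $\varphi(G_1)+\varphi(G_2)=\varphi(G)+1$.

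To finish, substitute into \eqref{eq:g}, written as $g=1+\tfrac12 v-\tfrac12 f$, and into \eqref{eq:ell}, written as $\tfrac12\ell=2+v-\tfrac12 f-\varphi$: the shifts $+2$ in $f$ and $+1$ in $\varphi$ exactly compensate the doubled additive constants, yielding $g(G_1)+g(G_2)=g(G)$ and $\ell(G_1)+\ell(G_2)=\ell(G)$, after which $\omega(G_1)+\omega(G_2)=\omega(G)$ follows from $\omega=g+\ell/2$ (or directly from \eqref{eq:omega2}). I expect the only real obstacle to be the loop-splitting step (b): one must verify that the severed transversal loop of each type reconnects into two loops rather than folding back into one, and this is precisely the place where the opposite orientations of the two cut-edges established in (a) — not a mere drawing convention — are essential; everything else is bookkeeping.
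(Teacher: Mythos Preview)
Your proposal is correct and follows essentially the same approach as the paper: track the changes in $v$, $f$ and $\varphi$ under the flip and plug into \eqref{eq:g}, \eqref{eq:ell}, \eqref{eq:omega2}. The paper's proof simply asserts ``by direct inspection'' that the vertex count is unchanged while $f$ increases by two (one L-face, one R-face) and $\varphi$ by one; you supply the supporting details the paper omits --- the preliminary reduction of ladder-vertices to ladders, the orientation argument (a) ensuring the reconnection is consistent, and the explicit reason (b) why each transversal strand-loop splits into exactly two --- but the underlying argument is the same.
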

\begin{proof} The result follows from a direct inspection of the flip operation in Figure \ref{fig:flipOp}. Indeed, the total number of vertices remains the same whereas the total number of faces increases by two (one additional L-face and one additional R-face) and the total number of $\oD$-loops increases by one. Eq.\ \eqref{eq:flipOp} then follows from Eqs.\ \eqref{eq:g}, \eqref{eq:ell} and \eqref{eq:omega2} applied to $G$, $G_1$ and $G_2$.
\end{proof}


\subsection{Melon-free Feynman graphs with vanishing grade}
\label{sec:FGVanishingGrade}

We want to characterize the connected, melon-free and rooted Feynman graphs with $\ell=0$ and the corresponding schemes. A first trivial observation is that they always have an even number of (standard) vertices, as made explicit in Eq.\ \eqref{eq:ell2}. The following two lemmas provide a more detailed characterization.

\begin{lemma}\label{lem:N_dip}
Let $G$ be a connected, melon-free and rooted Feynman graph and $S_G$ its scheme. If $\ell(G) = 0$, then there exists a N-dipole in $G$. 
In particular, there exists a N-dipole, a N-vertex or a B-vertex in $S_G$.
\end{lemma}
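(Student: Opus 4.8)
The plan is to produce an $\oD$-loop (straight face) of length exactly two that runs along two \emph{distinct} standard vertices: by definition this is an N-dipole, and the statement about $S_G$ will follow almost immediately. So the real content is the existence of such a short straight face, which I would extract by a careful counting of $\oD$-loop lengths once the hypothesis $\ell(G)=0$ is plugged in.

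First I would set up bookkeeping on $\oD$-loop lengths. From the tetrahedral vertex, each standard vertex carries exactly two internal corners, joining the two "bar" (resp.\ two "unbar") half-edges of the $X^\dagger$ (resp.\ $X$) fields; the root-vertex carries a single internal corner, joining a bar to an unbar half-edge; and every edge is a $\langle X X^\dagger\rangle$ propagator, joining a bar to an unbar half-edge. Labelling half-edges by bar/unbar and walking around an $\oD$-loop, edges and the root-corner flip the label while standard-vertex corners preserve it; since the total number of flips must be even, the unique $\oD$-loop through the root-vertex has odd length, and every other $\oD$-loop has even length, hence length $\geq 2$. Counting corners gives $\sum_i |L_i| = 2v+1$, where $v=v(G)$. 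We may assume $v\geq 2$: by Eq.~\eqref{eq:ell2}, $\ell(G)=0$ forces $v$ even and $\varphi = 1+g+v/2$, while $v=0$ is just the rooted cycle-graph (the base case of the recursion to come).

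Next I would use $\ell(G)=0$ quantitatively. The root-loop $L_0$ has length $\geq 3$: length one would mean $L_0$ is a single self-loop at the root-vertex, i.e.\ $G$ is the cycle-graph. If all the remaining $\varphi-1 = g+v/2 \geq 1$ loops had length $\geq 4$, then $2v+1 = |L_0| + \sum_{i\neq 0}|L_i| \geq 3 + 4(g+v/2) = 3+4g+2v$, which is impossible; so some $\oD$-loop $L$ has length exactly two and avoids the root-vertex. Its two corners sit at standard vertices $x_1,x_2$. If $x_1=x_2$, then $L$ uses both internal corners of that vertex and its two edges are both self-loops there, so this vertex together with those two edges forms a connected component with all four half-edges internal --- contradicting connectedness and $v\geq 2$. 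Hence $x_1\neq x_2$, and the two edges of $L$ join $x_1$ to $x_2$ while enclosing the straight face $L$ of length two: precisely an N-dipole. For the "in particular" clause: an N-dipole of $G$ is either not contained in any ladder, so it appears unchanged as an N-dipole in $S_G$, or it is a rung of a maximal ladder, which then has a rung of type N and is therefore an N- or a B-ladder, producing an N-vertex or a B-vertex in $S_G$.

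The main obstacle, and the reason a naive averaging argument fails, is the root-vertex: it contributes an \emph{odd} number of internal corners, so one $\oD$-loop has odd length and the crude bound "average length $<4$" only yields a loop of length $\leq 3$, not $\leq 2$. The fix is exactly the refined accounting above --- isolate $L_0$, bound $|L_0|\geq 3$ from below, and run the length inequality only over the remaining even-length loops --- together with the small but essential use of connectedness to discard the degenerate case where the length-two loop closes up on a single vertex. I would also flag the harmless exclusion of the $v=0$ cycle-graph, for which the conclusion is vacuously false; this is consistent with how the lemma is used downstream.
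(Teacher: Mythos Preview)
Your argument is correct and follows the same counting strategy as the paper: decompose the $\oD$-loops by length and use $\ell(G)=0$ via Eq.~\eqref{eq:ell2} to force a straight face of length two, which is then an N-dipole. The paper measures length by standard-vertex corners (so every loop has even length and one obtains $\varphi_2 \geq 2g+2>0$ directly), whereas you count edges and isolate the root-loop through a parity argument; your handling of the degenerate cases (the cycle graph and $x_1=x_2$) is more explicit than the paper's, but the core idea is identical.
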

\begin{proof}
It is convenient to decompose the $\oD$-loops in $G$ with respect to their length, that is, the number of vertices they pass through, counted with multiplicity. Due to the structure of the Feynman graphs, it is straightforward to see that any $\oD$-loop has even length. We thus denote by $\varphi_{2n}$ the number of $\oD$-loops of length $2n$ ($n\in\mathbb{N}^*$) in $G$ and we write 
\be
\varphi=\sum_{n\in\mathbb{N}^*} \varphi_{2n} \,.
\ee
In addition, due to the $\oD$-structure of the Feynman graph vertices, we have the following constraint:
\be
\sum_{n\in\mathbb{N}^*} 2n\varphi_{2n}=2v \,.
\ee
Plugging these two equations in the expression \eqref{eq:ell2} for the grade, we deduce
\be
\ell(G) = 2 + 2g(G) + \frac{1}{2} \sum_{n \in \mathbb{N}^*} (2 n - 4) \varphi_{2n} \geq 2 + 2g(G) - \varphi_2\, .
\ee
By assumption, $\ell(G) = 0$ so that $\varphi_2 \geq 2g(G) + 2 > 0$, which means that there is at least one $\oD$-loop of length two in $G$. This $\oD$-loop is necessarily contained within a N-dipole in $G$. If this dipole is in a maximal ladder of $G$, the latter translates into a N-vertex or a B-vertex in $S_G$.
\end{proof}

We call \emph{connecting} a N-dipole or a N-vertex which `connects' two distinct external $\oD$-loops, that merge into a single $\oD$-loop upon contraction. 
More precisely, we have one of the subgraphs represented in Figure \ref{fig:connecting}, with two distinct $\oD$-loops running through the two rails of the N-dipole or N-vertex.

\begin{figure}[htb]
    \centering
    \includegraphics[scale=.8]{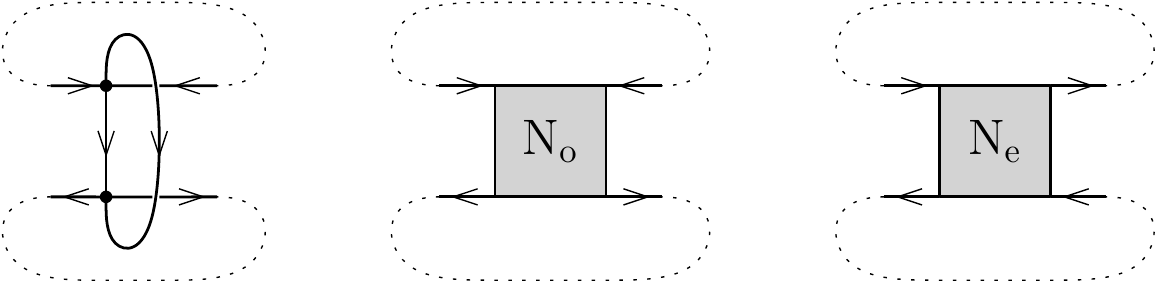}
    \caption{\small Connecting N-dipole and N-vertices. The dotted lines illustrate the external structure of the $\oD$-loops: two distinct $\oD$-loops run through the rails of the dipole or the ladder-vertex. Upon contraction of such a connecting N-dipole or N-vertex, the two distinct $\oD$-loops merge into a single $\oD$-loop.}
    \label{fig:connecting}
\end{figure}

\medskip

\begin{remark}A connecting N-dipole or N-vertex is automatically non-separating.
\end{remark}

\medskip

\begin{lemma}\label{lem:sep-conn}
Let $G$ be a connected, melon-free and rooted Feynman graph and $S_G$ its scheme. If $\ell(G) = 0$, any N-dipole in $G$ is separating or connecting; any other dipole is separating.  
Furthermore, in $S_G$: any N-dipole or N-vertex is separating or connecting; any other dipole or ladder-vertex is separating.
\end{lemma}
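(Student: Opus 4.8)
The plan is to prove the statement by combining the contraction formulas of Section~\ref{sec:CombMoves} with the non-negativity of the grade. The key observation is Eq.~\eqref{eq:ell3}: the grade is always non-negative, and moreover $\ell = 0$ forces every sub-part of the graph to have vanishing grade as well whenever the grade is additive under the relevant move. First I would dispose of the non-N dipoles: suppose $d$ is an L- or R-dipole in $G$. If $d$ is non-separating, then by Eq.~\eqref{eq:contNonSepLR} we have $\ell(G') = \ell(G) - (\sigma + 3) = -(\sigma+3) \leq -2 < 0$ for either sign $\sigma = \pm 1$, contradicting non-negativity of the grade of the (connected, rooted) graph $G'$. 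Hence any L- or R-dipole in $G$ must be separating. The same argument applied to Eq.~\eqref{eq:LVcontNonSepLR} rules out non-separating L- and R-\emph{vertices} in $S_G$, and Eq.~\eqref{eq:LVcontNonSepB} (which gives $\ell(G') = \ell(G) - 4 = -4 < 0$) rules out non-separating B-vertices; so any L-, R-, or B-vertex in $S_G$ is separating, as is any L- or R-dipole.

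Next I would treat N-dipoles. Let $d$ be a non-separating N-dipole in $G$. By Eq.~\eqref{eq:contNonSepN}, $\ell(G') = \ell(G) - 2(\sigma+1)$. If $\sigma = +1$ this gives $\ell(G') = -4 < 0$, impossible; hence $\sigma = -1$, which by the case analysis preceding Eq.~\eqref{eq:contNonSepN} means that contracting $d$ \emph{deletes} an external $\oD$-loop. I claim this is exactly the connecting condition: deleting an external $\oD$-loop upon contraction happens precisely when two distinct external $\oD$-loops run through the two rails of $d$ and merge into one (the count of external $\oD$-loops going from, say, $k$ to $k-1$). This is the content of Figure~\ref{fig:connecting}, and the point to verify carefully is that the only way a non-separating N-dipole can decrease $\varphi$ rather than increase it is via this merging of two distinct loops — one should check that the alternative (a single $\oD$-loop threading both rails) necessarily increases $\varphi$ by splitting, and that no other local configuration of straight strands is possible at a dipole. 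Combined with the remark that a connecting N-dipole is automatically non-separating, we conclude that an N-dipole in $G$ is either separating or connecting, with no other option. The identical argument using Eq.~\eqref{eq:LVcontNonSepN} shows a non-separating N-vertex in $S_G$ must have $\sigma = -1$, hence is connecting.

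Finally, I would assemble the statement for $S_G$. By Lemma~\ref{lem:N_dip} the scheme contains an N-dipole, N-vertex, or B-vertex, so the cases above are non-vacuous; the grade of $S_G$ equals that of $G$ (by the claim that genus, grade, and degree are constant on a scheme's equivalence class), so $\ell(S_G) = 0$ and all the same contraction inequalities apply verbatim to dipoles and ladder-vertices of $S_G$. Collecting: in $S_G$, every L- or R-dipole and every L-, R-, or B-vertex is separating, and every N-dipole or N-vertex is separating or connecting — which is the assertion. The main obstacle I anticipate is the combinatorial step identifying "$\sigma = -1$ for a non-separating N-dipole" with "connecting": this requires a careful local analysis of how the straight strands of the two rails close up into external $\oD$-loops before and after contraction, and in particular ruling out the possibility that a non-separating N-dipole with a single $\oD$-loop through both rails could somehow still decrease $\varphi$. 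Everything else is a mechanical application of the formulas already established.
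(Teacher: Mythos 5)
Your proposal is correct and follows essentially the same route as the paper: contraction formulas plus non-negativity of the grade rule out non-separating L-, R-, and B-structures, and force $\sigma=-1$ for non-separating N-dipoles/N-vertices, which is then identified with the connecting condition. The local strand analysis you flag as the remaining obstacle is exactly the step the paper itself dispatches with "one can check," so nothing is missing relative to the published argument.
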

\begin{proof}
This follows from the variation of the grade under a dipole or a ladder-vertex contraction (see Section \ref{sec:CombMoves}) and from the non-negativity of the grade. Consider a N-dipole in $G$ and suppose it is non-separating. Performing a dipole contraction yields another connected and rooted Feynman graph $G'$ such that, by Eq.\ \eqref{eq:contNonSepN}, $\ell(G')=\ell(G)-2(\sigma+1)$ with $\sigma=\pm1$. Since $\ell(G)=0$ and $\ell\geq0$, it implies $\sigma=-1$. One can check that this can only occur if the N-dipole is connecting. 

Consider now a L- or R-dipole in $G$ and suppose it is non-separating. By Eq.\ \eqref{eq:contNonSepLR}, contracting this dipole yields another connected and rooted Feynman graph $G'$ such that $\ell(G')<\ell(G)$, which is impossible because $\ell(G) = 0$ and $\ell \geq 0$.

The same reasoning applies for any dipole or ladder-vertex in $S_G$ using a dipole or a ladder-vertex contraction and Eqs.\ \eqref{eq:LVcontNonSepN}-\eqref{eq:LVcontNonSepB}.
\end{proof}

At this point, it is clear that we can manipulate $\ell = 0$ connected, melon-free and rooted Feynman graphs\footnote{To be more succinct, we assume in the following that the Feynman graphs are always connected and rooted, unless otherwise stated.} and their schemes by successive insertions or contractions of: 1) connecting N-dipoles or N-vertices; and 2) separating dipoles or ladder-vertices. We will in fact prove that theses graphs can be generated inductively at arbitrary genus, and in an entirely constructive manner, from the ones with genus one.\footnote{Genus one is peculiar because our Feynman graphs are rooted. Had we worked with non-rooted Feynman graphs, we could have started our inductive construction at genus $0$. Treating $g=1$ separately (in Proposition \ref{propo:g1}) will also allow us to introduce the ingredients of the general proof of Theorem \ref{thm:induction} in a progressive manner.}

To prepare the ground for this construction, it is convenient to enumerate the situations in which the contraction of a dipole or a ladder in a $\ell=0$ melon-free Feynman graph generates a melonic $2$-point subgraph. Let $G$ be such a graph and suppose that it contains a dipole or ladder $X$. If we assume that $X$ is non-separating, it is necessarily a connecting N-dipole or N-ladder by Lemma \ref{lem:sep-conn}. Contracting $X$ yields another $\ell=0$ Feynman graph $G'$, which is \emph{not} melon-free in exactly three situations:
\begin{enumerate}
    \item $X$ is connected on one side to another dipole or ladder $Y$ (of arbitrary type), as illustrated in Figure \ref{fig:create-melon1}. In particular, the root-vertex of $G$ is not in between $X$ and $Y$; 
 
    \item $X$ is inserted in between two edges of an elementary $2$-point melon, as illustrated in Figure \ref{fig:create-melon2}. Note that the choice of pair of edges on which $X$ is inserted, and therefore their orientation, is fixed by the requirement that $X$ is connecting. As a result, $X$ must be a N-dipole or a N$\mathrm{_o}$-ladder;

    \item $X$ is forming a $2$-point subgraph on one side of a dipole or ladder $Y$ (of arbitrary type), as illustrated in Figure \ref{fig:create-melon3}. Note that the orientation of the edges is again fixed by the requirement that $X$ is connecting.  As a result, $X$ must be a N$\mathrm{_e}$-ladder. Furthermore, the root-vertex of $G$ is not incident to $X$.
\end{enumerate} 

If we assume instead that $X$ is separating, it can be of any type by Lemma \ref{lem:sep-conn}, and its contraction yields two $\ell=0$ Feynman graphs $G_1$ and $G_2$. One can check that the only configuration which may generate a melonic $2$-point subgraph in $G_1$ or $G_2$ is the one illustrated in Figure \ref{fig:create-melon1}, where the root-vertex of $G$ is not: 1) in between $X$ and $Y$; 2) on the side of $X$ opposite to $Y$.

\begin{figure}[htb]
     \centering
     \begin{subfigure}[b]{.3\textwidth}
         \centering
         \includegraphics[scale=.4]{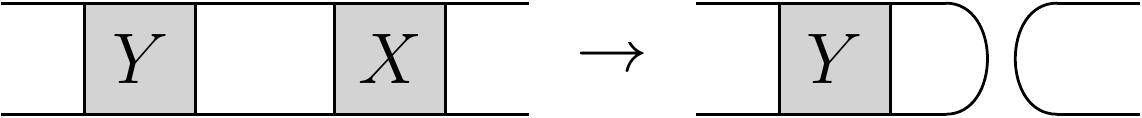}
         \caption{$X$ is connecting or separating.}\label{fig:create-melon1}
     \end{subfigure}
     \begin{subfigure}[b]{0.3\textwidth}
         \centering
         \includegraphics[scale=.4]{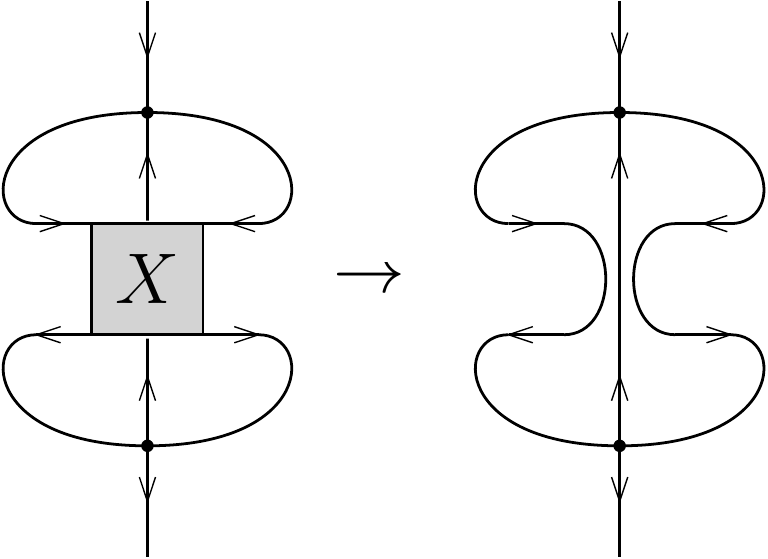}
         \caption{$X$ is connecting.}\label{fig:create-melon2}
     \end{subfigure}     %
     \begin{subfigure}[b]{0.3\textwidth}
         \centering
         \includegraphics[scale=.4]{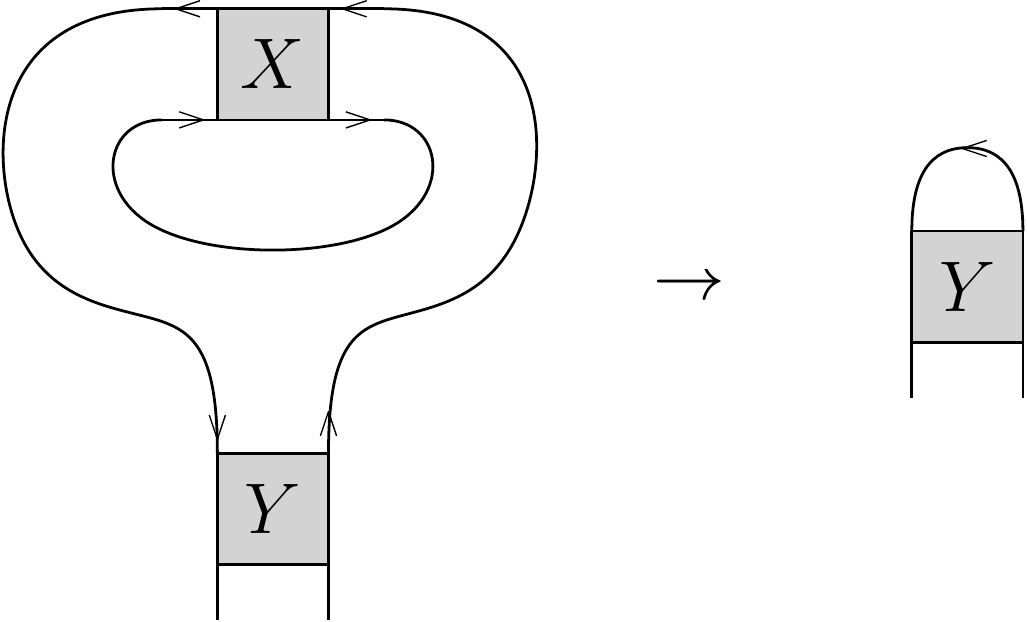}
         \caption{$X$ is connecting.}\label{fig:create-melon3}
     \end{subfigure}        \caption{\small Configurations in which the contraction of a dipole or ladder $X$ in a $\ell=0$ melon-free Feynman graph, assumed to be connecting or separating, generates a melonic $2$-point subgraph (where $Y$ is itself a dipole or a ladder).}
        \label{fig:generate-melon}
\end{figure}

\medskip 

We now study in detail the structure of the $\ell=0$ melon-free Feynman graphs of arbitrary genus. Recall that at genus zero, there is a single $\ell=0$ melon-free Feynman graph, namely the rooted cycle graph. The following lemma provides all the $\ell=0$ melon-free Feynman graphs of genus one by specifying their schemes. 

\begin{proposition}\label{propo:g1}
There are two 2PI schemes of genus one:

\begin{center}
\includegraphics[scale=.55]{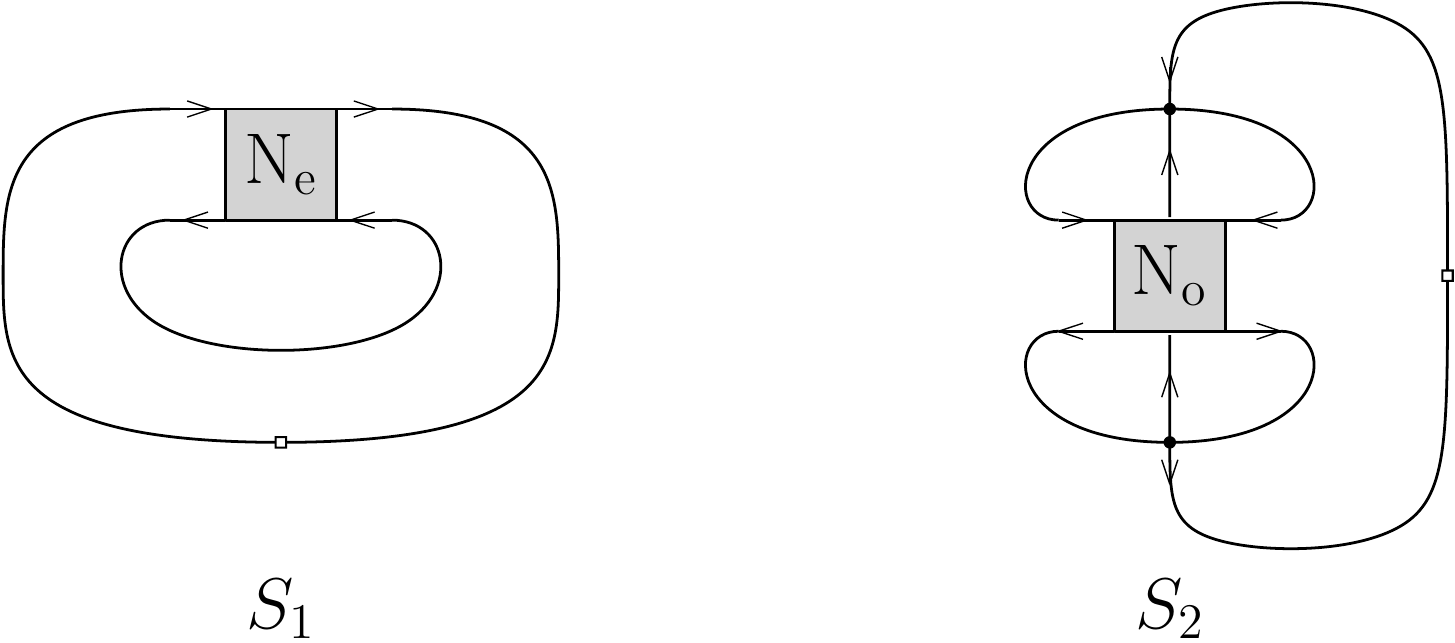}
\end{center}

\medskip

\noindent There are $16$ 2PR schemes of genus one:
\begin{center}
\includegraphics[scale=.55]{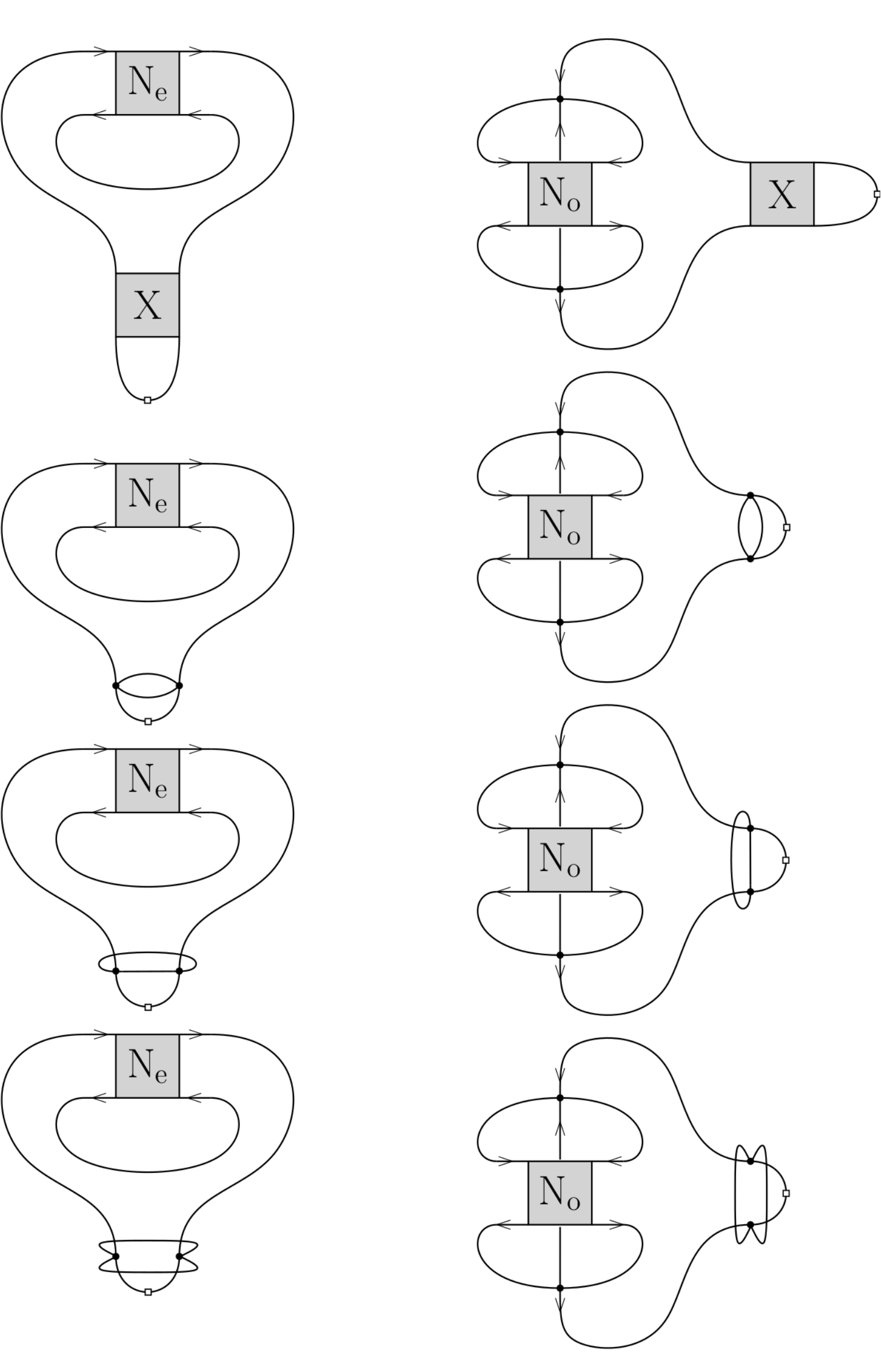}
\end{center}
where $\mathrm{X} \in \{ \mathrm{N_e}, \mathrm{N_o}, \mathrm{L}, \mathrm{R}, \mathrm{B}\}$.
\end{proposition}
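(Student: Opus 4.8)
The plan is to classify the genus-one melon-free rooted schemes directly, using the combinatorial structure established in Section~\ref{sec:recursion}. Since a genus-one scheme $S$ with $\ell=0$ has degree $\omega = g + \ell/2 = 1$, Lemma~\ref{lem:N_dip} guarantees that $S$ contains a N-dipole, a N-vertex or a B-vertex, and Lemma~\ref{lem:sep-conn} tells us that every such object is either separating or connecting (and every other dipole/ladder-vertex is separating). First I would treat the \emph{2PI} case: if $S$ is 2PI there are no two-edge-cuts avoiding the root, so no separating dipole or ladder-vertex can occur (its contraction would exhibit a two-edge-cut, and by Lemma~\ref{lem:schemes2Pi} 2PI is preserved), hence the distinguished N-object of Lemma~\ref{lem:N_dip} must be \emph{connecting}. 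Contracting it lowers the genus to zero by Eq.~\eqref{eq:LVcontNonSepN}/\eqref{eq:contNonSepN} with $\sigma=-1$, landing on a $g=0,\ell=0$ object, which (after removing any melons generated) is the rooted cycle graph. So every genus-one 2PI scheme is obtained by inserting a connecting N-dipole or N-vertex into the cycle graph; I would then enumerate the inequivalent ways to do this, checking which insertions actually yield melon-free, ladder-free, 2PI graphs. The constraint that no ladder survive forces the inserted N-object to be a single N-dipole rather than a longer N-ladder-vertex in most positions, and careful bookkeeping of the external $\oD$-loop structure (as in Figure~\ref{fig:connecting}) together with the forbidden configurations of Figure~\ref{fig:generate-melon} should leave exactly the two schemes displayed.

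Next I would handle the \emph{2PR} case. Here $S$ has a two-edge-cut $(e,e')$ not incident to the root. Applying the flip operation of Lemma~\ref{lem:2PR}, we get $S = (S_1, \cdot)\#$-type decomposition with $g(S_1)+g(S_2)=1$ and $\ell(S_1)+\ell(S_2)=0$, so one piece has genus one and grade zero and the other has genus zero and grade zero, i.e.\ is (the core of) the rooted cycle graph — meaning $(e,e')$ is really the two rails of a ladder-vertex whose contraction is separating. Equivalently: a 2PR genus-one scheme is built by inserting one separating ladder-vertex (of some type $\mathrm{X}\in\{\mathrm{N_e},\mathrm{N_o},\mathrm{L},\mathrm{R},\mathrm{B}\}$) into a genus-one 2PI scheme, with the separating ladder-vertex attached on one side to the root/cycle structure. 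Since there are two genus-one 2PI schemes and the ladder-vertex can be attached in the inequivalent available slots, I expect a finite count; the claimed answer $16 = $ (number of placements) $\times\, 5$ should come out once the placements are enumerated and the melon-free/ladder-free conditions (Figures~\ref{fig:schemeNotAllowed}, \ref{fig:generate-melon}) are imposed to discard redundant or illegal insertions.

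The main obstacle I anticipate is the bookkeeping of \emph{equivalences and forbidden configurations} rather than any deep step: one must be careful that two superficially different insertions do not give the same scheme, that the root-vertex position is respected (the separating-insertion prescription of the Remark after Figure~\ref{fig:0separating} matters), and that no inserted object accidentally creates a melon, a ladder, or a rung-closing edge — all of which would violate the definition of a scheme. The $\oD$-loop structure at connecting N-objects is the subtle part: one has to verify that after contracting the connecting N-object the resulting $g=0$ graph is genuinely the cycle graph and not something with extra structure, and conversely that re-inserting produces a legitimate $\ell=0$ scheme; this is exactly where the length-$2$ $\oD$-loop counting of Lemma~\ref{lem:N_dip}'s proof is used. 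Modulo this careful case analysis — which I would organize as (i) 2PI via a single connecting N-insertion into the cycle graph, (ii) 2PR via a single separating ladder-vertex insertion into a 2PI genus-one scheme — the statement follows, and the explicit pictures in the proposition are just the end product of the enumeration.
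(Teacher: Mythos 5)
Your overall strategy coincides with the paper's: for the 2PI case, contract the (maximal extension of a) connecting N-dipole guaranteed by Lemmas~\ref{lem:N_dip} and~\ref{lem:sep-conn} and analyze what genus-zero object you land on; for the 2PR case, split off a separating structure adjacent to the root and reduce to the 2PI classification. The 2PI half of your plan is sound, with the caveat that the step you defer to ``careful bookkeeping'' is where all the content sits: the paper's argument is that if the contraction creates a melon, one must be in configuration~\ref{fig:create-melon2} (the other two being excluded by maximality and by 2PI), and the 2PI condition then forces the two resulting $2$-point subgraphs to close onto the root-vertex, which is precisely what produces $S_2$ (or degenerates back to $S_1$).

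The 2PR half has two genuine problems. First, you misuse the flip of Lemma~\ref{lem:2PR}. In the paper the flip is applied only under the auxiliary hypothesis that $G$ contains \emph{no} separating dipole; under that hypothesis the flipped pieces are melon-free, so the genus-zero piece must be the rooted cycle graph, and this is a \emph{contradiction} with the definition of a two-edge-cut (one side of the cut would be empty). That contradiction is what proves a separating dipole exists; one then contracts its maximal extension. Your version instead reads off from the flip that ``$(e,e')$ is really the two rails of a ladder-vertex whose contraction is separating,'' which does not follow: an arbitrary two-edge-cut need not bound a ladder-vertex, and without the no-separating-dipole hypothesis the flipped pieces need not be melon-free, so the genus-zero piece need not be the cycle graph. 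You also do not justify that the genus-one remainder is 2PI (the paper derives this from the maximality of the contracted separating object: a further separating dipole in the remainder would have to sit next to the root, contradicting maximality), which is needed to ensure exactly one separating insertion occurs. Second, your count does not close: $16$ is not a multiple of $5$, so ``(number of placements)\,$\times\,5$'' cannot be right. The correct tally is $2\times(3+5)$: for each of the two 2PI schemes, the separating object adjacent to the root is either a bare separating dipole (three types: N, L, R) or a separating ladder-vertex (five types: $\mathrm{N_e},\mathrm{N_o},\mathrm{L},\mathrm{R},\mathrm{B}$). Your enumeration omits the six bare-dipole schemes.
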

\begin{proof}
Let $G$ be a $\ell=0$ melon-free Feynman graph of genus one and $S_G$ its scheme.

We first assume that $G$ is 2PI. By Lemmas \ref{lem:N_dip} and \ref{lem:sep-conn}, there necessarily exists a connecting N-dipole in $G$ (a separating one would break the 2PI condition). Furthermore, the maximal extension of this N-dipole in $G$ is either the N-dipole itself or a maximal N-ladder (a B-ladder is necessarily separating by Lemma \ref{lem:sep-conn}), which translates into a N-dipole or a N-vertex in $S_G$, respectively. Using Eq.\ \eqref{eq:contNonSepN} or \eqref{eq:LVcontNonSepN} with $\sigma=-1$ (by the connecting property), the contraction of the maximal extension of this connecting N-dipole yields a $\ell=0$ Feynman graph $G'$ of genus zero. We distingusih two cases. 

If $G'$ is melon-free, it must be the rooted cycle graph; and the orientation of the edges imposes that the contraction involved a N$_\mathrm{e}$-ladder. As a result, $S_G = S_1$. 

On the other hand, if $G'$ is not melon-free, we must be in one of the configurations shown in Figures \ref{fig:create-melon1}, \ref{fig:create-melon2}, and \ref{fig:create-melon3}, where $X$ represents the maximally extended N-dipole to be contracted. However, configuration \ref{fig:create-melon1} is excluded by maximality of $X$, while configuration \ref{fig:create-melon3} is incompatible with the 2PI character of $G$. We are thus left with configuration \ref{fig:create-melon2}, where $X$ is a N-dipole or a N$_\mathrm{o}$-ladder:
\begin{center}
\includegraphics[scale=.6]{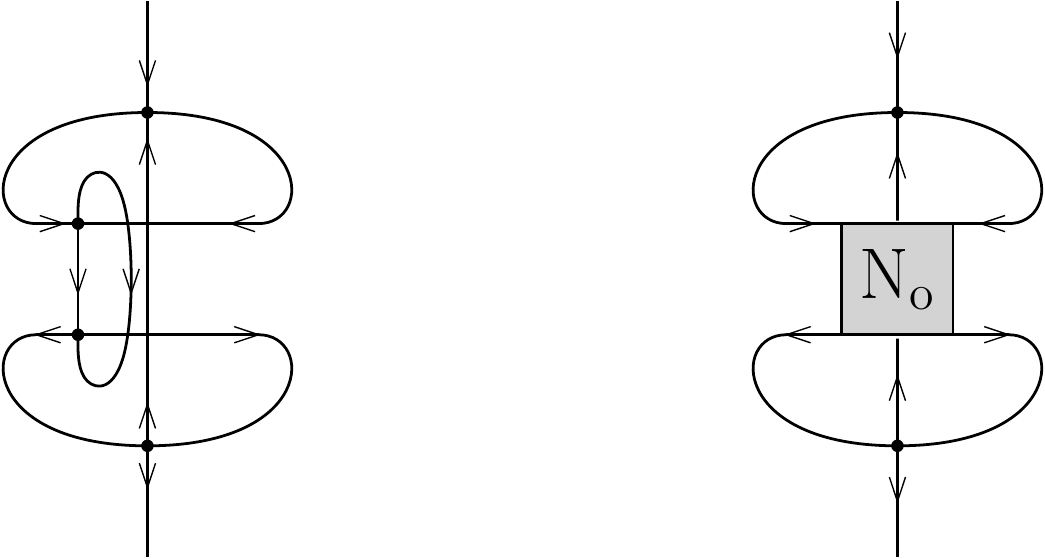}
\end{center}
But then, the 2PI constraint further imposes that these $2$-point subgraphs close onto the root-vertex. In the first case, we find:
\begin{center}
\includegraphics[scale=.6]{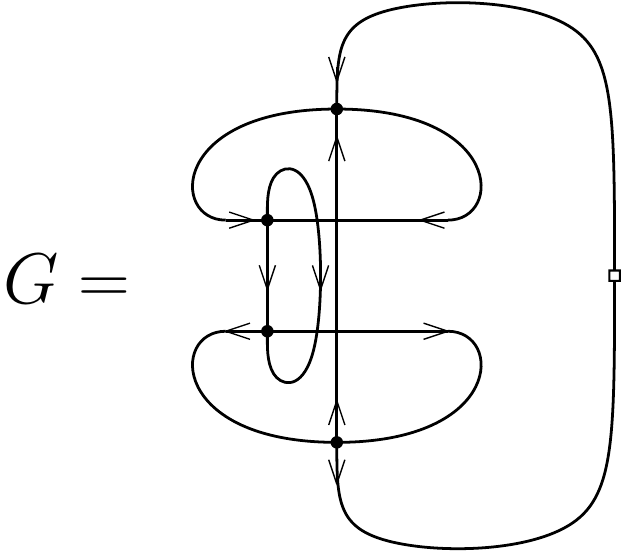}
\end{center}
whose scheme is again $S_1$. In the second case, $G$ has scheme $S_2$.

\medskip

We now assume that $G$ is 2PR. We work in two steps: 1) we first prove that $G$ necessarily contains a separating dipole; and 2) we use that information to construct all the 2PR schemes from the 2PI ones.

Suppose that $G$ does not contain a separating dipole. It must still contain a two-edge-cut $(e,e')$, so that we can perform a flip on $(e,e')$ as in Lemma \ref{lem:2PR}. One can check that the two resulting $\ell=0$ Feynman graphs $G_1$ and $G_2$ cannot contain melonic subgraphs. Indeed, it would otherwise mean that there were a separating dipole in $G$. Since $g(G)=1$, the conditions $g(G_1) \leq g(G)$, $g(G_2) \leq g(G)$, and $g(G_1) + g(G_2) = g(G)$ of Lemma \ref{lem:2PR} further imply that (say) $g(G_1)=0$ and $g(G_2) = 1$. Therefore, $G_1$ must be the rooted cycle graph. But this yields a contradiction because then, by definition, $(e,e')$ cannot be a two-edge-cut.

As a result, $G$ must contain a separating dipole. The maximal extension of this dipole in $G$ is either the dipole itself or a maximal ladder, which we denote by $X$. Contracting $X$ yields two $\ell=0$ Feynman graphs $G_1$ and $G_2$, which cannot contain melonic subgraphs because the situation of Figure \ref{fig:create-melon1} is excluded by maximality of $X$. Using $g(G)=1$ and Eq.\ \eqref{eq:contSep} or \eqref{eq:LVcontSep}, these graphs further obey $g(G_1) \leq 1$, $g(G_2) \leq 1$, and $g(G_1) + g(G_2) = 1$. This implies that (say) $g(G_1)=0$ and $g(G_2) = 1$, so that $G_1$ is the rooted cycle graph. We thus have the following structure: 
\begin{center}
\includegraphics[scale=.8]{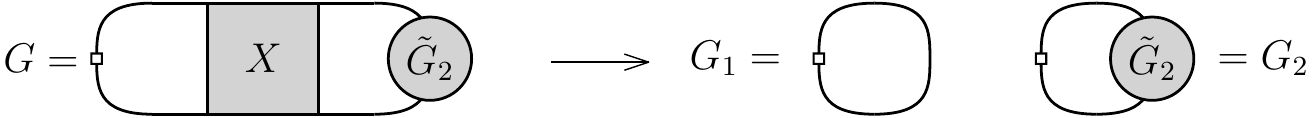}
\end{center}
In particular, we have shown that the maximal extension $X$ of a separating dipole in $G$ is necessarily adjacent to the root-vertex, as represented above. 

We have also shown that $G_2$ is itself a $\ell=0$ melon-free Feynman graph of genus one. In addition, one can verify that $G_2$ cannot be 2PR. Otherwise, by the arguments of the previous paragraph, it would contain a separating dipole whose maximal extension is adjacent to the root-vertex, which contradicts the maximality of $X$ in $G$. As a result, $G_2$ is 2PI and has scheme $S_1$ or $S_2$. 

Finally, by replacing $X$ in $G$ by an arbitrary separating dipole or ladder, which translates into a separating dipole or ladder-vertex in $S_G$, we obtain the 16 2PR schemes we were after, which concludes the proof.  
\end{proof}

The following proposition then gives a way of constructing all the $\ell = 0$ melon-free Feynman graphs of genus $g$, from the family of $\ell = 0$ melon-free Feynman graphs of genus $g' < g$.  

\begin{theorem}\label{thm:induction}
Let $G$ be a connected, rooted and melon-free Feynman graph, such that $\ell(G) = 0$ and $g(G) = g \geq 2$.

Suppose first that $G$ is 2PI. Then it can be obtained by insertion of a connecting N-dipole or N-ladder into a: $\ell = 0$, connected, rooted and melon-free Feynman graph of genus $g-1$.

Suppose instead that $G$ is 2PR. Then one of these two conditions holds:
\begin{enumerate}[label=(\roman*)]
    \item $G$ can be obtained by insertion of a separating dipole, a separating ladder, or a two-edge-connection in-between: $\ell = 0$, connected, rooted and melon-free Feynman graphs $G_1$ and $G_2$, such that $g(G_1) < g$, $g(G_2) <g$, and $g(G_1) + g(G_2) = g$.
    
    \item $G$ can be obtained by insertion of a separating dipole or a separating ladder in-between: the rooted cycle graph and a $\ell=0$, connected, rooted and melon-free Feynman graph $G_2$, such that $g(G_2) = g$ and $G_2$ is 2PI. Furthermore, the insertion is performed with respect to the root-vertex of $G_2$.
\end{enumerate}
\end{theorem}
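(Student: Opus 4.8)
The plan is to mimic the structure of the genus-one argument in Proposition~\ref{propo:g1}, but now bootstrapping off the general machinery of dipole/ladder-vertex contractions and flips. Throughout, I work with a connected, rooted, melon-free Feynman graph $G$ with $\ell(G)=0$ and $g(G)=g\geq 2$. By Lemma~\ref{lem:N_dip} there is a N-dipole in $G$, and by Lemma~\ref{lem:sep-conn} any dipole in $G$ is separating or (in the N case) connecting, while any other dipole is separating. The two cases (2PI versus 2PR) are handled separately.

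\textbf{The 2PI case.} Suppose $G$ is 2PI. A separating dipole would immediately produce a two-edge-cut (the two rails of the maximal ladder containing it, or the two edges of the dipole, as in Figure~\ref{fig:0separating}), contradicting 2PI; hence by Lemma~\ref{lem:sep-conn} every N-dipole in $G$ is connecting, and there are no L- or R-dipoles at all. Pick such a connecting N-dipole and let $X$ be its maximal extension, which is either the N-dipole itself or a maximal N-ladder (a B-ladder would contain rungs of more than one type, forcing some rung to be separating, again contradicting 2PI). Contracting $X$ is a contraction of a non-separating N-dipole/ladder with $\sigma=-1$ by the connecting property, so by Eqs.~\eqref{eq:contNonSepN}/\eqref{eq:LVcontNonSepN} it yields a connected rooted Feynman graph $G'$ with $\ell(G')=0$ and $g(G')=g-1$. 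The remaining task is to argue that $G'$ can be taken melon-free, i.e.\ that we can choose $X$ so that no melon is created upon contraction. For this I invoke the case analysis preceding the proposition (Figures~\ref{fig:create-melon1}--\ref{fig:create-melon3}): configuration~\ref{fig:create-melon1} is excluded by maximality of $X$; configuration~\ref{fig:create-melon3} forces a two-edge-cut on the ``$Y$'' side and is incompatible with 2PI; so only configuration~\ref{fig:create-melon2} can occur, and exactly as in the $g=1$ argument the 2PI constraint forces the two melonic two-point subgraphs to close onto the root-vertex — but then the contraction in fact does \emph{not} create a genuine elementary melonic two-point subgraph (the candidate melon contains the root-vertex, which by definition it cannot), so $G'$ is melon-free after all. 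This gives the desired statement, reading the contraction backwards as the insertion of a connecting N-dipole or N-ladder into a $\ell=0$, genus-$(g-1)$ melon-free graph.

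\textbf{The 2PR case.} Suppose $G$ is 2PR, so $G$ has a two-edge-cut $(e,e')$ and the structure of Figure~\ref{fig:flipOp} with non-empty connected two-point subgraphs $\tilde G_1,\tilde G_2$. Here I split into two sub-cases according to whether $G$ contains a \emph{separating} dipole. If it does, let $X$ be the maximal extension of a separating dipole — a separating dipole or separating ladder, hence a separating dipole or ladder-vertex in $S_G$. Contracting $X$ splits $G$ into $G_1$, $G_2$ with, by Eqs.~\eqref{eq:contSep}/\eqref{eq:LVcontSep}, $\ell(G_i)=0$ and $g(G_1)+g(G_2)=g$; by maximality of $X$ the melon-creating configuration~\ref{fig:create-melon1} is excluded on the side facing $X$, so both $G_i$ are melon-free (a short check, as in the $g=1$ proof, rules out the remaining melon-creating patterns). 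If both $g(G_i)<g$ we are in conclusion~(i). If instead one of them, say $G_1$, has $g(G_1)=g$ and $g(G_2)=0$, then $G_2$ is the rooted cycle graph; but then, rerunning the argument, any separating dipole's maximal extension in $G$ must be adjacent to the root-vertex (otherwise $G_2$ would not be the cycle graph), and moreover $G_1$ cannot itself contain a separating dipole (its maximal extension would again be root-adjacent, contradicting maximality of $X$ in $G$) — so $G_1$ is 2PI, and by exchanging the roles of the labels $G_1\leftrightarrow G_2$ we are precisely in conclusion~(ii), with the cycle graph glued via a separating dipole/ladder-vertex to a 2PI genus-$g$ graph, the insertion being with respect to the root of the 2PI factor. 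The opposite possibility, that one $g(G_i)=g$ but the graph is \emph{not} the cycle graph on the complementary side, cannot arise since $g(G_1)+g(G_2)=g$ forces the complementary genus to vanish. Finally, if $G$ contains \emph{no} separating dipole, perform a flip on $(e,e')$: by Lemma~\ref{lem:2PR} this yields $G_1,G_2$ with $\ell(G_i)=0$ and $g(G_1)+g(G_2)=g$, and (as in Proposition~\ref{propo:g1}) the absence of separating dipoles in $G$ forces $G_1,G_2$ to be melon-free, with $g(G_i)<g$ (if some $g(G_i)=g$ then the other is the cycle graph and $(e,e')$ is not a two-edge-cut — contradiction); reading the flip backwards as a two-edge-connection insertion places us in conclusion~(i).

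\textbf{Main obstacle.} The delicate point is the melon-free bookkeeping: after a contraction or a flip we must be able to \emph{choose} the dipole/ladder (or accept the flip) so that the resulting pieces are melon-free, and we must use the root-vertex as the essential obstruction that prevents a genuine melon from forming in the 2PI case — exactly the mechanism already exploited at $g=1$. Getting the interplay right between ``maximality of $X$'' (which kills Figure~\ref{fig:create-melon1}), the 2PI/2PR hypothesis (which kills Figure~\ref{fig:create-melon3} resp.\ forces the $g=g$ factor to be 2PI), and the root-closure phenomenon of Figure~\ref{fig:create-melon2} is where the real content of the induction lies; the genus additivity and grade vanishing are then immediate from the contraction/flip formulas of Section~\ref{sec:CombMoves} and Lemma~\ref{lem:2PR}.
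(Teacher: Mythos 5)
Your overall strategy coincides with the paper's: contract the maximal extension of a connecting N-dipole in the 2PI case, and in the 2PR case either contract the maximal extension of a separating dipole or flip a two-edge-cut, then use genus additivity and the negation of condition \textit{(i)} to force one factor to be the rooted cycle graph and the other to be 2PI. The 2PR half of your argument is essentially the paper's proof (with the two sub-cases taken in the opposite order), and is sound modulo the same level of terseness the paper itself allows (e.g.\ the claim that $G_1$ satisfying \textit{(i)} would make $G$ satisfy \textit{(i)}).

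There is, however, a genuine flaw in your disposal of configuration~\ref{fig:create-melon2} in the 2PI case. You argue that after the 2PI constraint forces the two external legs to close onto the root-vertex, the contraction of $X$ "does not create a genuine elementary melonic two-point subgraph" because "the candidate melon contains the root-vertex". This is false: by the paper's definition the elementary melonic $2$-point subgraph consists of the two internal vertices, the three parallel edges between them, and two external half-edges -- the root-vertex sits \emph{on the external legs}, outside the subgraph. So the contraction of $X$ in that configuration produces the rooted melonic graph with two standard vertices, which \emph{does} contain a genuine elementary melon, and $G'$ would not be melon-free. The correct way to handle this configuration (and what the paper does, writing that it is excluded "because $G$ is 2PI and $g(G)\neq 1$") is to observe that the melon-with-$X$ subgraph is a connected $2$-point subgraph, so 2PI forces both of its external legs to be incident to the root-vertex; but then $G$ consists of nothing else, and contracting the connecting $X$ (which lowers the genus by one) yields a genus-zero melonic graph, whence $g(G)=1$, contradicting $g\geq 2$. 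In other words, configuration~\ref{fig:create-melon2} cannot arise at all for $g\geq 2$, rather than arising harmlessly as you claim. With this correction your proof agrees with the paper's.
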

\begin{proof} 
We follow the same strategy as in the proof of Proposition \ref{propo:g1}.

\medskip

If $G$ is 2PI, there exists a connecting N-dipole in $G$, which can be maximally extended. Contracting the resulting N-dipole or N-ladder $X$ yields a $\ell=0$ Feynman graph $G'$ of genus $g(G') = g-1$. Furthermore, $G'$ is necessarily melon-free because the three configurations of Figure \ref{fig:generate-melon} are all excluded: configuration \ref{fig:create-melon1} because $X$ is maximal, configuration \ref{fig:create-melon2} because $G$ is 2PI and $g(G)\neq 1$, and configuration \ref{fig:create-melon3} because $G$ is 2PI. This proves the first part of the proposition.

\medskip

If $G$ is 2PR, we prove that the negation of \textit{(i)} implies \textit{(ii)}, by generalizing the arguments of Proposition \ref{propo:g1}. 

Suppose first that $G$ does not contain a separating dipole. Since $G$ is 2PR, it must still contain a two-edge-cut $(e,e')$. By Lemma \ref{lem:2PR}, performing a flip on $(e,e')$ yields two Feynman graphs $G_1$ and $G_2$ such that $g(G_1) \leq g$, $g(G_2) \leq g$, and $g(G_1) + g(G_2) = g$. As in Proposition~\ref{propo:g1}, $G_1$ and $G_2$ are necessarily melon-free. The negation of \textit{(i)} then implies that (say) $g(G_1)=0$ and $g(G_2) = g$, that is, $G_1$ is the rooted cycle graph, which leads to the same contradiction as in Proposition \ref{propo:g1}.

It must therefore be that $G$ contains a separating dipole. As in Proposition \ref{propo:g1}, we proceed to contracting its maximal extension $X$, itself a separating dipole or a separating ladder. The two resulting $\ell=0$ Feynman graphs $G_1$ and $G_2$ are again melon-free and obey: $g(G_1) \leq g$, $g(G_2) \leq g$, and $g(G_1) + g(G_2) = g$. Together with the negation of \textit{(i)}, it implies that (say) $g(G_1)=0$ and $g(G_2) = g$. We thus have the same structure as in the proof of Proposition \ref{propo:g1}, namely: 
\begin{center}
\includegraphics[scale=.8]{g1_2PR_proof.pdf}
\end{center}
In particular, we have shown that when $G$ does not satisfy \textit{(i)}, the maximal extension of any separating dipole in $G$ is adjacent to the root-vertex.

Furthermore, we have also shown that $G_2$ is itself a $\ell=0$ melon-free Feynman graph of genus $g$. Suppose that $G_2$ is 2PR. One can check that $G_2$ cannot obey condition \textit{(i)} (otherwise, $G$ itself would). Hence, by the arguments of the previous paragraph, $G_2$ must contain a separating dipole, whose maximal extension is adjacent to its root-vertex. But this is in contradiction with the maximality of $X$ in $G$. As a result, $G_2$ must be 2PI. Finally, it is clear from the above structure for $G$ that the insertion of $X$ is performed with respect to the root-vertex of $G_2$, which concludes the proof.
\end{proof}

\begin{remark}
Even though it is not obvious from the proof of Theorem \ref{thm:induction} itself, the situation \textit{(i)} can always be achieved, for any values of $g$, $g(G_1)$ and $g(G_2)$ allowed by the Theorem.
\end{remark}

\medskip

\paragraph{Algorithmic construction of all $\ell=0$ graphs.}
The Theorem \ref{thm:induction} is a key result of this paper. It provides a constructive way of generating all the $\ell =0$ Feynman graphs, order by order in the genus and starting at genus one, by application of the following algorithm:
\begin{enumerate}
    \item Assume that the set $\hat{E}_{g'}$ of $\ell = 0$ melon-free Feynman graphs of genus $g'\leq g-1$ has been constructed; 
    
    \item Construct the set $\hat{E}^\mathrm{2PI}_{g}$ of 2PI $\ell = 0$ melon-free Feynman graphs of genus $g$ by inserting a connecting N-dipole or N-ladder into any element of $\hat{E}_{g-1}$, in any possible way that yields a 2PI graph. Note that there exist situations in which the graph before the insertion is itself 2PR, it is therefore important to start from elements of $\hat{E}_{g-1}$ as opposed to $\hat{E}^\mathrm{2PI}_{g-1}$. Topologically, this step increases the genus by adding handles on lower-genus Riemann surfaces; 
    
    \item Obtain a first class of 2PR contributions of genus $g$ by inserting a separating dipole or ladder in between the rooted cycle graph and any element of $\hat{E}^\mathrm{2PI}_{g}$ (as illustrated in the examples of Figure \ref{fig:0separating}, with $G_1$ the rooted cycle graph). This step does not change the topology, and is only required because we work with rooted graphs;
    
    \item Obtain a second class of 2PR contributions of genus $g$ by inserting a separating dipole, a separating ladder or a two-edge-connection in between any element of $\hat{E}_{g_1}$ and any element of $\hat{E}_{g_2}$ such that $g_1 + g_2 = g$, in any possible way. Topologically, this step increases the genus by taking the connected sum of topologically non-trivial Riemann surfaces;
    
    \item Construct the set $\hat{E}^\mathrm{2PR}_{g}$ of 2PR $\ell = 0$ melon-free Feynman graphs of genus $g$ by taking the union of the sets obtained in the previous two steps;
    
    \item Obtain the set $\hat{E}_{g} = \hat{E}^\mathrm{2PR}_{g} \cup \hat{E}^\mathrm{2PI}_{g}$ of all $\ell = 0$ melon-free Feynman graphs of genus $g$;
    
    \item Finally, insert arbitrary melonic rooted Feynman graphs on the edges of any element in $\hat{E}_{g}$ to obtain the set $E_g$ of $\ell=0$ Feynman graphs (see Claim \ref{cl:melon}).
\end{enumerate}

This constructive algorithm is to be contrasted with the characterization of the Feynman graphs of fixed degree performed in the MO model \cite{Fusy:2014rba} or in a $\mathrm{O}(N)^3$ tensor model \cite{Bonzom:2019yik}. Indeed, in these cases, one first needs to determine all the dipole-free schemes of fixed degree, which is in general a challenging task. In contrast, the non-trivial algorithmic simplification we have achieved here is a direct consequence of the double-scaling limit, which allows us to restrict to Feynman graphs with vanishing grade.

In Appendix \ref{app:algorithmex}, we illustrate how $\ell = 0$ graphs of genus two can be constructed with the help of the previously described algorithm. As it will become clear in the next sections, only a subset of the structures generated in this manner are relevant to the analysis of the continuum limit. We will therefore focus on this particular subclass of $\ell = 0$ graphs, which as we will explain, can be conveniently described as collections of ladder diagrams glued together along certain effective six-point vertices. As made apparent by some of the contributions listed in Appendix \ref{app:algorithmex}, we emphasize that this picture would need to be generalized in order to include all $\ell = 0$ subgraphs: in particular, one would need to include one new type of effective eight-point vertex along which ladders can be glued.

\subsection{Schemes of vanishing grade}\label{sec:schemesZeroGrade}

In the previous section, we derived an inductive algorithm to generate all the connected, rooted (and melon-free) Feynman graphs of vanishing grade, order by order in the genus. A natural question is whether a similar algorithm can be constructed in terms of the corresponding schemes of vanishing grade. In particular, we would like to know if such an algorithm involves a finite number of inductive moves. Since there is a finite number of $\ell=0$ schemes of genus one and they are of finite size (see Propoition \ref{propo:g1}), it would indeed imply that there is a finite number of $\ell=0$ schemes of arbitrary fixed genus. This is to be contrasted with the number of $\ell=0$ melon-free Feynman graphs of fixed genus, which is infinite due to the insertions of ladders of arbitrary length. The finiteness of the number of $\ell=0$ schemes of fixed genus plays an important role for computing generating functions, as explained in the next section. As a side remark, we note that this result can be obtained as a direct consequence of the finiteness of the number of schemes of fixed degree, which is derived in \cite{Fusy:2014rba}. However, in the present context, we can make use of the double-scaling limit to have a better idea of the structure of the $\ell=0$ schemes of fixed genus.

Let us discuss how Theorem \ref{thm:induction} can be adapted to $\ell=0$ schemes. This result provides an induction in the class of connected, rooted and melon-free Feynman graphs of vanishing grade. However, a direct extension of the arguments using the same type of insertion moves does not allow to close the induction on the class of $\ell=0$ schemes.

As an illustration, suppose that we replace $G$ in Theorem \ref{thm:induction} with its $\ell=0$ scheme $S_G$ of genus $g\geq2$. By Lemma \ref{lem:schemes2Pi}, the notion of 2PR and 2PI is the same for $G$ and $S_G$. Furthermore, it is natural to replace the insertions of ladders with insertions of ladder-vertices. With these replacements, one could carry out the same strategy of proof as in the proposition. Since we always contract the maximal extension of a dipole in $G$, it is equivalent to contracting the corresponding dipole or ladder-vertex in $S_G$. Then, the same reasoning shows that the contraction of a connecting (resp.\ separating) dipole or ladder-vertex in $S_G$ yields a $\ell=0$ Feynman graph with ladder-vertices $G'$ (resp.\ two $\ell=0$ Feynman graphs with ladder-vertices $G_1$ and $G_2$) which is (resp.\ are) melon-free. However, it is straightforward to see that the resulting graphs are not necessarily ladder-free; in other words, they are not necessarily schemes. As a result, a closed induction on $\ell=0$ schemes can not be directly performed in this way.

A possible alternative is to first analyze the situations in which the contraction of a dipole or a ladder-vertex in $S_G$ generates a ladder in $G'$, $G_1$ or $G_2$, and then perform further operations on these graphs so as to obtain schemes. Let us consider a dipole or a ladder-vertex $X$ in $S_G$. The contraction of $X$ gives rise to two edges $e$ and $e'$, which can be distinct or not:\footnote{If $X$ is separating, a root-vertex needs to be added in the middle of $e$ or $e'$: this cannot generate a ladder in the Feynman graph connected to that edge. Also, remark that the effect of a flip move is essentially identical to that of a separating dipole or ladder-vertex contraction; we focus on the latter for definiteness.}
\vspace{0.2cm}
\begin{center}
\includegraphics[scale=.8]{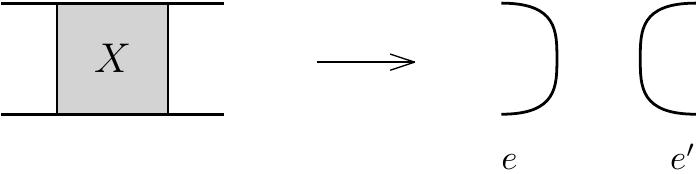}
\end{center}
Suppose that $G'$, $G_1$ or $G_2$ contains a ladder $\cL$. We distinguish the following cases:
\begin{enumerate}
    \item $e,e'\not\in\cL$;
    \item $e\in\cL\,, e'\not\in\cL$ (or $e'\in\cL\,, e\not\in\cL$);
    \item $e,e'\in\cL\, , e=e'$;
    \item $e,e'\in\cL\,, e\neq e'$.
\end{enumerate}
It is straightforward to see that the first case cannot happen because it would mean that there is a ladder in $S_G$, which is impossible. As for the remaining cases, one can check that $\cL$ necessarily has one of the following combinatorial structures:\footnote{We do not keep track of the embedding, edge orientations, or the position of the root-vertex, since these ingredients do not play any essential role in the argument. In particular, the dipoles appearing in these figures can be of any type, provided that it is consistent with the $\ell=0$ condition.}
\begin{center}
\includegraphics[scale=0.8]{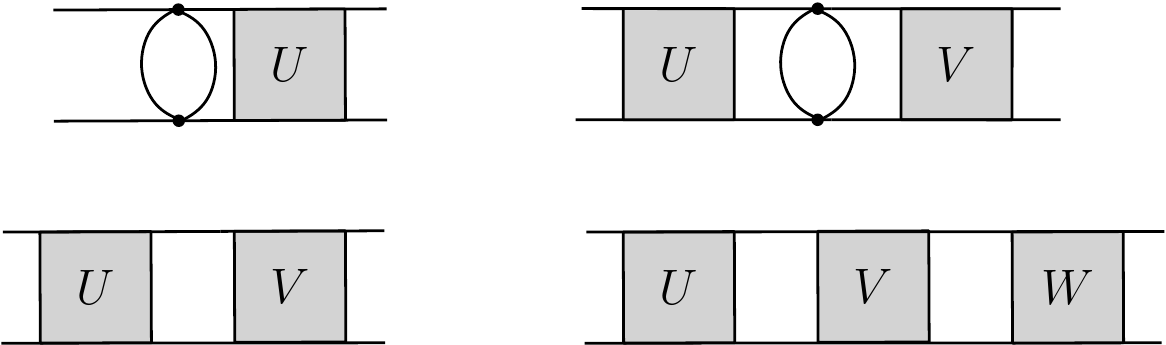}
\end{center}
where $e$ (and/or $e'$) appears as an edge in these diagrams, and  $U, V$ and $W$ are dipoles or ladder-vertices (of arbitrary type, consistent with $S_G$ being a $\ell=0$ scheme). Enumerating all the possible situations that generate one of the ladder structures represented above is quite cumbersome and in fact, it is not required. The main point is that there is only a finite number of possibilities, since there is a finite number of cases and a finite number of structures for $\cL$. In summary, we have argued that there is a finite number of configurations in which the contraction of a dipole or a ladder-vertex $X$ in $S_G$ generates a ladder $\cL$ in $G'$, $G_1$ or $G_2$. 

As a second step, we can replace the ladder $\cL$ in $G'$, $G_1$ or $G_2$ with a ladder-vertex of the consistent type. As a consequence of the definition of the genus and the grade of a Feynman graph with ladder-vertices (see Section \ref{sec:schemes}), this replacement does not affect the genus and the grade of $G'$, $G_1$ or $G_2$. Furthermore, it cannot generate melonic subgraphs. Hence, the resulting Feynman graphs with ladder-vertices, which we respectively denote by $\hat{G}', \hat{G}_1$ or $\hat{G}_2$, must correspond to $\ell=0$ schemes of genus $g(\hat{G}')=g(G')$, $g(\hat{G}_1)=g(G_1)$ or $g(\hat{G}_2)=g(G_2)$. 

We have thus shown that the algorithm of Theorem \ref{thm:induction} can be adapted to $\ell=0$ schemes by taking additional care of the configurations in which the contraction of a dipole or ladder-vertex $X$ in $S_G$ generates a ladder. In this way, the induction can be closed on the set of $\ell=0$ schemes. Furthermore, at each step of the induction, there is a finite number of operations. As explained earlier, since there is a finite number of $\ell=0$ schemes of genus one, which are of finite size, we conclude that there is a finite number of $\ell=0$ schemes of arbitrary fixed genus.

\subsection{Connected two-point function}\label{sec:2point}

We are now ready to express the double-scaled two-point function as a weighted sum of schemes.

Let us call $\cS_g$ the set of schemes of vanishing grade with genus $g$. For each $S \in \cS_g$, we first construct the generating function of the connected rooted melon-free Feynman graphs corresponding to such scheme:
\be
\hat\cG_S (u) = \sum_{n \in \mathbb{N}} \hat\cG_{S,n} u^{p+n}\,, 
\ee
where $2 p$  is the number of non-root standard vertices of the scheme $S$, and $\hat\cG_{S,n}$ is the number of connected rooted melon-free Feynman graphs of scheme $S$ with $2n$ vertices inside the ladders. 
If $S$ is the cycle graph, we have $\hat\cG_{S,n}=\delta_{n,0}$ and $p=0$.

It is convenient to introduce also a generating function $\cC_X$ for each type of  ladder-vertex $X$. These are easily evaluated as (sums of) geometric series, and one finds:
\be\label{eq:ne}
\cC_{\Ne}(u)=\frac{u^2}{1-u^2} \, ,
\ee
\be\label{eq:no}
\cC_{\No}(u)=\frac{u^3}{1-u^2} \, ,
\ee
\be
\cC_{\mathrm{L}}(u)=\cC_{\mathrm{R}}(u)=\frac{u^2}{1-u} \, ,
\ee
\be\label{eq:nb}
\cC_{\mathrm{B}}(u)=\frac{(3u)^2}{1-3u}-\cC_{\Ne}(u)-\cC_{\No}(u)-\cC_{\mathrm{L}}(u)-\cC_{\mathrm{R}}(u) = \frac{6u^2}{(1-3u)(1-u)} \, ,
\ee
where $u$ is again a parameter that counts half the number of vertices (or, equivalently, the number of rungs in a ladder). The generating function $\hat\cG_S (u)$ is then: 
\be\label{eq:generating-scheme}
\hat\cG_{S}(u)=  u^p \cC_{\Ne}(u)^{n_e}\cC_{\No}(u)^{n_o}\cC_{\mathrm{L}}(u)^{l+r}\cC_{{B}}(u)^b \,,
\ee
where $b$, $n_e$, $n_o$, $l$, and $r$ are the respective numbers of $\mathrm{B}$-, $\mathrm{N}_\mathrm{e}$-, $\mathrm{N}_\mathrm{o}$-, $\mathrm{L}$-, and $\mathrm{R}$-vertices in $S$.

To obtain the sum over all connected rooted Feynman graphs, including melonic decorations, we simply need to substitute $u$ by $U(\lambda):= \lambda^2 T(\lambda)^4$, and multiply the result by $T(\lambda)$ to account for the extra propagator associated to the root-vertex. 
The generating function of connected rooted Feynman graphs of genus $g$ and grade $\ell=0$ is then:
\be\label{eq:sum_schemes_g}
\cG_{g}(\lambda)= \sum_{S \in \cS{g}}  T(\lambda) \, \hat\cG_{S}(T(\lambda)-1) \, ,
\ee
where we have used the melonic equation \eqref{eq:melonEq} to write $U(\lambda) = T(\lambda) -1$. 

From the point of view of  field theory, the presence of a root in the graphs corresponds to studying the two-point function $\cG^{(0)}(\l)$, rather than the free energy \eqref{eq:free-en-0}.
The two are related by the Schwinger-Dyson equation
\be
\left\la \Tr\left[ X^\dagger_\m \f{\d S}{\d X^\dagger_\m}\right] \right\ra = N^2 D\,,
\ee
implying
\be \label{eq:2pt}
\cG^{(0)}(\l) \equiv \f{N}{D} \left\la \Tr\left[ X^\dagger_\m X_\m \right]\right\ra = M^2+2\l\p_\l \mathcal{F}^{(0)}(M,\lambda) \,.
\ee
Notice that with the choice of scaling in the action \eqref{eq:actionND}, the two-point function of the free theory is $\left\la \Tr\left[ X^\dagger_\m X_\m \right]\right\ra_{\l=0} =N$, hence $\cG^{(0)}(0)=M^2$, consistently with $\cG_{0}(0)=1$ and $\cG_{g>0}(0)=0$.

Therefore, the connected two-point function in the double-scaling limit has  a similar expansion to that of the free energy \eqref{eq:free-en-0}. Owing to our choice of combinatorial factors in the action \eqref{eq:actionND}, it has a very simple expression in terms of the generating functions $\cG_{g}(\lambda)$:
\be \label{eq:2pt-1/M}
 \cG^{(0)}(\l) = \sum_{g\in \mathbb{N}} \cG_{g}(\lambda) M^{2-2g} \, .
\ee

The melonic two-point function $T(\lambda)$, solving equation \eqref{eq:melonEq}, is well known. It has a dominant singularity at the critical value of the coupling constant $\lambda_c = \sqrt{3^3/4^4}$, with the following singular behavior \cite{critical, RTM}:
\be\label{eq:singT}
T(\l) \underset{\lambda \to \lambda_c^-}{\approx} \frac{1}{3} \Bigl(4 -\sqrt{\frac{8}{3}}\sqrt{1-\frac{\l^2}{\l_c^2}}\Bigr) \, .
\ee
When the melonic two-point function reaches criticality, the function $U(\lambda)$ approaches the value $u_c := T(\l_c) -1 = 1/3$ (from below), at which $\cC_{\mathrm{B}}(u)$ itself becomes critical. The other generating functions of ladder-vertices stay instead regular, as their dominant singularity is at $\vert u \vert=1 > u_c$. As a result of Eqs.\ \eqref{eq:generating-scheme} and \eqref{eq:sum_schemes_g}, the most singular part of $\cG_g (\lambda)$ in the limit $\lambda \to \lambda_c$ will be governed by schemes that maximize the number of B-vertices.

\section{Dominant schemes of vanishing grade}\label{sec:dominantSchemes}

In view of the preceding section, and following the nomenclature of \cite{GurSch, Fusy:2014rba}, we say that a scheme is \emph{dominant} if it contains a maximal number of B-vertices allowed by its genus. The dominant $\ell=0$ schemes of genus $g$ pick up the most singular contributions in the expansion \eqref{eq:sum_schemes_g} of $\mathcal{G}_g$; they therefore determine the behavior of the multi-matrix model in the critical limit $\l \to \l_c$.

We will first show that the dominant schemes have the combinatorial structure of decorated plane binary trees. This fact will allow us to explicitly resum them, and define a triple-scaling limit retaining contributions with arbitrary values of the genus. Similarly to the melonic limit, we will find that this new scaling limit admits a critical regime dominated by large trees. However, in contrast to melonic diagrams, the tree-like structure of dominant schemes encodes Riemann surfaces of non-zero genus, the expectation value of which diverges at the critical point. 

\subsection{One-to-one mapping to plane binary trees}

We first determine the maximal number of B-vertices in a $\ell=0$ scheme as a function of its genus.
\begin{lemma}\label{lem:maxBvertices}
Let $S$ be a $\ell=0$ scheme of genus $g\geq1$ with $b$ B-vertices. Then, $b\leq2g-1$ and the bound can be saturated.  
\end{lemma}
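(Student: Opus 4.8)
The plan is to prove both the upper bound $b \le 2g-1$ and its saturation by an inductive argument on the genus $g$, piggybacking on the structure theorems already established, namely Proposition~\ref{propo:g1} and Theorem~\ref{thm:induction}. The base case $g=1$ is handled directly by Proposition~\ref{propo:g1}: among the eighteen $\ell=0$ schemes of genus one, the two 2PI schemes $S_1$ and $S_2$ have $b=0$, while the sixteen 2PR schemes contain exactly one ladder-vertex, so $b \le 1 = 2\cdot 1 - 1$, and the bound is attained by choosing $\mathrm{X} = \mathrm{B}$. For the inductive step I would combine Theorem~\ref{thm:induction} with the behavior of ladder-vertices under the various insertion moves, keeping careful track of how $b$ and $g$ transform; the key point is that a single move can add at most one B-vertex while increasing the genus by at least one (for the genus-increasing moves) or distributing the genus among two factors (for the separating/two-edge-connection moves).

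For the upper bound, let $S$ be an $\ell=0$ scheme of genus $g\ge2$. If $S$ is 2PI, Theorem~\ref{thm:induction} says $S$ is obtained from a genus-$(g-1)$ scheme $S'$ by insertion of a connecting N-dipole or N-ladder; since this move introduces an N-vertex (never a B-vertex), $b(S) = b(S') \le 2(g-1)-1 \le 2g-1$. If $S$ is 2PR, we are in case \textit{(i)} or \textit{(ii)} of the theorem. In case \textit{(i)}, $S$ arises by inserting a single separating dipole, separating ladder, or two-edge-connection between $S_1$ and $S_2$ with $g_1+g_2 = g$ and $g_1,g_2 < g$; here $b(S) \le b(S_1) + b(S_2) + 1$, and applying the induction hypothesis (using that $g_i\ge1$ forces $b(S_i)\le 2g_i-1$, while $g_i=0$ forces $b(S_i)=0$) gives $b(S) \le (2g_1-1)+(2g_2-1)+1 = 2g-1$ when both $g_i\ge1$, and $b(S)\le 0 + (2g-1) + 1 = 2g$ in the degenerate subcase $g_1=0$ — so I would need to rule this last possibility out, or rather observe that when $g_1=0$ the component $S_1$ is the rooted cycle graph and the configuration actually coincides with case \textit{(ii)}. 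In case \textit{(ii)}, $S$ is obtained by inserting a single separating dipole or ladder between the rooted cycle graph and a 2PI scheme $S_2$ of genus $g$; thus $b(S) \le b(S_2) + 1$, and since $S_2$ is 2PI of genus $g\ge2$, the 2PI bound just proved (by the same induction) gives $b(S_2) \le 2g-2$, hence $b(S) \le 2g-1$. I would organize this as a simultaneous induction establishing the sharper statement that a 2PI $\ell=0$ scheme of genus $g$ has $b \le 2(g-1)$ (for $g\ge1$), which feeds case \textit{(ii)} cleanly.

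For saturation, I would construct explicitly, for each $g$, an $\ell=0$ scheme with exactly $2g-1$ B-vertices, again inductively. Starting from a genus-one 2PR scheme with one B-vertex, I would repeatedly apply the genus-increasing moves: first a 2PI move (connecting N-ladder insertion) to raise the genus by one without changing $b$, producing a 2PI scheme of genus $g$ with $2(g-1)-1$... — more carefully, the cleanest route is to take at genus $g\ge 2$ a 2PI scheme $S_2$ of genus $g$ saturating $b(S_2)=2g-2$ (built by inserting a connecting N-ladder into a genus-$(g-1)$ scheme saturating $b=2g-3$), then apply a case-\textit{(ii)} insertion of a B-vertex to obtain a genus-$g$ 2PR scheme with $b = 2g-1$. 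The remaining task is to check that the genus-$(g-1)$ saturating scheme can always be chosen 2PR with its maximal extension of a separating dipole adjacent to the root, so that a connecting N-ladder insertion is possible and yields a legitimate scheme (in particular one that is still ladder-free after the move, which may require contracting generated ladders into ladder-vertices as in Section~\ref{sec:schemesZeroGrade}).

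\textbf{Main obstacle.} The delicate point is bookkeeping the 2PI-versus-2PR dichotomy correctly through the induction — in particular handling the degenerate overlap between case \textit{(i)} with a cycle-graph factor and case \textit{(ii)}, and making sure the sharper 2PI bound $b\le 2(g-1)$ is available exactly when case \textit{(ii)} needs it. A secondary subtlety is the saturation direction: one must exhibit \emph{actual} schemes (ladder-free, melon-free, $\ell=0$) realizing the bound, which means checking that the insertion moves used in the construction do not create forbidden ladders, or else absorbing any such ladders into ladder-vertices without disturbing the count of B-vertices; verifying this at each genus, rather than only at $g=1$, is where the real work lies.
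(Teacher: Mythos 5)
Your overall strategy (induction on $g$, base case from Proposition~\ref{propo:g1}, inductive step via a decomposition move) is the same as the paper's, but the specific reduction you choose — running Theorem~\ref{thm:induction} backwards — introduces two genuine gaps.

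First, the upper-bound accounting in your case \textit{(i)} does not close. Theorem~\ref{thm:induction} decomposes $S$ into melon-free Feynman graphs $G_1$ and $G_2$, \emph{not} schemes: as discussed in Section~\ref{sec:schemesZeroGrade}, contracting the separating structure (or flipping) can generate a ladder in the piece that does not receive a new root-vertex, and re-absorbing that ladder into a ladder-vertex can merge two B-vertices of $S$ into one, so that the number of B-vertices of $S$ lying in $G_1$ may exceed $b(\hat{G}_1)$ by one. Your inequality $b(S)\le b(S_1)+b(S_2)+1$ therefore silently assumes no such merger; the honest worst case is $b(S)\le b(\hat{G}_1)+b(\hat{G}_2)+2\le 2g$, which misses the bound. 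The paper's proof avoids exactly this by not contracting an inserted structure at all: it flips on \emph{one side of a B-vertex not adjacent to the root} (which exists as soon as $b\ge 3$), so that B-vertex survives intact in the piece $H_R$ that gets the new root (hence $H_R$ is automatically ladder-free), and only the other piece $H_L$ can require a ladder re-absorption, costing at most one: $b(S)=b(H_L)+b(H_R)\le (b(\hat{H}_L)+1)+b(H_R)\le 2g-1$. That choice of where to cut is the whole point of the lemma's proof, and your argument needs it (or a proof that the doubly-lossy configuration cannot occur) to go through.

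Second, your saturation construction cannot work: you propose to build a 2PI $\ell=0$ scheme of genus $g$ with $b=2g-2$ B-vertices, but 2PI $\ell=0$ schemes contain \emph{no} B-vertices at all. By Lemma~\ref{lem:sep-conn} every B-vertex in an $\ell=0$ scheme is separating, and a separating ladder-vertex always produces a two-edge-cut (the pair of legs on the side away from the root), so its presence forces the scheme to be 2PR; this is consistent with $S_1$ and $S_2$ in Proposition~\ref{propo:g1} carrying only N-vertices. Moreover, inserting a connecting N-ladder into a 2PR scheme that already has separating B-vertices cannot render it 2PI, so the intermediate objects in your construction do not exist. The schemes that actually saturate $b=2g-1$ are the all-2PR, binary-tree-structured ones of Proposition~\ref{propo:dominant-schemes}; in the paper's proof saturation drops out of the induction precisely when the re-absorbed ladder in $H_L$ contains two B-vertices, which is what generates the branching.
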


\begin{proof}
All the $\ell=0$ schemes of genus $g=1$ are identified in Proposition \ref{propo:g1} and we see explicitly that $b\leq2g-1=1$. There are furthermore two such schemes that saturate the bound.

Assume that $g>1$. If $b\leq2$, then $b\leq2<2g-1$. We can therefore suppose that $b\geq3$. Then, there exists at least one B-vertex in $S$ which is not adjacent to the root-vertex. Furthermore, by Lemma \ref{lem:sep-conn}, any B-vertex is necessarily separating. We can thus perform a flip on one side of this B-vertex, as illustrated in the following figure:
\begin{center}
\includegraphics[scale=.7]{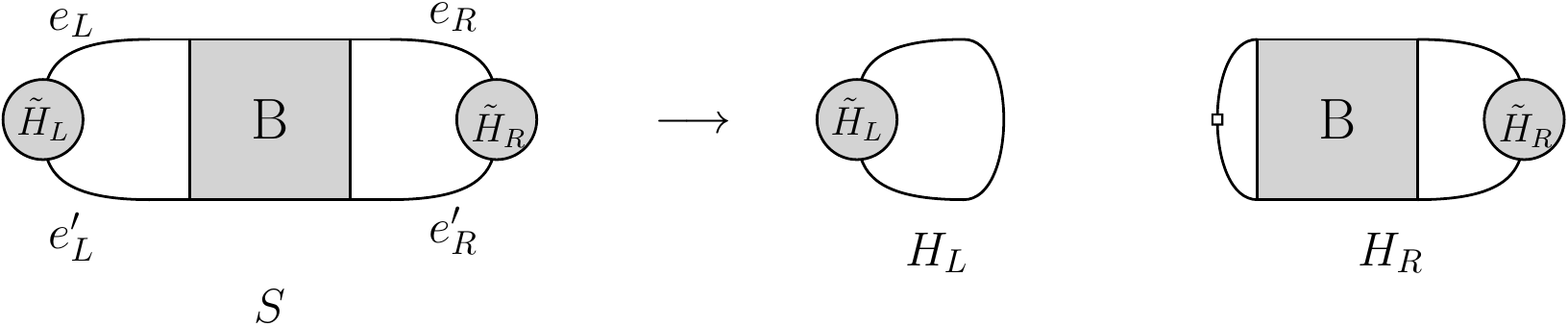}
\end{center}
where $e_L, e'_L, e_R$ nor $e'_R$ is adjacent to the root-vertex of $S$ while $\tilde{H}_L$ and $\tilde{H}_R$ are two connected $2$-point subgraphs which are non-empty (since $S$ is a scheme; hence, it is melon-free) and distinct from the root-vertex alone. In addition, we assume without loss of generality that the root-vertex of $S$ is in $\tilde{H}_L$. Using Eq.\ \eqref{eq:LVcontSep} with $\ell(S)=0$ and $g(S)=g$, the graphs $H_L$ and $H_R$ obtained after the flip are both $\ell=0$ Feynman graphs with ladder-vertices satisfying $g(H_R) +g(H_L) =g$. In addition, since we consider a B-vertex not adjacent to the root-vertex of $S$, we further have $g(H_L)\geq 1$ and $g(H_R)\geq 1$, which implies that $g(H_L)<g$ and $g(H_R)<g$. Finally, by the same argument as in the proof of Theorem \ref{thm:induction} for 2PR graphs, $H_L$ and $H_R$ are necessarily melon-free. However, as discussed in Section \ref{sec:schemesZeroGrade}, they do not necessarily correspond to schemes because they may contain a ladder. In this case, if we want to apply the induction hypothesis to both $H_L$ and $H_R$, we need to perform further operations so as to obtain $\ell=0$ schemes from them, in the same spirit as in Section \ref{sec:schemesZeroGrade}.

We first observe that because we add a root-vertex on the left side of the B-vertex in $H_R$, as illustrated in the above figure, the latter is necessarily ladder-free. It is therefore a $\ell=0$ scheme of genus $1\leq g(H_R)<g$ and by the induction hypothesis, it obeys $b(H_R)\leq 2g(H_R)-1$.

Next, if we assume that there is no ladder generated in $H_L$, then it also corresponds to a $\ell=0$ scheme of genus $1\leq g(H_L)<g$. Using the induction hypothesis, it thus verifies $b(H_L)\leq 2g(H_L)-1$. We then find that 
\be
b = b(H_L) + b(H_R) \leq 2 g(H_L) + 2 g(H_R) - 2 = 2 g - 2 < 2g-1\, .
\ee 
Hence, the bound on $b$ is verified but it is not saturated.

If we assume instead that there is a ladder $\cL$ generated in $H_L$, then this ladder necessarily contains the edge $e$ that reconnects $e_L$ and $e_L'$ in $H_L$. We are thus in the case 2.\ of Section \ref{sec:schemesZeroGrade}. By studying the possible structures for $\cL$ in this section, we deduce that $H_L$ has one of the following structures (up to embedding and edge orientations, which we ignore for the moment): 
\begin{center}
\includegraphics[scale=.5]{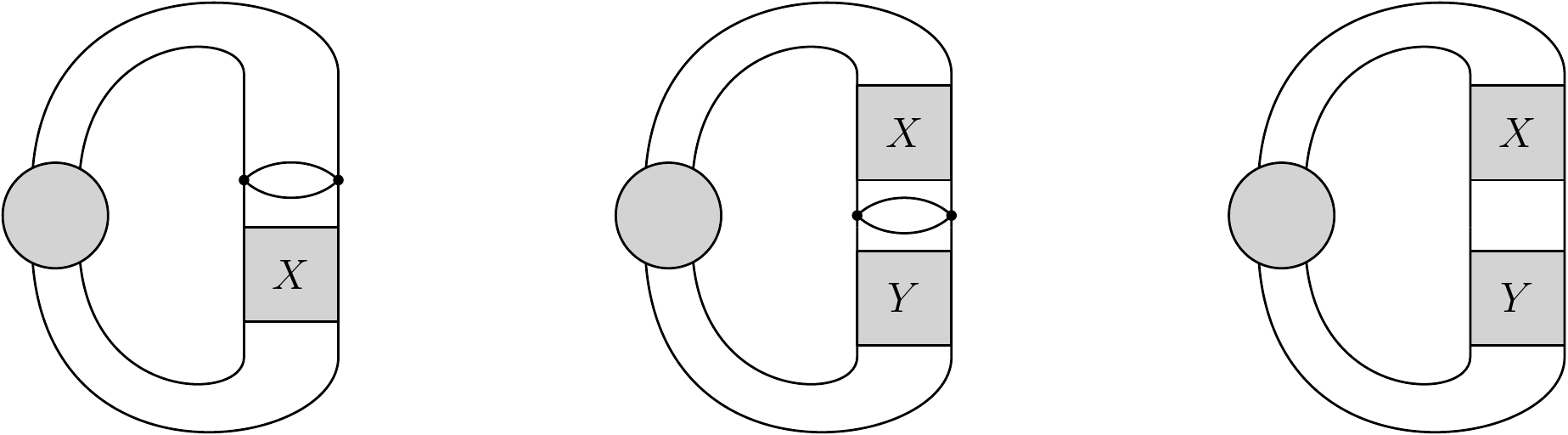}
\end{center}
where $X$ and $Y$ are two dipoles or ladder-vertices (of any type, consistent with $S$ being a $\ell=0$ scheme). In addition, the remaining part of $H_L$ is represented with a shaded disk, which contains the root-vertex and which can either be a connected $4$-point subgraph or factorize in two connected $2$-point subgraphs. 

We then replace the generated ladder $\cL$ in $H_L$ by a ladder-vertex of the consistent type, as explained in Section \ref{sec:schemesZeroGrade}. This yields a $\ell=0$ scheme $\hat{H}_L$ with the same genus as $H_L$. In addition, one can observe that at most two B-vertices in $H_L$ (possibly with a dipole) are replaced with a ladder-vertex in $\hat{H}_L$, and this ladder-vertex necessarily corresponds to a B-vertex itself. Hence, we have $b(H_L)\leq b(\hat{H}_L)+1$. We can then apply the induction hypothesis to $\hat{H}_L$ so that $b(\hat{H_L})\leq 2g(H_L)-1$, which ultimately leads to
\be
b = b(H_L) + b(H_R) \leq b(\hat{H}_L) + b(H_R) + 1 \leq 2 g(H_L) + 2 g(H_R) - 1 = 2 g - 1\, .
\ee 
Finally, this bound can be saturated if all the previous bound can also be saturated. On the one hand, the bounds for $b(\hat{H}_L)$ and $b(H_R)$ can be saturated by the induction hypothesis. On the other hand, the bound which relates $b(H_L)$ to $b(\hat{H}_L)$ can also be saturated if the ladder $\cL$ in $H_L$ contains two B-vertices. In particular, using the fact that B-vertices are necessarily separating when $\ell=0$, it means that $\tilde{H}_L$ has one of the structures given in Figure \ref{fig:HLstructure}, where the two external edges correspond to $e_L$ and $e_L'$, and the shaded disks correspond to connected $2$-point subgraphs. Note that we have restored relevant orientations, as well as embedding information in this Figure. This concludes the proof.
\end{proof}
\begin{figure}[htb]
\centering
\includegraphics[scale=.6]{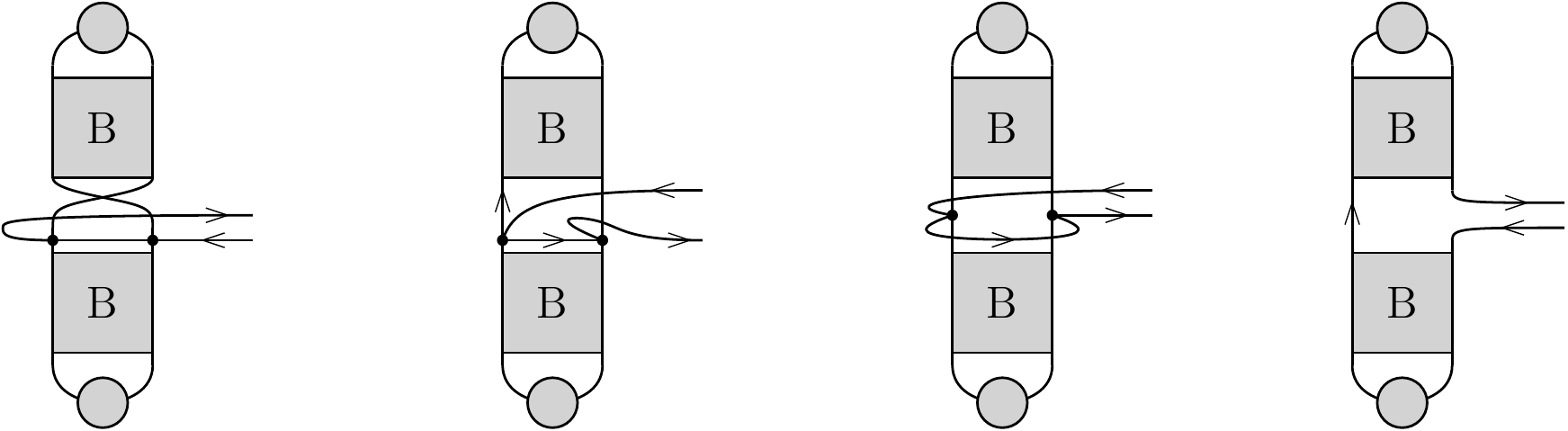}
\caption{\small Local branching structures that maximize the number of B-ladders, and therefore lead to dominant schemes. The twist on the leftmost figure has been introduced to comply with our embedding conventions.}\label{fig:HLstructure}
\end{figure}

The inductive construction used in the proof of the previous lemma immediately suggests how to saturate the upper bound on the number of B-vertices in a $\ell=0$ scheme; and therefore, how to obtain the dominant $\ell=0$ schemes. In fact, one must organize the B-vertices into a tree-like binary structure, whose leaves and vertices can be extracted from the combinatorial structures encountered previously. This is the purpose of the next proposition, which is the main result of this section.
\begin{proposition}\label{propo:dominant-schemes}
Let $S$ be a $\ell=0$ scheme of genus $g\geq1$. $S$ is dominant (i.e.\ $b(S)=2g-1$) if and only if it has the structure of a plane\footnote{We recall that a \emph{plane tree} is a tree embedded on the plane, i.e.\ a tree together with an ordering of the edges around each vertex, which has a distinguished valency-one vertex called the \emph{root}. In the present case, we emphasize that the trees associated with the dominant $\ell=0$ schemes are rooted because of the presence of a root-vertex.} binary tree with $b(S)=2g-1$ edges, $g$ leaves and $g-1$ inner vertices, such that
\begin{itemize}
    \item the root corresponds to the root-vertex:
\begin{center}
\includegraphics[scale=.6]{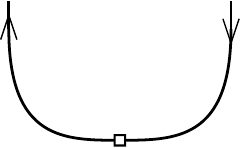}
\end{center}
    \item each edge corresponds to a B-vertex:
\begin{center}
\includegraphics[scale=.8]{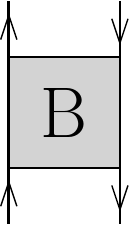}
\end{center}
    \item each leaf corresponds to one of the following types of $2$-point subgraphs:
\begin{center}
\includegraphics[scale=.6]{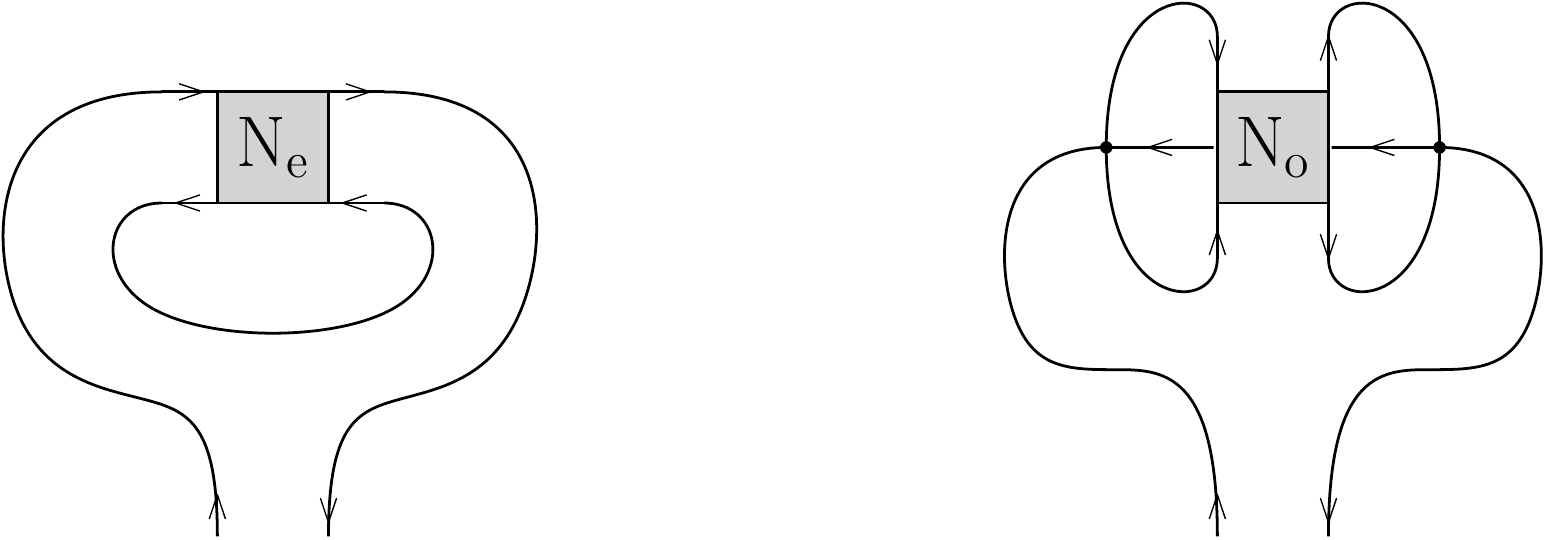}
\end{center}
    \item each inner vertex corresponds to one of the following types of $6$-point subgraphs:
\begin{figure}[H]
\centering
\includegraphics[scale=.75]{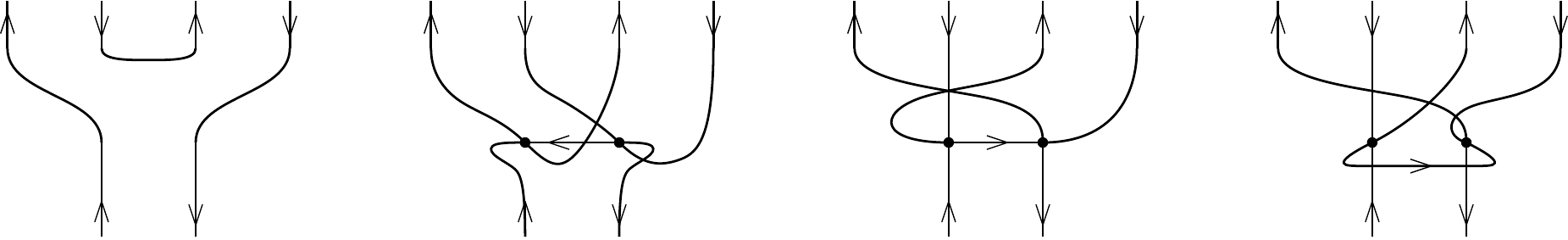}
\caption{}
\label{fig:tree_vertices}
\end{figure}
We refer to the leftmost structure as \emph{planar vertex}, and to the three others as \emph{contact vertices}. Note that we rely on an arbitrary but fixed convention for the embedding of the tree and its inner vertices, in which the root is in the tree component glued at the bottom.
\end{itemize} 
\end{proposition}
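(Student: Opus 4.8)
The statement is an ``if and only if'', and the natural strategy is induction on the genus $g$, building directly on the inductive machinery established in Lemma~\ref{lem:maxBvertices} and Theorem~\ref{thm:induction}. The base case $g=1$ is handled by inspection of Proposition~\ref{propo:g1}: the two $2$PI schemes of genus one are precisely the $2$-point subgraphs listed as the two right-most ``leaf'' structures (they have $b=2g-1=1$, one leaf, zero inner vertices), and conversely any plane binary tree with one edge and one leaf is one of these. For the inductive step, I would argue both directions separately.

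\textbf{($\Rightarrow$) Dominant schemes are decorated plane binary trees.} Assume $g\geq 2$ and let $S$ be dominant, so $b(S)=2g-1\geq 3$. As in the proof of Lemma~\ref{lem:maxBvertices}, pick a B-vertex not adjacent to the root (which exists since $b\geq 3$) and flip one of its sides. Since $S$ is dominant, the chain of inequalities in that proof must be saturated at every step; in particular the flip produces $H_R$ (a genuine $\ell=0$ scheme, automatically ladder-free because of the freshly added root-vertex) and $H_L$, which may contain a generated ladder. Saturation forces $H_L$ to be exactly one of the structures depicted in Figure~\ref{fig:HLstructure} — i.e.\ the generated ladder carries two B-vertices whose replacement yields a B-vertex in $\hat H_L$ — and forces $b(\hat H_L)=2g(H_L)-1$ and $b(H_R)=2g(H_R)-1$, with $g(H_L),g(H_R)\geq 1$ and $g(H_L)+g(H_R)=g$. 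By the induction hypothesis, $\hat H_L$ and $H_R$ are decorated plane binary trees of genus $g(H_L)$ and $g(H_R)$. Re-assembling: the chosen B-vertex becomes an internal edge of the tree, the local structure of Figure~\ref{fig:HLstructure} (after restoring orientations and embedding) becomes one of the four inner vertices of Figure~\ref{fig:tree_vertices} — the planar vertex versus the three contact vertices corresponding to the distinct embeddings/orientations listed there — and grafting the two subtrees below it produces a plane binary tree with $g$ leaves, $g-1$ inner vertices and $2g-1$ edges, rooted at the root-vertex as required. The only bookkeeping to check carefully is that the embedding conventions (the ``twist'' in Figure~\ref{fig:HLstructure}, and the convention that the root sits in the bottom component of each inner vertex) are compatible, so that the decorated tree is unambiguously determined.

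\textbf{($\Leftarrow$) Decorated plane binary trees are dominant schemes.} Conversely, given a plane binary tree with the prescribed decorations, one checks first that the resulting graph with ladder-vertices \emph{is} a scheme of vanishing grade — i.e.\ it is melon-free and ladder-free. Melon-freeness is clear from the explicit leaf/vertex structures (none contains an elementary melonic $2$-point subgraph); ladder-freeness follows because consecutive B-vertices along the tree are separated by an inner vertex of Figure~\ref{fig:tree_vertices}, which by construction breaks the rail structure needed to form a ladder (this is precisely the role of the contact vertices). The grade is $\ell=0$ and the genus is $g$: one computes these additively using the fact (Lemma~\ref{lem:2PR} / Eq.~\eqref{eq:LVcontSep}) that the genus and grade are additive under the flip/separating-B-vertex decomposition, so by induction each leaf contributes genus $\tfrac{1}{2}\cdot\ell=0$ with the genus-one leaves contributing $1$ each, and the inner vertices contribute $0$; there are $g$ leaves, giving genus $g$. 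Finally $b=2g-1$ B-vertices is exactly the maximum allowed by Lemma~\ref{lem:maxBvertices}, so the scheme is dominant. One should also verify that distinct decorated trees give distinct schemes (so the correspondence is genuinely one-to-one): this follows because the decomposition in the ($\Rightarrow$) direction is canonical once one fixes, at each step, the choice of B-vertex adjacent to the root — which the rooted-plane-tree structure pins down uniquely.

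\textbf{Main obstacle.} The delicate point is not the additivity of $g$ and $\ell$ (routine, via Section~\ref{sec:CombMoves}) but the \emph{rigidity} argument in the ($\Rightarrow$) direction: showing that saturation of the bound in Lemma~\ref{lem:maxBvertices} forces $H_L$ into exactly the shapes of Figure~\ref{fig:HLstructure} and, crucially, that no inner-vertex or leaf type other than those displayed in the Proposition can appear. This requires a careful case analysis of the ladder structures generated in Section~\ref{sec:schemesZeroGrade} (cases 2, 3, 4 there), keeping track of orientations and the planar embedding, to rule out all configurations that would produce fewer than $2g-1$ B-vertices. The embedding bookkeeping — matching the ``twist'' conventions and the placement of the root in the bottom component — is where the argument is most error-prone, and I would treat it with explicit local pictures rather than words.
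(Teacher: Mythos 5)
Your proposal is correct and follows essentially the same route as the paper's proof: induction on the genus via the flip decomposition and bound-saturation argument of Lemma~\ref{lem:maxBvertices}, with the leaves read off from Proposition~\ref{propo:g1} and the inner vertices from the structures of Figure~\ref{fig:HLstructure}. The only cosmetic difference is that you spell out the converse direction and injectivity a bit more explicitly, whereas the paper derives the inner-vertex catalogue by contracting all B-vertices and classifying the resulting genus-zero components (cycle graph versus two-vertex melon) and handles one-to-one-ness via the distinguishability of branches at each vertex.
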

\begin{proof} This is a direct consequence of the proof of Lemma \ref{lem:maxBvertices}. For the sake of clarity, let us be more exhaustive about the structure of the trees. 

Let $S$ be a dominant $\ell=0$ scheme of genus $g\geq1$; hence with $b=2g-1$ B-vertices using Lemma \ref{lem:maxBvertices}. By saturating the bound on the number of B-vertices at each inductive step of the proof of Lemma \ref{lem:maxBvertices} (so as to obtain dominant schemes), it is clear that $S$ has the structure of a binary tree (see the structure of $\tilde{H}_L$ in Figure \ref{fig:HLstructure} for instance).

At some step $i\in\{1,2, \ldots, g-1\}$ of the induction, there will necessarily be a dominant $\ell=0$ scheme $\hat{H}^{(i)}_L$ of genus one with a single B-vertex. Identifying the two possibilities in Proposition \ref{propo:g1}, one observes that the two edges of the root-vertex are adjacent to one side of the B-vertex. Rewinding the induction up to $S$, this shows that the root of the tree is of the stated form.

Besides, the induction of Lemma \ref{lem:maxBvertices} performs a flip on one side of a B-vertex at each inductive step. As a result, the edges of the binary tree associated with $S$ correspond to B-vertices by construction.

Next, the structure of the leaves of the rooted binary tree associated with $S$ also follows from the induction in the proof of Lemma \ref{lem:maxBvertices}. Indeed, applying the corresponding arguments while saturating the bound on the number of B-vertices at each inductive step yields, starting from $S$, $g$ dominant $\ell=0$ schemes of genus one. Again, identifying the two possibilities in Proposition \ref{propo:g1} and rewinding the induction up to $S$ shows that the $g$ leaves of the tree associated with $S$ are of one of the two stated types.

Finally, let us study the inner vertices of the rooted binary tree associated with $S$. Their structure is also a consequence of the inductive construction of the dominant schemes deduced from the proof of Lemma \ref{lem:maxBvertices}, and in particular from the structure of $\tilde{H}_L$ in Figure \ref{fig:HLstructure}. In fact, a convenient way of analyzing the inner vertices can be obtained as follows. 
Starting from $S$ and using its tree-like structure described above, one can contract all the B-vertices originally present in $S$, one by one starting from the bottom of the tree. That is, we first contract the B-vertex closest to the root of the tree, then the B-vertex closest to the newly added root, and continue iteratively until all the B-vertices have been contracted. This amounts to deleting all the edges of the corresponding plane binary tree.
Several connected components are generated in the process, including: 1) the rooted cycle graph, which corresponds to the root of the tree; 2) $g$ 
2PI $\ell=0$ schemes of genus one, which correspond to the leaves of the tree; and 3) $g-1$ rooted connected $\ell=0$ Feynman graphs of genus zero, which correspond to the inner vertices of the tree. One can further verify (see Figure \ref{fig:HLstructure} for instance) that these $g$ rooted connected $\ell=0$ Feynman graphs of genus zero either correspond to the rooted cycle graph or the melonic rooted Feynman graph with two standard vertices. In the second case, the root-vertex marks the edge that corresponds, on the tree associated with $S$, to the edge that belongs to the (only) path connecting the inner vertex to the root of the tree. From this point of view, one can deduce the four types of inner vertices or contact 6-point subgraphs. The first planar type shown on the left of Figure \ref{fig:tree_vertices} in the Proposition \ref{propo:dominant-schemes} is obtained from the rooted cycle graph, while the other contact types are obtained from the melonic rooted Feynman graph with two standard vertices by opening up the root-vertex (which is glued to the tree component containing the root), together with two out of its three remaining edges, as illustrated in the following figure:
\begin{center}
    \includegraphics[scale=.8]{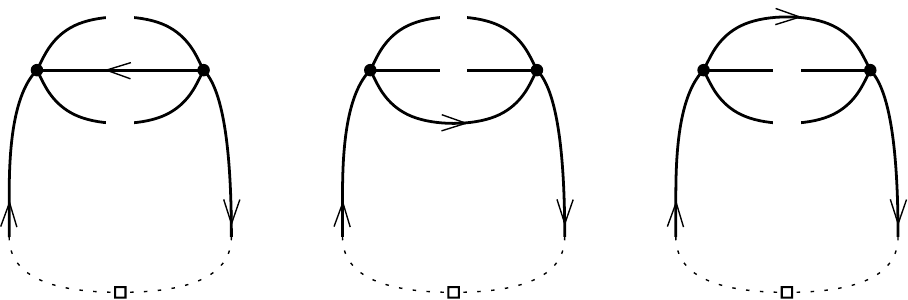}
\end{center}
By fixing the plane embedding of the tree and its inner vertices, we obtain the three remaining types of inner vertices in Figure \ref{fig:tree_vertices}. A simple way of distinguishing these three contact vertices is to examine what type of dipole is generated if one closes the two half-edges connected to the root onto themselves: from left to right, we obtain a N-, R-, or L-dipole, respectively. Likewise, for each of these cases, closing the pair of half-edges on the top-right or top-left corners yields two different types of dipoles. 

The last point shows that two branches of a tree emanating from a contact vertex can be unambiguously distinguished, and therefore ordered. Since this is also true for the planar vertex, which is by convention embedded in a clockwise manner, we conclude that our mapping is one-to-one provided that we work with plane trees. 
\end{proof}

\begin{figure}[htb]
\centering
\includegraphics[scale=.5]{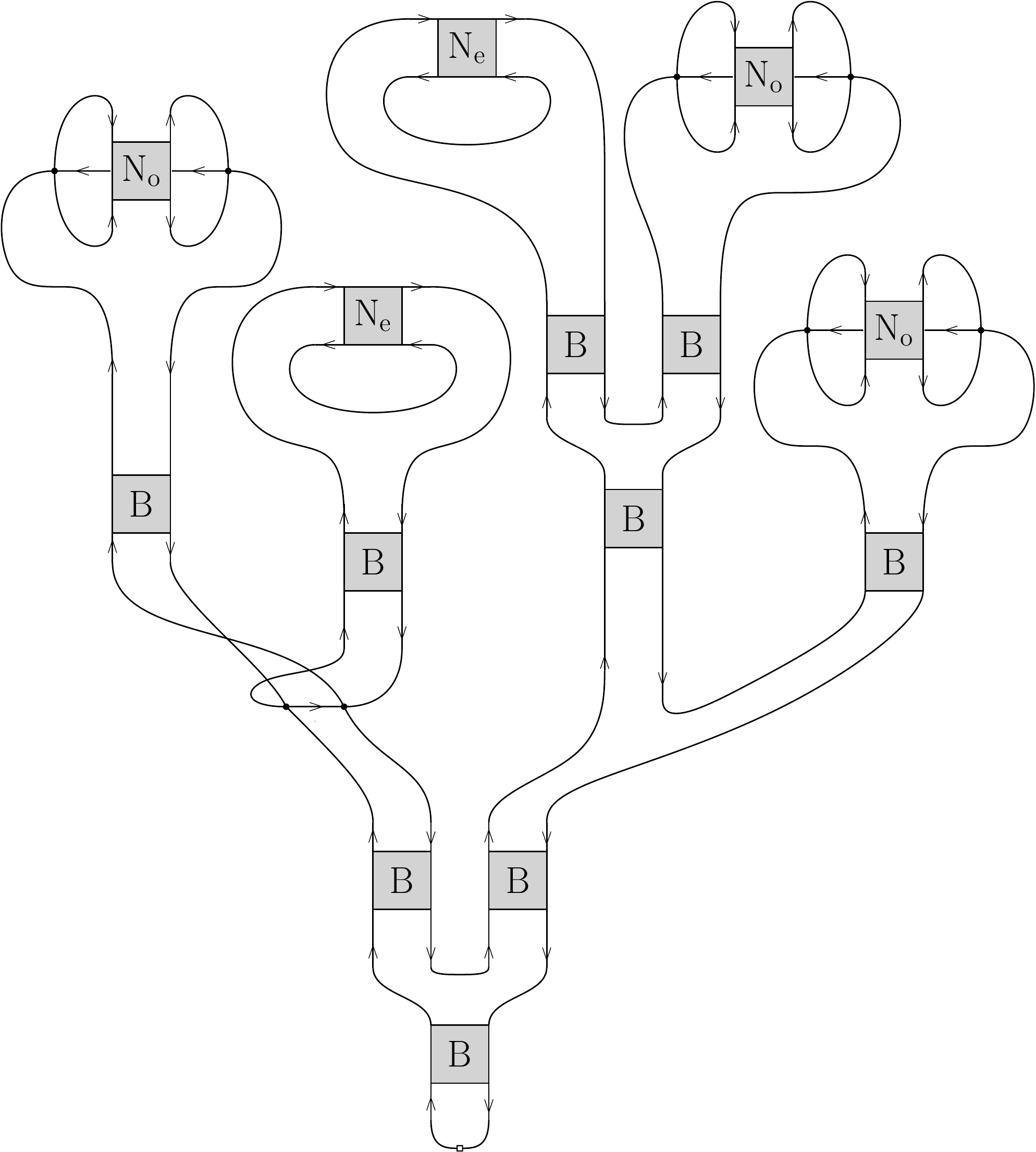}
\caption{\small A dominant scheme of genus $g=5$. It has the structure of a rooted binary tree with: $g=5$ leaves, $g-1 =4$ inner vertices, and $b=2g-1=9$ edges.}\label{fig:example_dominant}
\end{figure}

An example of plane binary tree associated with a dominant $\ell=0$ schemes of genus $g=5$ is given in Figure \ref{fig:example_dominant}. We note that the characterization of dominant schemes in our model is reminiscent of the one derived in \cite{GurSch} and \cite{Fusy:2014rba}. In \cite{GurSch}, the characterization is obtained in the context of colored tensor models in dimension three, which do not directly generate the same types of Feynman graphs (they in fact form a subfamily of the Feynman graphs studied in this paper). It is however similar to our characterization because in their case, the Gurau degree is an integer like the genus. On the other hand, the Feynman graphs of the tensor model studied in \cite{Fusy:2014rba} are the same as in our model. However, in that paper, the characterization of the dominant schemes is performed at fixed degree $\omega$, with no constraint on the grade $\ell$. In contrast, our double-scaling limit sets $\ell=0$, so that the degree $\omega$ reduces to the genus $g$ (see Eq.\ \eqref{eq:omega}). As a result, one can verify that all the dominant schemes identified in \cite{Fusy:2014rba} cannot contribute in our double-scaling limit and the relevant dominant schemes become the ones identified in Proposition \ref{propo:dominant-schemes}.

\subsection{Resummation and triple-scaling limit}

In this subsection, we first provide a direct resummation of the dominant $\ell=0$ schemes of fixed genus $g$. We then investigate a triple-scaling limit in which all genera can be resummed non-perturbatively.

From Eq.\ \eqref{eq:generating-scheme}, the generating function associated to a dominant $\ell=0$ scheme $S$ of genus $g\geq1$ takes the form: 
\be
\hat\cG_{S}(u)= u^{c_6+n_o} \cC_{\Ne}(u)^{n_e} \cC_{\No}(u)^{n_o} \cC_{\mathrm{B}}(u)^{2g-1} \, ,
\ee
where $c_6$ is the number of contact six-point functions, and we have used the fact that leaves with $\mathrm{N_o}$-vertices, as well as contact six-point functions, carry an extra factor of $u$ from their associated standard vertices.

As $\lambda$ approaches the critical value $\lambda_c$, the generating function of broken ladder-vertices picks up the following singular behavior, obtained by replacing $u$ with $U(\lambda) = T(\lambda)-1$ in \eqref{eq:nb} and using \eqref{eq:singT}:
\be
\cC_{\mathrm{B}}(U(\lambda)) \sim \left( \sqrt{\frac{8}{3}}\sqrt{1 - \frac{\lambda^2}{\lambda_c^2}} \right)^{-1}\,,
\ee
which is proportional to the inverse of the singular part of $T(\lambda)$. 
Therefore, the most singular contribution of $S$ to the sum \eqref{eq:sum_schemes_g} is:
\begin{align}
\cG_S^{\mathrm{(sing)}}(\lambda) &:= T(\lambda_c) {u_c}^{c_6+n_o} \cC_{\Ne}(u_c)^{n_e} \cC_{\No}(u_c)^{n_o} \left( \sqrt{\frac{8}{3}}\sqrt{1 - \frac{\lambda^2}{\lambda_c^2}} \right)^{1-2g} \\
&= {u_c}^{c_6+ 2 n_o} \frac{4}{3} \frac{1}{8^g} \left( \sqrt{\frac{8}{3}} \sqrt{1 - \frac{\lambda^2}{\lambda_c^2}} \right)^{1-2g}\,,
\end{align}
where we have used in the second equality that $n_e + n_o = g$.

Let us denote by $\cD_g$ the set of dominant schemes of genus $g$. Each element of $\cD_g$ is associated to a rooted binary plane tree with $g$ leaves. The number of such trees is given by the $(g-1)$-th Catalan number $\cT_g := \frac{1}{2g-1}\binom{2g-1}{g-1}$. Summing over all trees, we therefore obtain:
\begin{align}
\sum_{S \in \cD_g} \cG_S^{\mathrm{(sing)}}(\lambda) 
&= \frac{4}{3} \frac{\cT_g }{8^g} \left( \sqrt{\frac{8}{3}} \sqrt{1 - \frac{\lambda^2}{\lambda_c^2}} \right)^{1-2g} \left(1+ u_c^2 \right)^{g} \left(1+ 3 u_c \right)^{g-1} \\ 
&= \frac{2}{3} \sqrt{\frac{8}{3}} \cT_g \left( \frac{5}{48} \right)^g \left( \sqrt{1 - \frac{\lambda^2}{\lambda_c^2}} \right)^{1-2g}\,.
\end{align}
To justify the first line of this equation, first note that since each tree is planar, one can fix a canonical ordering of its vertices and leaves (for instance, by going around the tree in a clockwise manner starting from the root). The first three factors are common to all trees, while the last two depend on the decoration of its leaves and vertices. Each of the $g$ ordered leaves is either even (which brings no extra factor) or odd (which brings a factor $u_c^2$), and therefore contributes a factor $(1+ u_c^2)$. Similarly, each of the $g-1$ vertices is either planar (which brings no extra factor) or one of three contact interactions (each bringing a factor $u_c$), resulting in a factor $(1+3 u_c)$ per vertex.

Since the multi-matrix model in the double-scaling limit weighs any graph of genus $g$ with a factor $M^{2-2g}$ (see Eq.~\eqref{eq:2pt-1/M}), we can define a triple-scaling limit by sending $M$ to infinity and $\lambda$ to $\lambda_c$ while keeping the following ratio finite:
\be
\k^{-1} := M \left(1-\f{\l^2}{\l_c^2}\right)^{1/2} \,.
\ee
In this limit, we have
\be
\frac{\kappa}{M} \left( \cG^{(0)} (\lambda) - M^2 T(\lambda) \right) \sim \frac{2}{3} \sqrt{\frac{8}{3}} \sum_{g \geq 1} \cT_g \left( \frac{5}{48} \right)^g \kappa^{2g} =: \cD(\kappa) \,.
\ee
$\cD(\kappa)$ governs the deviation of the two-point function from its leading melonic behavior. A vanishing value of $\kappa$ corresponds to the purely melonic theory, obtained by first taking the $M \to + \infty$ limit, before sending $\lambda$ to its critical value. On the other hand, when $\kappa \neq 0$, $\cD(\kappa)$ does not vanish anymore, which means that the full two-point function deviates from $T(\lambda)$. Using the fact that $\underset{g\geq 1}{\sum} \cT_g x^g= x c(x)$, where $c(x) = (1- \sqrt{1 - 4x})/(2x)$
is the generating function of Catalan numbers, we can resum $\cD(\kappa)$ explicitly for $\kappa < \kappa_c := 2 \sqrt{\frac{3}{5}}$:
\be
\cD(\kappa) = \left(\frac{2}{3}\right)^{\f32} \left( 1 -  \sqrt{1 - \frac{5}{12} \kappa^2}\right)\,.
\ee
Near the critical value $\kappa_c$, the fluctuations are dominated by schemes of unbounded genus, or equivalently by large trees. In particular, the expectation value of the genus in the ensemble defined by $\cD(\kappa)$ diverges at the critical value:
\be
\langle g \rangle = \frac{1}{2} \kappa \partial_\kappa \ln\cD(\kappa) \simeq \frac{1}{2 \sqrt{1 - \kappa^2 / \kappa_c^2 } }\,.
\ee

We conclude this section by noting that the triple-scaling regime we have just derived shares its main features with the double-scaling limit of the multi-orientable tensor model \cite{Gurau:2015tua}. This should not overly surprise us, owing to the close combinatorial similarities between both models.

\section{2PI generating function}
\label{sec:2PI}

In the previous section, we focused on dominant $\ell=0$ schemes because they determine the critical behavior of the generating function for $\ell=0$ Feynman graphs of genus $g$. In particular, we have demonstrated that this class of $\ell=0$ schemes can be mapped to rooted binary trees. As a result, our model converges in the continuum limit to a branched-polymer phase \cite{Ambjorn:1990wp, Bialas:1996ya}, the same universality class of random geometry found in the standard melonic regime of tensor models \cite{melbp}. 

In this section, we study another class of $\ell=0$ schemes, which exhibits a richer structure and thus leads to a richer continuum limit. In view of the pole structure of the generating functions of ladder-vertices constructed in section \ref{sec:2point}, it is tempting to construct a model dominated by a critical value $u_c = 1$, instead of $u_c =1/3$. This can be achieved by making sure that neither melon subgraphs nor B-ladders can contribute to the $\ell = 0$ sector. Interestingly, by Lemma~\ref{lem:schemes2Pi}, Proposition \ref{propo:g1} and Theorem \ref{thm:induction}, both of these structures have in common that they can only occur in 2PR $\ell = 0$ graphs. Hence, we can eliminate them by restricting the sum over Feynman graphs to 2PI contributions. 

The generating function of 2PI graphs can be related to the original multi-matrix model as follows.
First, we modify the action \eqref{eq:actionND} by introducing a quadratic ``counterterm'', or a source for the quadratic invariant:
\be \label{eq:actionND-2PI}
S[X,X^\dagger;m] = ND \Bigl( (1-m) \Tr\bigl[ X^\dagger_\m X_\m \bigr] - \frac{\l}{2}\sqrt{D} \, \Tr\bigl[ X^\dagger_\m X_\n X^\dagger_\m X_\n \bigr] \Bigr)\,.
\ee
Next, we choose the parameter $m=m(\l)$, with $m(0)=0$, such that the full two-point function of the modified model is $\l$-independent, and in particular it coincides with the two-point function of the free theory (with $m=\l=0$):
\begin{equation} \label{eq:2PI-cond}
\left\la \Tr\left[ X^\dagger_\m X_\m \right]\right\ra_{m(\l)}
 = N \,, 
\end{equation}
where
\be \label{eq:2pt_m-def}
\left\la \Tr\left[ X^\dagger_\m X_\m \right]\right\ra_m 
 = \frac{ \int [dX] \, e^{-S[X,X^\dagger;m]}\Tr\left[ X^\dagger_\m X_\m \right] }{\int [dX] \, e^{-S[X,X^\dagger;m]}} = \f{1}{ND} \f{\p\cF(\l;m)}{\p m}\,,
\ee 
and we introduced also a modified free energy $\cF(\l;m)$, defined analogously to Eq.~\eqref{eq:free-en-def}.
We claim that $m(\l)$ is the generating function of rooted 2PI vacuum graphs  of \eqref{eq:actionND}, with free propagators on the edges.
In order to see that, we can follow the formalism of the 2PI effective action (see for example \cite{Cornwall:1974vz,Berges:2004yj}, and \cite{Benedetti:2018goh} for its application in tensor models). We perform a Legendre transform of the modified free energy with respect to $m$, thus defining an effective action $\G(\l;G)$: 
\begin{equation} \label{eq:2PI-Legendre}
    \G(\l;G) = - \cF(\l;\tilde{m}(\l;G)) +N^2 D\, G \, \tilde{m}(\l;G) \,,
\end{equation}
where $\tilde{m}(\l;G)$ is defined as the solution of
\be
 \f{\p\cF(\l;m)}{\p m}= N^2 D\,G \,.
\ee
Notice that this is equivalent to choosing $m$ such that $\left\la \Tr\left[ X^\dagger_\m X_\m \right]\right\ra_m =N G$.

By deriving \eqref{eq:2PI-Legendre} with respect to $G$ we also find:
\be 
 \f{\p\G(\l;G)}{\p G}= N^2 D\, \tilde{m}(\l;G) \,,
\ee
while following a computation similar to the one in \cite{Cornwall:1974vz,Berges:2004yj,Benedetti:2018goh}, the effective action takes the form:
\be
\G(\l;G) = N^2 D\,( G -  \ln G ) + \G^{\mathrm{2PI}}(\l;G) \,,
\ee
where $\G^{\mathrm{2PI}}(\l;G)$ is the sum of vacuum 2PI Feynman diagrams of the action \eqref{eq:actionND-2PI} with the substitution $(1-m)\to G^{-1}$.
Combining the last two equations we obtain:
\begin{equation}
   1 - G^{-1} +\f{1}{N^2 D} \f{\p\G^{\mathrm{2PI}}(\l;G)}{\p G} = \tilde{m}(\l;G) \,.
\end{equation}
Setting $G=1$ and $\tilde{m}(\l;1)\equiv m(\l)$, we obtain the claimed result, because deriving $\G^{\mathrm{2PI}}(\l;G)$ with respect to $G$ is equivalent to marking an edge in its diagrams. 

In the double-scaling limit,  similarly to Sec.~\ref{sec:2point}, we define the generating function of rooted 2PI Feynman graphs as:
\be
\cG^{(0)}_{\mathrm{2PI}}(\l) \equiv \lim_{ \substack{N,D\to\infty \\ M<\infty} } \f{N^2}{D} m(\l) = \sum_{g\in \mathbb{N}} \cG_{g}^{\mathrm{2PI}}(\lambda) M^{2-2g}\,,%
 \label{}
\ee
where
\be \label{eq:2PI-G_g}
\cG_{g}^{\mathrm{2PI}}(\lambda) = \sum_{S \in \cS_{g}^\mathrm{2PI}}  \, \hat\cG_{S}(\lambda^2) \, ,
\ee
in which $\cS_g^{\mathrm{2PI}}$ is the set of 2PI schemes of  genus $g$ and vanishing grade. 
Note that, since melon subgraphs do not contribute, propagators have weight $1$ rather than $T(\lambda)$, and consistently, the counting variable $u$ has been substituted by $\lambda^2$ instead of $\lambda^2 T(\lambda)^4$. 

By construction, the schemes that determine the singular behavior of $\cG_g^{\mathrm{2PI}}(\lambda)$ are distinct from the dominant schemes defined in Section \ref{sec:dominantSchemes}, since the latter are all 2PR. 
But we can reason similarly. The only ladder-vertices allowed in $\ell = 0$ 2PI schemes are of type $\Ne$ and $\No$, and from equations \eqref{eq:ne} and \eqref{eq:no}, both have a dominant simple pole at $u=1$. Hence, the genus $g$ and $\ell = 0$ schemes that govern the most singular part of $\cG_g^{\mathrm{2PI}}(\lambda)$ are those that maximize the number of N-vertices. In the same spirit as in Section \ref{sec:dominantSchemes}, we will therefore say that a 2PI scheme is \emph{2PI-dominant} if it contains a maximal number of N-vertices allowed by its genus.

We now proceed with the combinatorial characterization of this new family of schemes, before analyzing their properties in the continuum limit. 

\subsection{2PI-dominant schemes}\label{sec:combinatorics_2PI}

As a first step, we want to determine the maximal number of N-vertices in a 2PI $\ell=0$ scheme as a function of its genus. Remark that because we deal with 2PI $\ell=0$ schemes, they cannot contain any separating ladder-vertex (they would be 2PR otherwise) and therefore, by Lemma \ref{lem:sep-conn}, any ladder-vertex corresponds to a connecting N-vertex. We denote by $n(S)=n_e(S)+n_o(S)$ the number of N-vertices (even or odd) in a given 2PI $\ell=0$ scheme $S$. Besides, we say that two N-vertices are \emph{separated by the root-vertex} if they are in one of the two configurations shown in Figure \ref{fig:separated-root}.
\begin{figure}[htb]
\centering
\includegraphics[scale=.6]{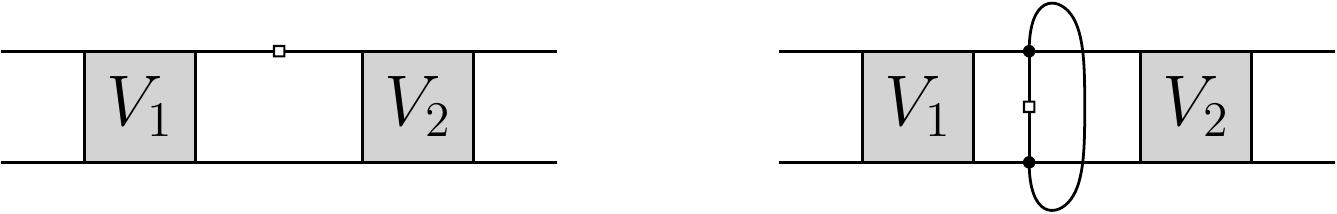}
\caption{\small Two N-vertices $V_1$ and $V_2$ \emph{separated by the root-vertex}.}\label{fig:separated-root}
\end{figure}

\begin{lemma}\label{lem:bound-cycles}
Let $S$ be a 2PI $\ell=0$ scheme of genus $g\geq1$ with $n$ N-vertices. Then, $n\leq3g-2$. Furthermore, if $g>1$ and no two N-vertices are separated by the root-vertex, then $n \leq 3g -3$.
\end{lemma}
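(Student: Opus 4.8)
The strategy is an induction on the genus $g$, entirely parallel to the proof of Lemma~\ref{lem:maxBvertices}, but now flipping N-vertices instead of B-vertices. For the base case $g=1$, Proposition~\ref{propo:g1} lists the two 2PI schemes of genus one, and one checks directly that each contains at most $n\leq 1 = 3g-2$ N-vertices; the second clause is vacuous at $g=1$. For the inductive step with $g>1$, I would first dispose of the trivial case $n\leq 1$, and otherwise pick an N-vertex $V$ that is \emph{not} adjacent to the root-vertex (such a $V$ exists whenever there are at least two N-vertices not separated by the root; I will need to treat the borderline configurations carefully, see below). Since $S$ is 2PI and $\ell=0$, every ladder-vertex is a connecting N-vertex by Lemma~\ref{lem:sep-conn}, hence non-separating; I would instead exploit that $V$ sits on a two-edge-cut and perform a \emph{flip} on one side of it, as in Lemma~\ref{lem:2PR}. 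This splits $S$ into two $\ell=0$ graphs with ladder-vertices $H_L$ and $H_R$ with $g(H_L)+g(H_R)=g$, both melon-free by the same argument as in Theorem~\ref{thm:induction}, and with $g(H_L),g(H_R)\geq 1$ because $V$ is not adjacent to the root, hence $g(H_L),g(H_R)<g$.

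As in Lemma~\ref{lem:maxBvertices}, the graph on the side not containing the original root (say $H_R$) acquires a new root-vertex adjacent to what was $V$, so it is automatically ladder-free and is therefore a 2PI $\ell=0$ scheme of genus $<g$; the induction hypothesis gives $n(H_R)\leq 3g(H_R)-2$. On the other side, $H_L$ may contain a ladder $\cL$ through the reconnected edge, exactly the situation analyzed in Section~\ref{sec:schemesZeroGrade} (case~2); replacing $\cL$ by a ladder-vertex yields a genuine 2PI $\ell=0$ scheme $\hat H_L$ with $g(\hat H_L)=g(H_L)$, and this absorbs at most two N-vertices of $H_L$ into a single N-vertex, so $n(H_L)\leq n(\hat H_L)+1\leq 3g(H_L)-1$. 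Combining, $n(S)=n(H_L)+n(H_R)\leq (3g(H_L)-1)+(3g(H_R)-2)=3g-3<3g-2$, which is strictly better than the first bound. This shows that the first bound $n\leq 3g-2$ can only be approached when every flip as above is forced to produce $g(H_L)=0$ or $g(H_R)=0$ — which, just as in Proposition~\ref{propo:g1}, happens precisely when the two edges of the root-vertex attach to one side of the chosen N-vertex, i.e.\ when there are two N-vertices separated by the root-vertex. Turning this around: if no two N-vertices are separated by the root, then for $g>1$ we always land in the strict case and obtain $n\leq 3g-3$, which is the second clause. The first clause $n\leq 3g-2$ for general 2PI $\ell=0$ schemes then follows by allowing at most one such ``root-separated'' pair to sit at the top of the tree-like decomposition, contributing the extra $+1$; more precisely, one extracts the (at most one) N-vertex whose flip produces the rooted cycle graph on one side and a 2PI $\ell=0$ scheme of genus $g$ satisfying the no-root-separation hypothesis on the other, and applies the $3g-3$ bound to the latter, then adds back the extracted vertex for $3g-2$.

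The main obstacle I anticipate is the careful bookkeeping of the ``root-separated'' configurations: I need to show (i) that when no two N-vertices are separated by the root, there genuinely always exists an N-vertex not adjacent to the root on which a flip yields two positive-genus pieces — equivalently, that one cannot have all N-vertices ``hugging'' the root without separating a pair — and (ii) that at most one root-separated pair can occur, so that the $+1$ relative to the no-separation bound is not overcounted across the recursion. Point (ii) is essentially the statement that once the recursion strips off a root-separated N-vertex, the remaining scheme has no root-separated pair (its root has been moved), mirroring the argument in the proof of Theorem~\ref{thm:induction}\,(ii) that $G_2$ is forced to be 2PI; point (i) should follow from the connectedness and 2PI structure together with the fact that an N-vertex adjacent to the root whose flip trivializes one side is exactly one member of a root-separated pair. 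Everything else is a routine transcription of the genus-additivity identities \eqref{eq:LVcontSep}, \eqref{eq:flipOp} and the ladder-absorption analysis of Section~\ref{sec:schemesZeroGrade}, with ``B-vertex'' replaced by ``N-vertex'' throughout.
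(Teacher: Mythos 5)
Your plan transplants the induction of Lemma~\ref{lem:maxBvertices} wholesale, but that induction hinges on B-vertices being \emph{separating}: the two edges on one side of a separating B-vertex form a two-edge-cut, so the flip genuinely splits the graph into two components $H_L$, $H_R$ with additive genus. In a 2PI $\ell=0$ scheme the situation is exactly the opposite. By Lemma~\ref{lem:sep-conn} every ladder-vertex is a \emph{connecting} (hence non-separating) N-vertex, and by the very definition of 2PI the scheme contains no two-edge-cut at all (away from the root). So the two edges on one side of your chosen $V$ do \emph{not} disconnect $S$ when cut; the ``flip'' produces a single connected graph, there are no components $H_L$ and $H_R$, and the identity $g(H_L)+g(H_R)=g$ from Lemma~\ref{lem:2PR} is unavailable. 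The entire inductive step, and with it the bookkeeping of root-separated pairs built on top of it, collapses at this point. (A secondary issue: even if a decomposition existed, ``ladder-free'' would not make the pieces 2PI, so your induction hypothesis would not apply to them.)

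The paper's proof is of a genuinely different nature, and this is forced by the above obstruction. It regards each connecting N-vertex as a simple closed curve drawn on the genus-$g$ surface encoded by $S$, and shows these curves are pairwise disjoint (maximality of ladders), non-separating hence non-contractible (2PI plus $\ell=0$), and pairwise non-homotopic except possibly for one pair separated by the root-vertex (this last step does use a contraction argument, showing that two homotopic N-vertices would force either a ladder, a 2PR structure, or the root-separated configuration). The bound $3g-3$ is then the classical maximal number of disjoint, non-contractible, pairwise non-homotopic simple closed curves on an orientable genus-$g$ surface (a pants decomposition), and the $+1$ for the single admissible root-separated pair gives $3g-2$. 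If you want to salvage a purely combinatorial induction you would need a different move than the flip --- e.g.\ contracting a connecting N-vertex, which keeps the graph connected and lowers the genus by one --- but then the count of N-vertices per unit of genus does not close naively, which is presumably why the authors resorted to the topological argument.
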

\begin{proof}
All the 2PI $\ell=0$ schemes of genus $g=1$ are identified in Proposition \ref{propo:g1}. In particular, we see explicitly that $n=1$.

Let us therefore assume that $g>1$. To prove the upper bounds on the number $n$ of N-vertices in $S$, it is convenient to adopt a topological point of view, and regard the N-vertices in $S$ as (equivalence classes of) simple loops drawn on the (discretized) Riemann surface represented by $S$  (see for instance \cite{hatcher1980presentation, mohar2001graphs}). This point of view was already advocated in Remark \ref{rem:surfaces} for N-dipoles, and naturally extends to N-vertices. Indeed, the latter correspond to ladders of N-dipoles, and all the length-two $\oD$-loops in a N-ladder belong to the same homotopy class. In this topological language, we prove that N-vertices in $S$, regarded as simple loops, are:
\begin{enumerate}
\item pairwise disjoint;
\item non-separating (hence, non-contractible);
\item pairwise non-homotopic, unless possibly if they are separated by the root-vertex.
\end{enumerate}

The first property is ensured by the fact that ladder-vertices in $S$ represent maximal ladders, and maximal ladders are pairwise vertex-disjoint (see Claim \ref{claim:vertex-disjoint}).

The second property is a direct consequence of the fact that $S$ is a 2PI $\ell=0$ scheme and therefore only contains non-separating (more precisely connecting) N-vertices. The contraction of any N-vertex in $S$ thus preserves the connectedness of the corresponding discretized Riemann surface. Besides, a non-separating simple loop is necessarily non-contractible.

To prove the third property, let us assume that there exist two homotopic N-vertices $V_1$ and $V_2$ in $S$. Since any N-vertex in $S$ is non-separating, it implies that $S$ has the structure represented on the left panel of Figure \ref{fig:non-homotopic}, where $\tilde{H}_0$ and $\tilde{H}_1$ are $4$-point subgraphs and one of them (say $\tilde{H}_0$) has the topology of a $2$-sphere (with two punctures). Contracting $V_1$ and $V_2$ then yields two $\ell=0$ Feynman graphs with ladder-vertices: $H_0$ of genus $0$, and $H_1$ of genus $g-1$. If $\tilde{H}_0$ does not contain the root-vertex of $S$, then $H_0$ must be melon-free. Indeed, contracting (say) $V_1$ first cannot create a melonic subgraph; otherwise, $S$ would contain a ladder. Then, contracting $V_2$ does not create a melonic subgraph in $H_0$ either, because a root-vertex is added in the middle of the edge that closes onto $H_0$ after the contraction of $V_2$.
As a result, $H_0$ must be the rooted cycle graph. But then, it means that $V_1$ and $V_2$ form a ladder in $S$, which yields a contradiction. 
On the other hand, if $\tilde{H}_0$ contains the root-vertex of $S$, we have two cases. 
If $H_0$ is melon-free, it must correspond to the rooted cycle graph. But then, $V_1$ and $V_2$ are separated by the root-vertex in $S$ (case on the left of Figure \ref{fig:separated-root}). 
Otherwise, $H_0$ is a melonic rooted Feynman graph with at least two standard vertices. It is straightforward to check that if $H_0$ contains more than two standard vertices, then $S$ must either be 2PR or contain a ladder, leading to a contradiction.
$H_0$ must therefore be the melonic rooted Feynman graph with two vertices, meaning once again that $V_1$ and $V_2$ are separated by the root-vertex (case on the right of Figure \ref{fig:separated-root}).
\begin{figure}[htb]
\centering
\includegraphics[scale=.5]{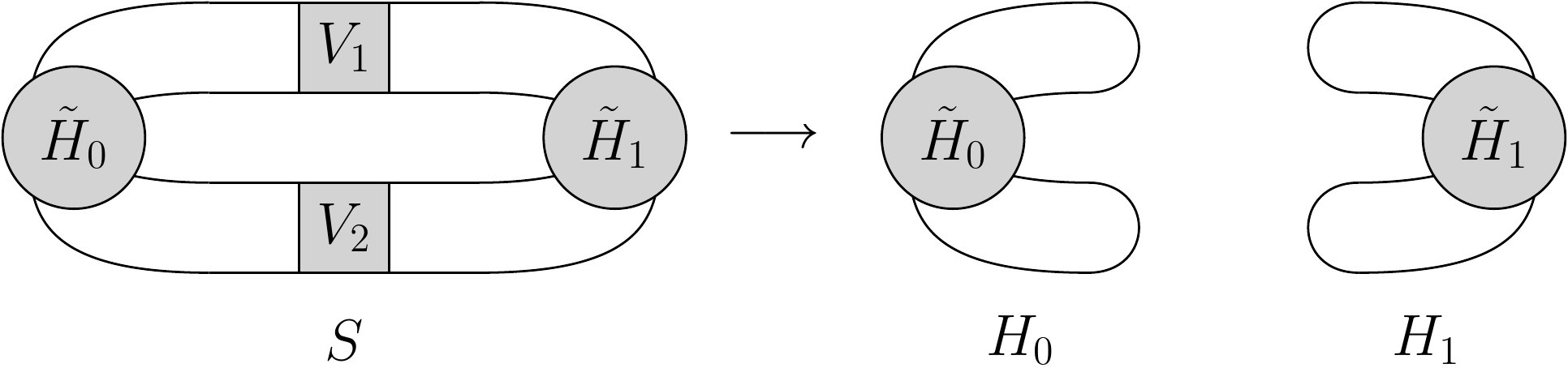}
\caption{\small A 2PI $\ell=0$ scheme with two homotopic N-vertices $V_1$ and $V_2$ (left panel): after contraction of $V_1$ and $V_2$ (right panel), one of the two resulting Feynman graphs (say $H_0$) must have vanishing genus.}
\label{fig:non-homotopic}
\end{figure}

Now, it is well known that, on an orientable Riemann surface of genus $g>1$, one can draw a maximum of $3g - 3$ pairwise disjoint, non-contractible and pairwise non-homotopic simple loops (see e.g. Proposition 4.2.6.\ in \cite{mohar2001graphs}). Since at most one pair of N-vertices can be separated by the root-vertex, in which case they are homotopic, this leads to $n \leq 3g - 3 + 1 = 3g -2$.     
\end{proof}

\begin{figure}[htb]
\centering
\includegraphics[scale=.5]{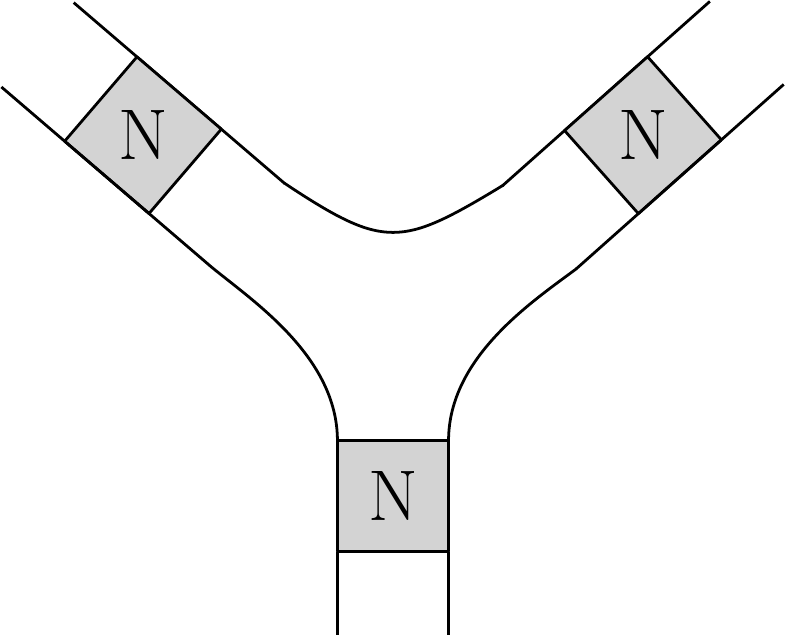}
\caption{\small Combinatorial structure of a planar 6-point subgraph connecting a triplet of N-vertices in a 2PI-dominant scheme. Depending on the type of N-vertices meeting at such a vertex ($\Ne$ or $\No$), twists may need to be added to respect our embedding conventions.}\label{fig:6-point}
\end{figure}

 One can prove that the bounds of Lemma \ref{lem:bound-cycles} are tight, and as a result, that the 2PI-dominant schemes are characterized by $n(S) = 3 g(S) -2$. We postpone this discussion to the next subsection, which will provide an explicit description of all 2PI-dominant schemes.

\begin{proposition}\label{propo:dominant-2PI}
Let $S$ be a 2PI $\ell=0$ scheme of genus $g>1$. $S$ is 2PI-dominant (i.e.\ $n(S) = 3g -2$) if and only if it has the following structure:
\begin{itemize}
    \item two N-vertices in $S$ are separated by the root-vertex, as in Figure \ref{fig:separated-root};
    \item the remaining N-vertices in $S$ are connected by the planar 6-point subgraph shown in Figure \ref{fig:6-point}. 
\end{itemize}
\end{proposition}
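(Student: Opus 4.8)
The plan is to prove both implications by exploiting the topological picture of N-vertices as simple loops on the genus-$g$ surface, which was already set up in the proof of Lemma~\ref{lem:bound-cycles}. For the \emph{if} direction, I would start from a scheme $S$ built according to the stated recipe --- two N-vertices separated by the root-vertex, all others joined by planar 6-point subgraphs as in Figure~\ref{fig:6-point} --- and simply count: if there are $c_6$ planar 6-point vertices and $n$ N-vertices, the Euler-characteristic bookkeeping (using Eqs.~\eqref{eq:g}, \eqref{eq:ell}) should give a relation tying $g$, $n$ and $c_6$ together. Since each 6-point vertex is trivalent in the ``N-vertex graph'' and the whole configuration must be connected with the right topology, one expects $n = 3g-2$ to come out exactly, confirming that any scheme of this shape is 2PI-dominant. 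I would also need to check such a scheme genuinely is a scheme (melon-free and ladder-free): melon-freeness and ladder-freeness follow because the 6-point subgraphs are planar and no two N-vertices sharing a 6-point vertex can be homotopic, so no ladder is formed; and the $\ell=0$ condition must be verified to be compatible with the embedding conventions (possibly inserting the twists alluded to in Figure~\ref{fig:6-point}).

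For the \emph{only if} direction --- the substantive part --- I would argue by induction on $g$, in the same style as the proof of Lemma~\ref{lem:bound-cycles}, but now tracking when the various inequalities are saturated. Suppose $n(S) = 3g-2$. By the lemma, the $3g-2$ N-vertices are pairwise disjoint, non-separating, and pairwise non-homotopic except for exactly one pair, which must then be separated by the root-vertex; this already pins down the first bullet. To get the second bullet, I would contract one of the two root-separated N-vertices (or rather its maximal ladder), reducing to a 2PI $\ell=0$ scheme of genus $g-1$ --- here I would need to check the contraction really lands back in the class of 2PI $\ell=0$ schemes after the usual ladder-reabsorption step from Section~\ref{sec:schemesZeroGrade}, and that it does not destroy N-vertices except the one contracted, so that the reduced scheme still has $3(g-1)-2$ N-vertices and is therefore 2PI-dominant by the induction hypothesis. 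Rewinding the induction, the newly reinserted N-vertex must attach to the existing structure in a way that does not create homotopies or ladders, and a case analysis (analogous to the generation-of-ladders discussion in Section~\ref{sec:schemesZeroGrade}, and to the leaf/vertex classification in Proposition~\ref{propo:dominant-schemes}) should show the only allowed local gluing pattern is the planar 6-point subgraph of Figure~\ref{fig:6-point}.

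The main obstacle, I expect, is controlling the \emph{global} combinatorics rather than the local moves: showing that saturation of the count $3g-3$ pairwise non-homotopic disjoint non-contractible loops forces the dual trivalent structure to be precisely a collection of planar 6-point vertices, with no other way to ``spend'' the Euler characteristic. Concretely, when we cut the surface along all $n$ loops we get a disjoint union of pieces of total Euler characteristic $2-2g$; to reach the maximal $n$, every piece must be a sphere with exactly three boundary circles (a pair of pants), and one must rule out pieces with more punctures or positive genus, and rule out the loops bounding once-punctured or twice-punctured pieces except at the root. Translating this pants decomposition back into the scheme language --- i.e.\ checking each pair of pants corresponds to a \emph{planar} 6-point subgraph and that the $\ell=0$ condition is automatically satisfied (no ``spurious'' contributions to the grade from the way the three N-ladders meet) --- is where the careful case-by-case work will be concentrated. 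I would also flag that the root-vertex case requires separate handling, since it is the unique place where two homotopic loops are tolerated and where the 6-point structure degenerates into the root-separation configuration of Figure~\ref{fig:separated-root}; the induction's base case $g=1$ (resp.\ the first inductive step to $g=2$) from Proposition~\ref{propo:g1} should anchor all of this.
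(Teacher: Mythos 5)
Your final paragraph correctly identifies the engine of the paper's argument (cut along the $3g-3$ pairwise disjoint, non-separating, pairwise non-homotopic loops and use the maximality of a pants decomposition), but the route you actually propose to drive the proof --- an induction that contracts one of the two root-separated N-vertices --- cannot work as stated. Contracting a connecting N-vertex lowers the genus by one and removes exactly one N-vertex, so from $n(S)=3g-2$ you would land on a genus-$(g-1)$ object with $3g-3$ N-vertices, whereas Lemma~\ref{lem:bound-cycles} caps a 2PI $\ell=0$ scheme of genus $g-1$ at $3(g-1)-2=3g-5$. Your own requirements ("does not destroy N-vertices except the one contracted" and "still has $3(g-1)-2$ N-vertices") are therefore mutually inconsistent; either the contraction produces something that is not a scheme, or at least two further N-vertices must be absorbed, and the induction hypothesis cannot be invoked in the clean way you describe. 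The paper avoids this entirely: it does not contract but \emph{merges} the root-separated pair (together with the root-vertex and its possible two standard vertices) into a single N-vertex, producing a non-rooted graph $S'$ of the \emph{same} genus $g$ carrying exactly $3g-3$ loops, which is precisely the maximal number for a pants decomposition; Proposition 4.2.6 of \cite{mohar2001graphs} then forces every complementary component to be a three-holed sphere, with no induction needed.

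The second, and decisive, gap is in the step you defer to "careful case-by-case work". After cutting, each pair of pants corresponds to a connected component $C_i$ with $g(C_i)=0$ and $\ell(C_i)=0$, hence a melonic graph; melon-freeness and ladder-freeness of $S$ reduce the possibilities to the cycle graph or the two-vertex melon, i.e.\ to a planar $6$-point vertex or a contact vertex (exactly as in Proposition~\ref{propo:dominant-schemes}). The whole point of the proposition is that the contact vertex is \emph{excluded}, and the reason is specific: the $\oD$-loop structure of a contact $6$-point subgraph forces at least one of the three incident N-vertices to be non-connecting (Figure~\ref{fig:6pointNotAllowed}), which by Lemma~\ref{lem:sep-conn} would make it separating and hence contradict the 2PI hypothesis. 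This exclusion mechanism is the heart of the "only if" direction and does not appear anywhere in your proposal, so the argument as written would not single out the planar $6$-point subgraph of Figure~\ref{fig:6-point}.
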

\begin{proof}
The first condition follows from Lemma \ref{lem:bound-cycles}: if no two N-vertices are separated by the root-vertex, then $n(S) = 3g - 3 < 3g -2$ and therefore, $S$ cannot be 2PI dominant.

Now that we have dealt with the root-vertex, it is convenient to replace by a single N-vertex the subgraph made out of: the two homotopic N-vertices separated by the root-vertex, the root-vertex itself, and possibly the two standard vertices the latter is connected to. 
In other words, we perform one of the following two replacements:\footnote{To avoid a tedious enumeration of cases, we keep the edge orientations and N-vertex parities implicit, but we note that there is always a unique substitution consistent with these features. For instance, in the situation shown on the right, the sum of the parities of the two N-vertices in the top figure must be opposite to the parity of the N-vertex in the bottom figure.}
\begin{center}
\includegraphics[scale=.6]{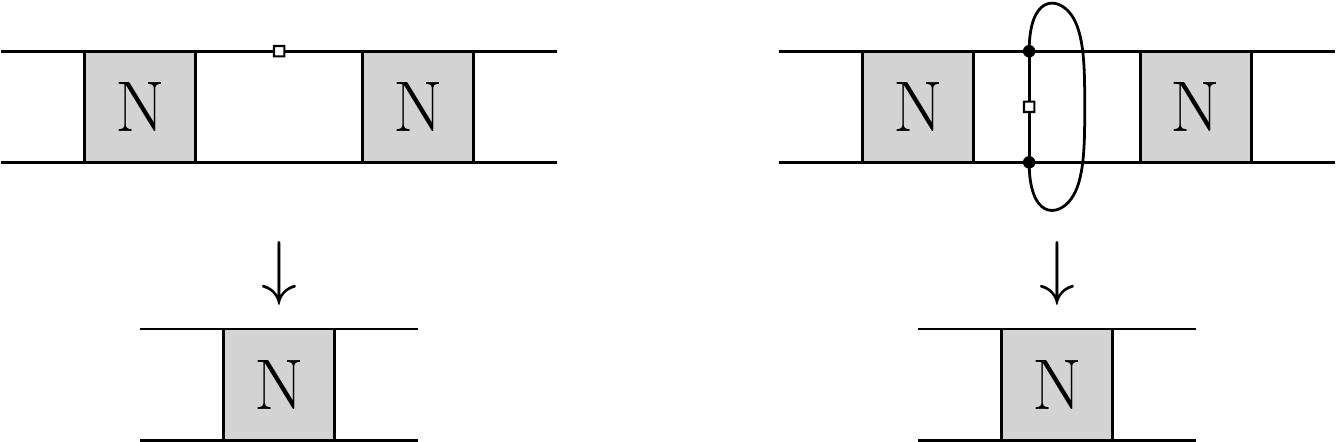}
\end{center}
As a result, we obtain a non-rooted embedded graph $S'$ of genus $g$, with $3g-3$ pairwise disjoint, non-separating and pairwise non-homotopic N-vertices, regarded as simple loops. In topological terms, because we have a maximal number of such loops drawn on the discretized (orientable) Riemann surface represented by $S'$, it implies that (see again Proposition 4.2.6.\ in \cite{mohar2001graphs}): cutting $S'$ along these $3g-3$ loops yields $2g-2$ connected components $c_1\,, \ldots \,, c_{2g-2}$, which all have genus zero and exactly three boundaries. In other words, these loops correspond to the \emph{cuffs} of a \emph{pants decomposition} of $S'$ \cite{hatcher1980presentation}. Translated in terms of the corresponding N-vertices in $S'$, it means that: contracting the $3g-3$ N-vertices in $S'$ gives rise to $2g-2$ connected components $C_1\,, \ldots \,, C_{2g-2}$ such that $g(C_i)=0 \; (i=1,\ldots,2g-2)$ and each connected component originates from a $6$-point subgraph connected to three distinct N-vertices in $S'$.
For instance, at genus $2$ or $3$, we are in one of the situations illustrated in Figure \ref{fig:decomposition_S} (the genus $2$ structure is unique, but there are more possibilities at higher genus).
Besides, since $\ell(S')=0$, these connected components $C_i$ also satisfy $\ell(C_i)=0$ for $i=1,\ldots,2g-2$. Hence, they must correspond to (non-rooted) melonic Feynman graphs. We now prove that they are in fact (non-rooted) cyclic graphs, which will achieve the proof.

\begin{figure}[htb]
    \centering
    \includegraphics[scale=0.7]{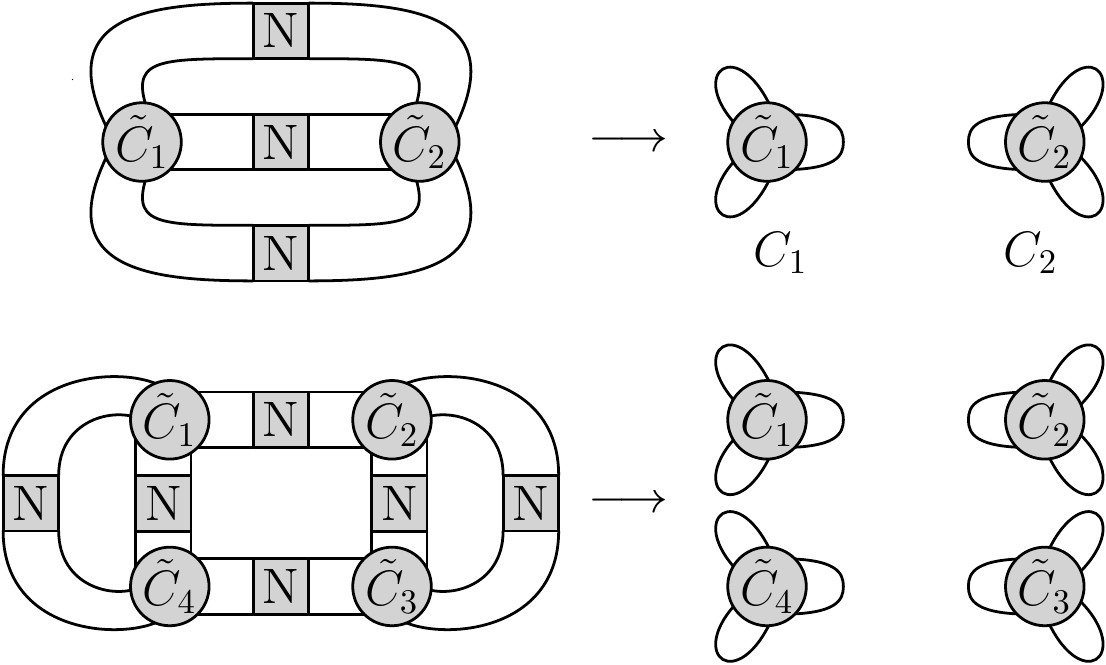}
    \caption{\small Examples of decomposition of the graph $S'$, with $g(S')=2$ (top) or $g'(S)=3$ (bottom). In both cases, we are left with $2g(S')-2$ connected components after all N-dipoles have been contracted. Note that more than one structure is allowed at genus $g>2$.}
    \label{fig:decomposition_S}
\end{figure}

Suppose that $C_i$ ($i=1,\ldots,2g-2$) is an arbitrary melonic Feynman graph. One can recover the $6$-point subgraph $\tilde{C}_i$ it originates from in $S'$ by performing three cuts on its edges (note that the same edge can be cut multiple times). Using the fact that the resulting $6$-point subgraph cannot contain a melonic subgraph (otherwise, $S'$ and $S$ would not be melon-free) nor a dipole (otherwise, $S'$ and $S$ would not be ladder-free), one can check that $C_i$ must be:
\begin{enumerate}[label=(\roman*)]
    \item the cycle graph, and in that case the corresponding $6$-point subgraph $\tilde{C}_i$ is the one represented in Figure \ref{fig:6-point};
    
    \item or the melonic Feynman graph with two standard vertices, and in that case the corresponding $6$-point subgraph $\tilde{C}_i$ is obtained by cutting three distinct edges.
\end{enumerate}
These configurations have already been encountered in Proposition \ref{sec:dominantSchemes}: situation (i) yields a planar 6-point interaction, while (ii) gives rise to a contact vertex. The only difference is that, in a 2PI-dominant scheme, they must connect three connecting N-vertices. It is straightforward to see that this condition disallows contact vertices (situation (ii)).
An example is given in Figure \ref{fig:6pointNotAllowed}, where one observes that the N-vertex $V_2$ is not connecting. We are therefore left with the planar $6$-point subgraph of Figure \ref{fig:6-point}, which concludes the proof.
\begin{figure}[htb]
    \centering
    \includegraphics[scale=0.7]{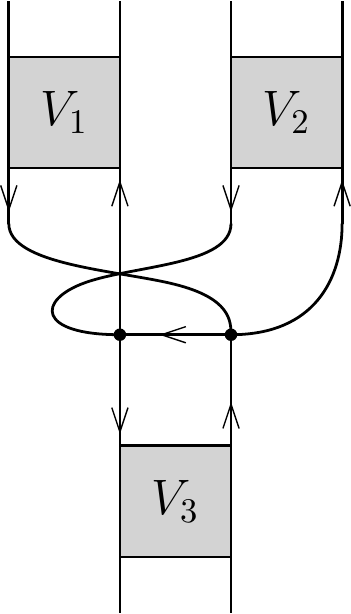}
    \caption{\small $6$-point subgraph obtained by cutting three distinct edges in a melonic Feynman graph with two vertices. It connects three N-vertices $V_1, V_2$ and $V_3$. However, at least one of them (here $V_2$) cannot be connecting because of the structure of the $\oD$-loops.}
    \label{fig:6pointNotAllowed}
\end{figure}
\end{proof}

An example of 2PI-dominant scheme of genus $3$ is provided in Figure \ref{fig:example_2PI_dominant}.
\begin{figure}[htb]
\centering
\includegraphics[scale=.5]{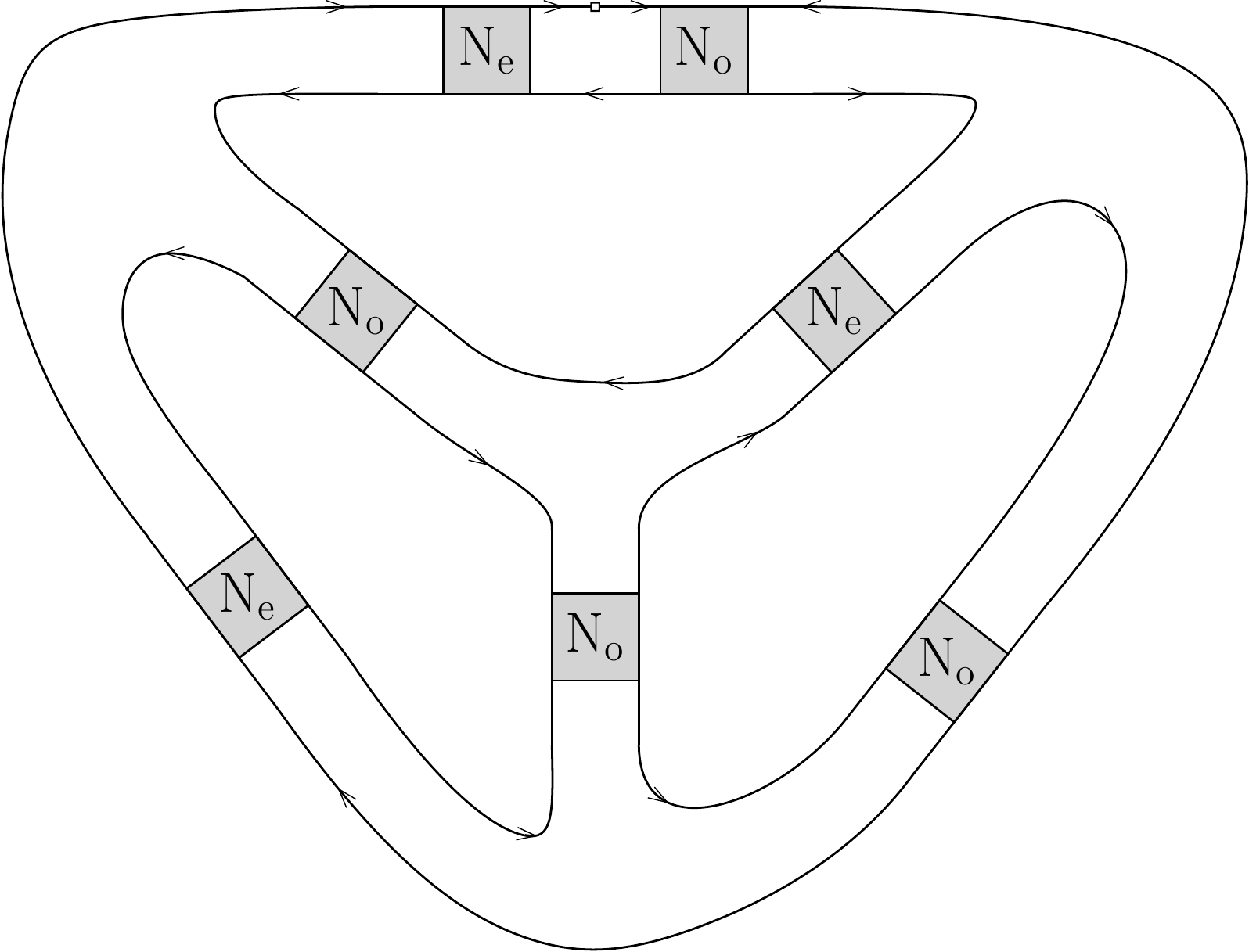}
\caption{\small A 2PI-dominant scheme of genus $g=3$. It has $3g-2 = 7$ N-vertices; all of them are pairwise non-homotopic except for the two N-vertices separated by the root.}\label{fig:example_2PI_dominant}
\end{figure}

\subsection{2PI-dominant schemes as Ising states on planar cubic maps}
\label{sec:Ising}

We are now going to describe a class of graphs that is a subset of all the graphs of our model, but which contains all the 2PI-dominant graphs. We will at first be slightly less rigorous and consider unrooted graphs. 

Let us view a N-ladder of a graph (or a N-vertex of a scheme) as a decorated edge of an auxiliary ribbon graph with  vertices of arbitrary order, corresponding to planar $2n$-point functions that generalize the one of Fig.~\ref{fig:6-point}.
Figure \ref{fig:example_2PI_dominant}, with the root removed as in the proof of Proposition \ref{propo:dominant-2PI}, provides an example of such a ribbon graph, with only tri-valent vertices. Notice that in Fig.~\ref{fig:example_2PI_dominant}, due to the arrows, the vertices have two possible orientations, clockwise and anti-clockwise, with $\Ne$-ladders joining same-orientation vertices and $\No$-ladders joining opposite-orientation vertices; in general this is only true for ribbon graphs without twisted edges, as otherwise we can for example join same-orientation vertices by means of a twisted $\No$-ladder. For a canonical description of all possible ribbon graphs, including non-orientable ones, it is convenient to choose the same orientation for all the vertices and allow twists of the edges; however, in order to more easily recognize planar graphs it is actually more convenient to allow both orientations of the vertices, so that planar graphs with $\No$-ladders can actually be drawn on a plane, as in Fig.~\ref{fig:example_2PI_dominant}.

A ladder $i$ has $r_i\geq 2$ rungs, and therefore the original graph corresponding to a given auxiliary graph has in total $\rho=\sum_{i\in \text{edges}} r_i$ rungs (or N-dipoles).
Remembering the structure of external faces of the N-ladders (see Fig.~\ref{fig:0chainvertexes}) and counting the number of their internal faces, we find the following mapping between the number of vertices, edges, and faces of the ribbon graph, denoted by  $\nu$, $\epsilon$, and $\phi$, respectively, and the number of faces, vertices and N-ladders in the original graph: 
\be
v = 2 \rho\,, \;\;\;  \eps=n\,, \;\;\; f_L = f_R = \rho -\epsilon + \nu \,, \;\;\; \varphi = \rho + \phi \,.
\ee
Consequently, we have also the following mapping:
\be \label{eq:grade=genus}
\ell = 4+4\rho -2 (\rho-\eps)-2\nu - 2\phi - 2\rho = 4 - 2\nu+2\eps-2\phi = 4 \eta\,,
\ee
\be \label{eq:genus=loops}
g = \eps - \nu + 1 = \m \,,
\ee
where $\eta$ is the genus of the ribbon graph, and $\m$ is its cyclomatic number (the  number of independent loops). 
For non-orientable surfaces, we have $\eta\in \f12 \mathbb{N}$, hence these graphs can only have even $\ell$: they are a strict subset of all the possible graphs, which have $\ell\in \f12 \mathbb{N}$.

It is however interesting that if we take all the planar ribbon graphs and decorate them by ladders as described above, we have a class of Feynman graphs of our multi-matrix model with $\ell=0$ and arbitrary $g$ (given by the number of independent loops of the planar ribbon graph).
Again, these are clearly not all the graphs with $\ell=0$ and arbitrary $g$, as made evident for example by Proposition~\ref{propo:g1}.\footnote{Notice that to actually capture at least the first graph ($S_1$) of Proposition~\ref{propo:g1}, we should allow for exactly one two-valent vertex in the ribbon graphs, corresponding to the root. From the point of view of the matrix model discussed in the following subsection, this would amount to studying the two-point function rather than the free energy.}
However, from Proposition~\ref{propo:dominant-2PI} we know that 2PI-dominant schemes contain only N-ladders, and that these are connected via planar six-point functions. Therefore, the 2PI-dominant schemes are contained in the class of graphs we have just described, by restriction to \emph{one-particle irreducible} (1PI) three-valent ribbon graphs. 
It is easy to verify that such graphs indeed saturate the bound of Lemma~\ref{lem:bound-cycles}.
In fact, we have $2\eps=3\nu$, and therefore from Eq.~\eqref{eq:genus=loops} we have $n(G)=\eps = 3g-3$, which is the right number of N-vertices in a dominant 2PI scheme, once we remove the root and join its two adjacent N-vertices.

\medskip

More rigorously, we can in fact construct an explicit bijection between 2PI-dominant schemes on the one hand, and Ising states on a certain family of rooted planar maps on the other hand. 

Let $S$ be a 2PI-dominant scheme of genus $g$. Since $\ell(S) = 0$, its underlying ribbon diagram has genus $\eta = 0$. We can therefore choose a planar embedding of the latter (i.e. without crossing or twist). By convention, we can furthermore require that the root is embedded in the plane in such a way that its R-face appears in between its two associated rungs, as shown in the first line of Figure \ref{fig:cubic_map}. 
By planarity, this completely fixes the local embedding of the other elements of the graph. Because each side of a N-vertex has exactly one incoming and one outgoing edge, there are exactly two possible embeddings of the vertices: clockwise, which we label $+$, and anti-clockwise, which we label $-$. Next, one finds two possible embeddings of the $\Ne$-vertices: those that connect two $+$ vertices, and those that connect two $-$ vertices. There is finally a unique embedding of $\No$-vertices, each such vertex always connecting two vertices of opposite orientations. 

As a result, we realize that $S$ encodes a unique Ising state on a planar map, obtained by performing the substitutions illustrated in Figure \ref{fig:cubic_map}. We emphasize that we have replaced the two rails associated to the root (together with its associated rung, when applicable) by two half-edges, connected to two univalent vertices. One of these vertices represents the R-face adjacent to the root in the initial scheme, and is distinguished by an outward-pointing arrow as the root-vertex in the planar map. By consistency with our construction, this vertex can only be in the $+$ state. Finally, we note that the 2PI character of the initial scheme translates into a 1PI condition in the colored map representation.

Examples of maps obtained in this manner are represented in Figure \ref{fig:ex_map}. 

\begin{figure}[H]
\centering
\includegraphics[scale=.8]{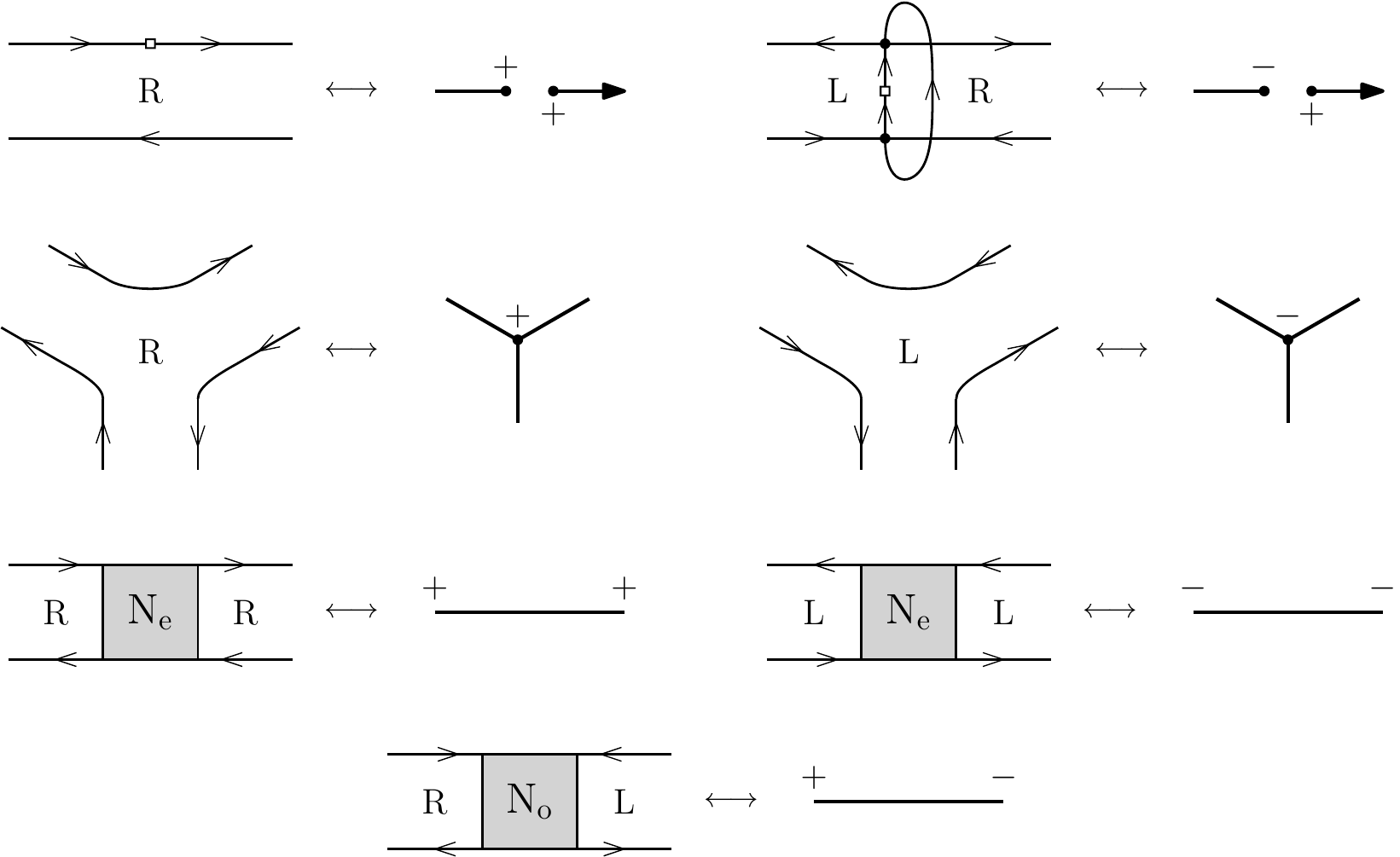}
\caption{\small Mapping between 2PI-dominant schemes and rooted planar maps. There are two types of root-edges (with one canonically distinguished strand, specifying a root-face), which we conveniently represent as pairs of univalent vertices; two types of 3-valent vertices, clockwise (+) and anti-clockwise (-); and three types of edges, distinguished by the signs of the vertices they connect.}\label{fig:cubic_map}
\end{figure}

To summarize, the decorated map $\mathcal{T}(S)$ we associate to a 2PI-dominant scheme $S$ is unique, and has the following properties:
\begin{enumerate}[label=(\roman*)]
    \item it is planar;
    \item it has $3g(S)-2$ edges;
    \item it has $2 g(S)$ vertices, each of which is decorated by a spin label $s \in \{ + , -\}$;
    \item $2 g(S) -2$ vertices are $3$-valent;
    \item $2$ vertices are univalent, one of which being distinguished has the root-vertex;
    \item the root-vertex is in the + state;
    \item the rooted planar map $\tilde{\cT}(S)$ obtained by joining the two univalent vertices (and keeping the arrow to specify the root-face) is 1PI.
\end{enumerate}
Reciprocally, it is easy to see that any decorated map verifying these conditions allows to reconstruct a unique 2PI-dominant scheme. The correspondence between 2PI-dominant schemes and rooted planar maps we have just described is therefore bijective. Finally, in the following, we will denote by $\cT_{++}$ (resp. $\cT_{+-}$) the set of decorated maps with boundary condition $++$ (resp. $+-$).

\begin{figure}[htb]
\centering
\includegraphics[scale=.6]{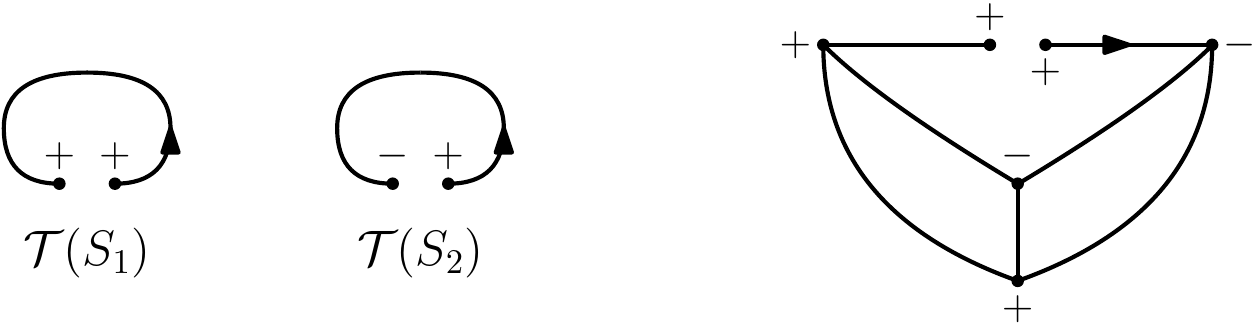}
\caption{\small Decorated rooted planar maps associated to: $S_1$ and $S_2$ of Proposition \ref{propo:g1} (left); the 2PI-dominant scheme of Figure \ref{fig:example_2PI_dominant} (right).}\label{fig:ex_map}
\end{figure}

\medskip

The previous construction is particularly interesting, because it allows to understand the 2PI-dominant schemes as encoding an Ising model on a family of random planar surfaces. To see this, let us introduce a generating function for the 2PI-dominant schemes of genus $g$ ($g\geq1$) 
\be
\cD_{g}^{\mathrm{2PI}}(\lambda) := \sum_{S \;\mathrm{2PI-dominant}}  \, \hat\cG_{S}(\lambda^2) \,,
\ee
and, similarly, a generating function for all 2PI-dominant schemes:
\be  \label{eq:D-2PI-dominant}
\cD^{\mathrm{2PI}}(\lambda, M) := \sum_{g \in \mathbb{N}} M^{2-2g} \, \cD_{g}^{\mathrm{2PI}}(\lambda)\,.
\ee
We can also introduce the two Ising generating functions:
\be
Z_{++} (t, x) := \sum_{T \in \cT_{++}} t^{\eps(T)} x^{m(T)}\,, \quad 
Z_{+-} (t, x) := \sum_{T \in \cT_{+-}} t^{\eps(T)} x^{m(T)}\,, 
\ee
where $\eps(T)$ is the total number of edges in a map $T$,\footnote{That is, $3g-2$, where $g$ is the genus of the scheme represented by $T$.} and $m(T)$ the number of \emph{monochromatic edges}.\footnote{Monochromatic edges are edges of type $++$ or $--$, and correspond to $\Ne$-vertices in the original scheme representation.} Notice that these are grand canonical partition functions, with $x=e^{2\beta}$ and $t=z e^{-2\beta}$, where $\beta$ is the inverse temperature and $z$ the fugacity.

We then have the remarkable formula:
\be\label{eq:Ising}
\cD^{\mathrm{2PI}}(\lambda, M) = M^{2/3} \left( Z_{++} (\cC_\No (\lambda^2) M^{-2/3} , \lambda^{-2}) + \lambda^2 Z_{+-} (\cC_\No (\lambda^2) M^{-2/3} , \lambda^{-2})  \right)\,,
\ee
where we have used \eqref{eq:generating-scheme} (with $r=l=b=0$, $p = 0$ or $1$) and $\cC_{\No} (u) = u \cC_{\Ne} (u)$.

\medskip

The Ising and Potts models on various families of random planar maps can be solved explicitly, for instance by means of matrix model techniques \cite{Kazakov:1986hu,Boulatov:1986sb,Kazakov:1987qg}, 
or of a general method based on \emph{Tutte equations with two catalytic variables}\footnote{The basic idea of this method is to introduce additional parameters, known as catalytic variables, that allow to derive tractable equations for the partition function. In the case of the Potts model on a random map, keeping track of the degree of the root-face and the degree of the root-vertex enables the use of deletion-contraction relations. This explains the need to introduction two catalytic variables.} \cite{bernardi2011counting, bernardi2017counting}. The specific case of the Ising model has also been solved by means of an exact mapping to a problem of map enumeration \cite{bousquet2002degree}, which shares similarities with our own construction. Indeed, the maps being enumerated in this work are bipartite, and are constructed by further decorations of ordinary Ising states on ordinary maps: given such a colored map, one simply adds an arbitrary odd (resp. even) number of bivalent vertices on a $++$ or $--$ (resp. $+-$) edge, and color them in the unique way that results in a bipartite map. It is illuminating to realize that we would obtain the exact same structure if we were to unfold the N-vertices as sums of N-ladders in our initial 2PI-dominant schemes (each rung in a ladder being now seen as a bivalent vertex). Hence, our construction can be seen as the inverse of the type of mapping considered in \cite{bousquet2002degree}. From this point of view, the relation with the Ising model established in equation \eqref{eq:Ising} is not so surprising.

\subsection{A matrix model for the 2PI-dominant graphs}\label{sec:effective-matrix}

The mapping between 2PI-dominant schemes and Ising model configurations on random planar 1PI maps can be encoded in the planar limit of a matrix model.
One simply needs to introduce two matrices, whose self-interactions represent the vertices with spin up and down, and with propagators carrying the spin-spin interaction \cite{Kazakov:1986hu,Boulatov:1986sb}.

We introduce two $L\times L$ Hermitian matrices $A$ and $B$, with the following free energy:
\be
W[u,M] = \lim_{L\to \infty} \f{1}{L^2} \ln \int [dA] [dB] e^{-S_{\rm eff}[A,B] } \, ,
\ee
where the action is:
\be \label{eq:Seff-1PI}
S_{\rm eff}[A,B]  = M^2 L \,\Tr\left[ \f{1}{2 u^2} \left(A^2+B^2\right) - \f{1}{u} A B -  \f{1}{3} \left(A^3+B^3\right) - j_1 A-j_2 B \right]  \,.
\ee
In order to impose the 1PI restriction on the Feynman graphs, we choose $j_1$ and $j_2$ to satisfy
\be
0 = \f{\p}{\p j_1} W[u,M] = \f{\p}{\p j_2} W[u,M]\,,
\ee
whose solution, with $j_1=j_2$ because of the symmetry under exchange of $A$ and $B$, we denote $\tilde{j}(u,M)$.

The global factor $L$ in the action is the standard one, required for achieving the usual topological expansion of matrix models (with genus $\eta$ related to the grade $\ell$ by Eq.~\eqref{eq:grade=genus}), while the factor $M^2$ is chosen so to attribute the correct scaling in $M$ to graphs with $\m=g$ loops (see \eqref{eq:genus=loops} and \eqref{eq:D-2PI-dominant}). 

The coefficients in the action have been chosen so that the free propagators (i.e.\ the two-point functions of the theory with neither cubic nor univalent vertices) match the generating functions of $\Ne$ and $\No$ ladders:
\begin{align} \label{eq:AA-prop}
&\la (A)_{ab} (A)_{cd} \ra_{\rm free} = \la (B)_{ab} (B)_{cd} \ra_{\rm free}   = \f{1}{M^2L}\d_{ad}\d_{bc}\, \cC_{\Ne}(u) \, , \\
 \label{eq:AB-prop}
& \la (A)_{ab} (B)_{cd} \ra_{\rm free} = \f{1}{M^2L} \d_{ad}\d_{bc}\, \cC_{\No}(u) \, ,
\end{align}
and as discussed below Eq.~\eqref{eq:2PI-G_g}, we have $u=\l^2$ because there are no melonic insertions.

For $j=0$, the action \eqref{eq:Seff-1PI}, is the same as in \cite{Boulatov:1986sb,Eynard:1992cn,Staudacher:1993xy}, with the mapping $c=e^{-2\beta}\to u$, $g\to u^6/M^2$, and a rescaling of $A$ and $B$ by $u^2$.
Therefore, in principle we could adapt their methods and results to our case, in order to compute $\cD^{\mathrm{2PI}}(\lambda, M)$. 
However, we are interested in the restriction to 1PI graphs, which they did not consider explicitly, and moreover, the precise relation to $\cD^{\mathrm{2PI}}(\lambda, M)$ requires to compute a combination of two-point functions, including $\la\Tr[AB]\ra$, which requires some extra work.
In fact, comparing to Eq.~\eqref{eq:Ising}, we have
\be \label{eq:D2PI-matrix}
\cD^{\mathrm{2PI}}(\lambda, M) = \lim_{L\to \infty} \f{M^2}{L} \left(  \left\la \Tr[A^2] \right\ra_{\tilde{j}} + \l^2 \left\la \Tr[AB] \right\ra_{\tilde{j}} \right)_{|_{u=\l^2}}\,.
\ee
Nevertheless, we expect the universal critical properties of the model not to be affected by the 1PI restriction or by the insertion of a special two-valent vertex, hence we can immediately anticipate some conclusions.

The free energy of the matrix model \eqref{eq:Seff-1PI} corresponds in the large-$L$ limit to the grand-canonical partition function of an Ising model on random planar 1PI graphs, at inverse temperature $\b=\f12 \ln (\f1u)$. The thermodynamic limit (with infinite graphs dominating the grand-canonical partition function) is obtained by tuning the coupling to its critical line, in our case $M\to M_c(u)$, with $u\in (0,1)$.
In the thermodynamic limit, the Ising model can also reach criticality at some critical point $u=u_c$, with $0<u_c<1$ ($u_c=(2\sqrt{7}-1)/27$ for the model with $j_1=j_2=0$ \cite{Boulatov:1986sb}), hence one can have a continuum limit describing matter coupled to quantum gravity in two dimensions by tuning $u\to u_c$ and $M\to M_c(u_c)$. However,  for $u\to 1$, which is the only relevant limit for us, we are in the high-temperature limit $\b\to 0$, and the Ising spins become completely uncorrelated, thus simply contributing a factor $2^\n$ ($\n$ being the number of vertices in the graph) to the pure gravity partition function.
Therefore, the limit $u\to 1$ corresponds to the high-temperature limit of the Ising model  on random planar graphs, which is in the universality class of pure two-dimensional quantum gravity.

That the latter is unaffected by the 1PI restriction can easily be verified from the solution of the one matrix model with action $S[Y]=L(\f12 \Tr[Y^2]-\f{\a}{3}\Tr[Y^3]-j\Tr[Y])$, where $j$ is chosen in such a way that the one-point function $\la \Tr[Y]\ra$ vanishes. Such a model was solved at large-$L$ in \cite{Brezin:1977sv}, from which one finds for the two-point function:
\be
\left\la \Tr[Y^2] \right\ra = \f{\t}{\a^2} (1-3\t) \,,
\ee
where $\t= \sum_{n\geq 1} 2^{n-1} \f{(3n-2)!}{n! (2n-1)!} \a^{2n}$. With a bit of work we find:
\be\label{eq:2point_matrix}
\left\la \Tr[Y^2] \right\ra = \sum_{n\geq 0} 2^{n} \f{(3n)!}{(n+1)! (2n+1)!} \a^{2n}\sim  \frac{1}{4} \sqrt{\frac{3}{\pi}} \sum_{n\geq 0} \left( \frac{27\, \a^2}{2} \right)^n n^{-5/2}  \,,
\ee
from which, taking into consideration the marked edge, one can read off the usual string susceptibility exponent  of pure two-dimensional quantum gravity, $\g_s=-1/2$.

We conclude this subsection by noticing that the limit $u\to 1$ should be taken carefully, as the propagators diverge in such limit. On the other hand, from \cite{Boulatov:1986sb} one finds that the critical line ends at zero coupling,\footnote{There is actually a typo in equation (42) of \cite{Boulatov:1986sb}, the term $c/4$ should be absent.} i.e.\ $M_c(u)\to +\infty$ for $u\to 1$. Moreover, the graphs with $g$ loops contributing to \eqref{eq:D2PI-matrix} have by construction a factor $M^{2-2g}$, and they have a factor $(1-u^2)^{2-3g}$ from the propagators. Therefore, we can approach the high-temperature and thermodynamic limits simultaneously and without problems by keeping $M(1-u^2)^{3/2}$ fixed. We will discuss such triple-scaling limit in more detail in the next subsection.

\subsection{Triple-scaling limit}\label{sec:2PI-enumeration}

We do not need to solve the effective Ising model described in subsection \ref{sec:Ising} to understand the critical properties of the 2PI-dominant generating function \eqref{eq:D-2PI-dominant}. Indeed, equation \eqref{eq:Ising}, together with $\cC_{\No}(u)=u^3/(1-u^2)$, suggest to define the triple-scaling limit in which $M \to +\infty$ and $\lambda \to 1^-$, while keeping the following quantity finite:
\be
\kappa^{-1} = M \left( 1 - \lambda\right)^{3/2} \,.
\ee
The generating function of 2PI-dominant schemes has then a nice limit:
\be\label{eq:D2PI-triple}
(1 - \lambda) \cD^{\mathrm{2PI}}(\lambda, M) \rightarrow  \frac{1}{\kappa^{2/3}} \left( Z_{++}\left(\frac{1}{4}\kappa^{2/3} , 1\right) + Z_{+-}\left(\frac{1}{4}\kappa^{2/3}, 1\right)  \right) =: \tilde\cD(\kappa)\,.
\ee
Note that this is a series in $\kappa$, since a genus $g$ contribution in $Z_{++}$ or $Z_{+-}$ behaves like $(\kappa^{2/3})^{3g-2} = \kappa^{2/3} \kappa^{2g-2}$. 

In fact, we can evaluate $\tilde\cD(\kappa)$ more explicitly. We first note that, since the Ising configurations of a given combinatorial map are weighted uniformly in this limit, $Z_{++}(\cdot,1)$ and $Z_{+-}(\cdot,1)$ both reduce to the same combinatorial sum. Furthermore, one can adopt a slightly simpler combinatorial description in terms of cubic maps (by gluing back the univalent vertices of a given map together, to form a root-edge). All in all, we obtain:
\be\label{eq:2PI-triple}
\tilde\cD(\kappa) = 2 \kappa^{-2/3} \sum_{g \in \mathbb{N}^*} \left(\frac{1}{4} \kappa^{2/3}\right)^{3g-2} 2^{2g-2} \cM_{g-1} = \frac{1}{2} \sum_{n \in \mathbb{N}} \left(\frac{\kappa^2}{16}\right)^n \cM_n\,,
\ee
where $\cM_n$ is the number of rooted bridgeless planar cubic maps with $2n$ vertices, and by convention $\cM_0 = 1$ (which corresponds to the cycle graph, and correctly counts the schemes $S_1$ and $S_2$). $\cM_n$ is the A000309 integer sequence of the OEIS classification\footnote{\url{https://oeis.org/A000309}.}, and is known in closed form \cite{tutte_1962}\footnote{This result can be derived from a Tutte equation involving a single catalytic variable, in contrast to the complete Ising model \eqref{eq:Ising}. Away from the triple-scaling regime, the latter falls into the more challenging class of problems which are governed by Tutte equations with two catalytic variables.}:
\be\label{eq:asymptotics}
\cM_n = \frac{2^{n}(3n)!}{(n+1)!(2n+1)!} \sim \frac{1}{4} \sqrt{\frac{3}{\pi}} \left( \frac{27}{2} \right)^n n^{-5/2} \,.
\ee
We have encountered it before, in the expansion of Eq.~\eqref{eq:2point_matrix}.
Comparing \eqref{eq:2PI-triple} to \eqref{eq:2point_matrix}, we notice the following differences: an overall factor $2\k^{-2/3}$, which comes from the overall factor in \eqref{eq:D2PI-triple} and from the two contributing configurations; a factor $2^{2g-2}=2^{\n}$, which accounts for the uncorrelated up and down spin configurations; instead of a weight $\a$ per vertex we have a weight $\frac{1}{4} \kappa^{2/3}$ per edge. 
As anticipated, we find the $n^{-5/2}$ term characteristic of the universality class of random planar maps. 

From equations \eqref{eq:2PI-triple} and \eqref{eq:asymptotics}, we infer that $\tilde\cD(\kappa)$ has a finite radius of convergence $\kappa_c = \frac{8}{3\sqrt{6}}$, and the following singular behaviour:
\be
\tilde\cD(\kappa) \underset{\kappa \to \kappa_c^-}{\sim} \frac{1}{2\sqrt{3}} \left( 1 - \frac{\kappa^2}{\kappa_c^2}\right)^{3/2} + \mathrm{more}\;\mathrm{regular}\;\mathrm{terms}\,.
\ee
In this regime, the expectation value of the genus $\langle g\rangle = \langle n+1 \rangle$ remains finite, but its variance diverges:
\be
\langle g^2 \rangle \underset{\kappa \to \kappa_c^-}{\sim} K \, \left( 1 - \frac{\kappa^2}{\kappa_c^2} \right)^{-1/2}\,,
\ee
for some constant $K >0$.

\section*{Acknowledgements}

SC acknowledges support from Perimeter Institute. GV and RT are grateful for the hospitality of Perimeter Institute where part of this work was carried out. Research at Perimeter Institute is supported in part by the Government of Canada through the Department of Innovation, Science and Economic Development Canada and by the Province of Ontario through the Ministry of Colleges and Universities. GV is a Research Fellow at the Belgian F.R.S.-FNRS.

The work of DB is supported by the European Research Council (ERC) under the European Union's Horizon 2020 research and innovation program (grant agreement No818066).
DB and GV are grateful for the hospitality of Okinawa Institute of Science and Technology where part of this work was carried out.

\
\appendix
 
\section{Algorithmic generation of graphs with vanishing grade: examples}
\label{app:algorithmex}
In this Appendix, we provide examples of graphs with vanishing grade, which we obtain by applying the algorithm outlined in section \ref{sec:FGVanishingGrade}. 
We note that all the graphs we have explicitly constructed in this way have a very simple structure: they are collections of ladders or dipoles glued together through effective vertices. Furthermore, we have empirically found that three types of effective vertices are sufficient to describe all such graphs: planar $2n$-point vertices (with $n\geq3$), as described in Theorem~\ref{thm:induction} (for $n=3$) and in section \ref{sec:effective-matrix}; contact $6$-point vertices, also defined in Theorem~\ref{thm:induction}; and $8$-point contact vertices. The latter $8$-point vertices do not make any apparition in the main text, because they contribute to neither of the two continuum limits we have investigated. An example will however appear below. Since it is tangential to the main objectives of this article, whether the three types of effective vertices we have encountered is sufficient to describe $\ell = 0$ graphs of arbitrary genus is left as an open question. 

\medskip

We now illustrate all the ways in which one can obtain a $\ell = 0$ graph of genus two, starting from the two schemes of genus one $S_1$ and $S_2$ described in Proposition \ref{propo:g1}. 
\begin{itemize}

\item
Inserting a connecting N-dipole or $\No$-ladder (Figure \ref{fig:algorithm1}) on the two edges of a rung increases the genus $g$ by one, and generates an $8$-point contact vertex in the resulting graph. If the original graph is 2PI, then the resulting graph remains 2PI.
\begin{figure}[H]
\centering
\includegraphics[scale=.5]{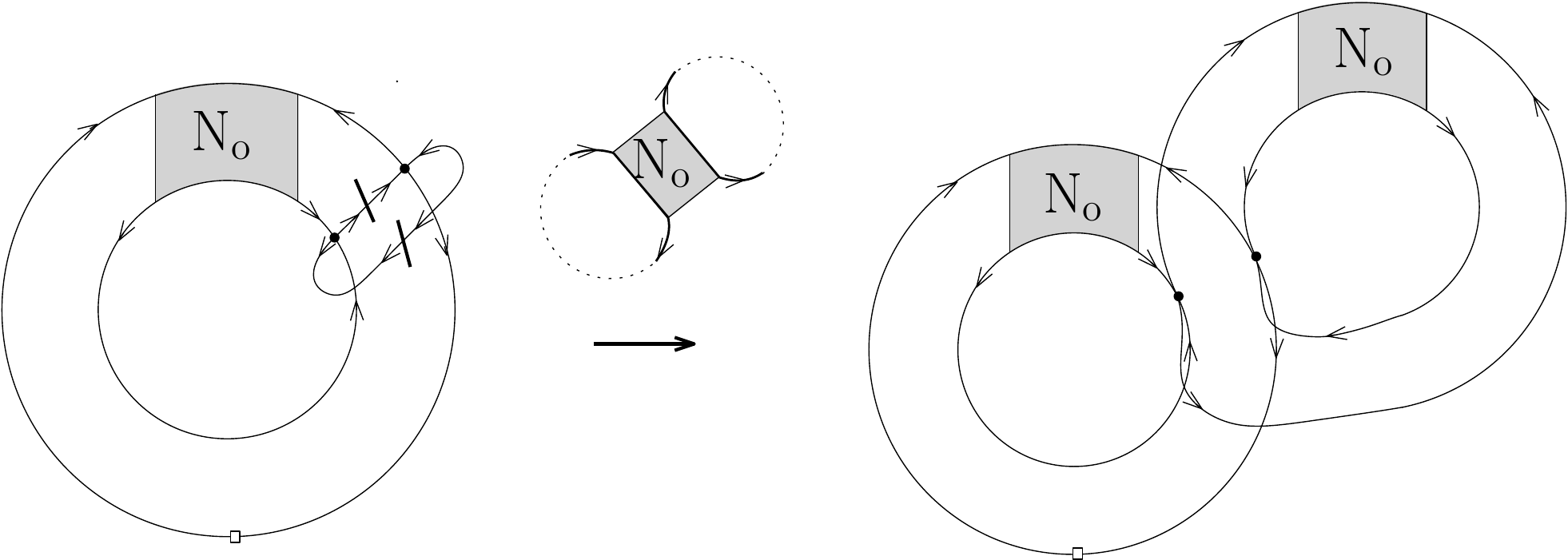}
\caption{\small Insertion of a connecting $\No$-ladder on the two edges of a rung in a 2PI $g=1$ graph (with scheme $S_1$, as defined in proposition \ref{propo:g1}). The resulting graph is 2PI, has genus $g+1=2$, and contains an $8$-point contact vertex.}
\label{fig:algorithm1}
\end{figure}

\item
Inserting a connecting $\Ne$-ladder between two edges of a rail increases the genus $g$ by one and generates planar effective vertices (in the sense of the effective matrix model of section~\ref{sec:effective-matrix}). If the original graph is 2PI, then the resulting graph remains 2PI. Both $\Ne$- and $\No$-ladders can be inserted in this manner, as illustrated on Figures \ref{fig:algorithm2} and \ref{fig:algorithm3}.
\begin{figure}[H]
\centering
\includegraphics[scale=.5]{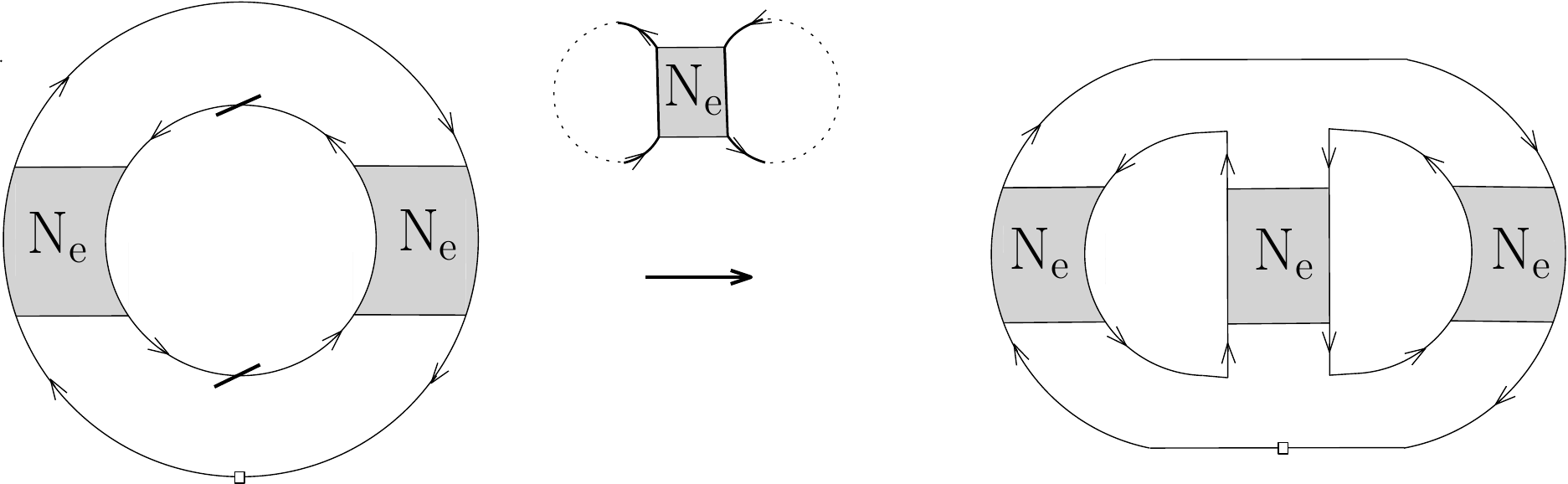}
\caption{\small Insertion of a connecting $\Ne$-ladder in-between two edges of a rung. Starting from a 2PI $g=1$ graph with scheme $S_1$ (as defined in proposition \ref{propo:g1}), we obtain a 2PI graph of genus $g+1 =2$. This graph has a planar structure, well captured by the effective matrix model of section \ref{sec:effective-matrix}.} 
\label{fig:algorithm2}
\end{figure}

\begin{figure}[H]
\centering
\includegraphics[scale=.5]{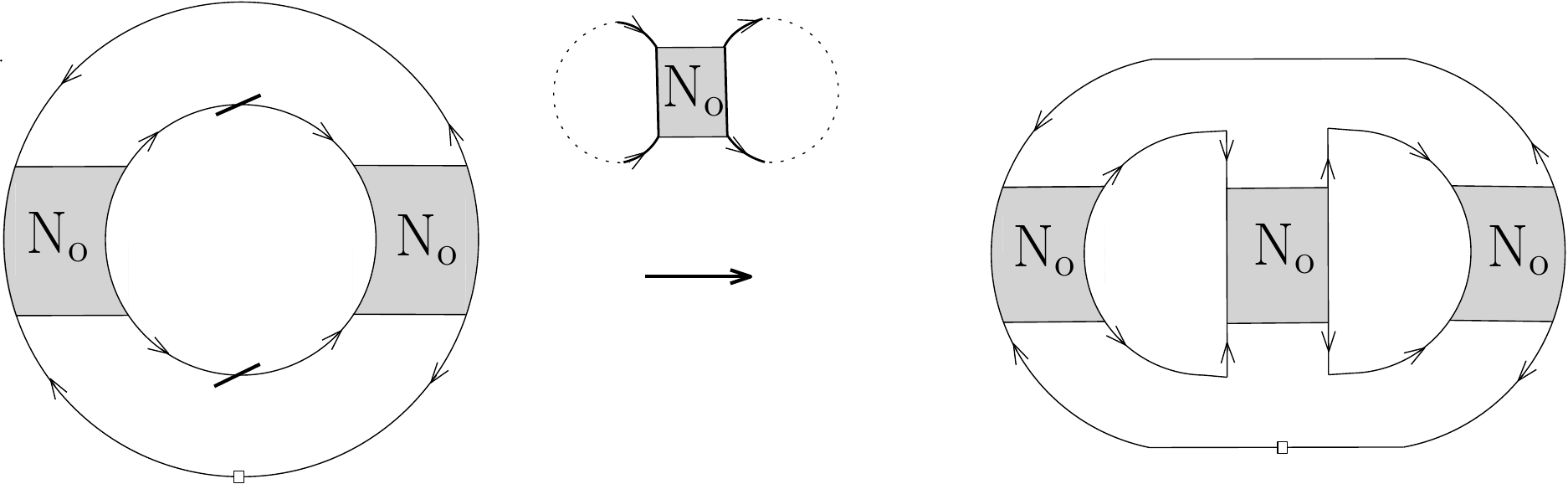}
\caption{\small Insertion of a connecting $\No$-ladder in-between two edges of a rung. Note that, in contrast to Figure \ref{fig:algorithm2}, we have chosen the two edges to be cut in such a way that they separate two $\No$-ladders. 
This contribution is again included in the planar limit of the effective matrix model of section \ref{sec:effective-matrix}.}
\label{fig:algorithm3}
\end{figure}

\item
Inserting a separating dipole, a separating ladder, or a two-edge connection in-between a rung of a graph of genus $g_1$, and a rung of a graph of genus $g_2$, yields a 2PR graph of genus $g_1 + g_2$.
The example given in Fig. \ref{fig:algorithm4} starts out with two 2PI graphs of genus one, and results in a 2PR graph of genus two. 
Two $6$-point contact vertices (as defined in Theorem \ref{thm:induction}) are generated in the process.
\begin{figure}[H]
\centering
\includegraphics[scale=.6]{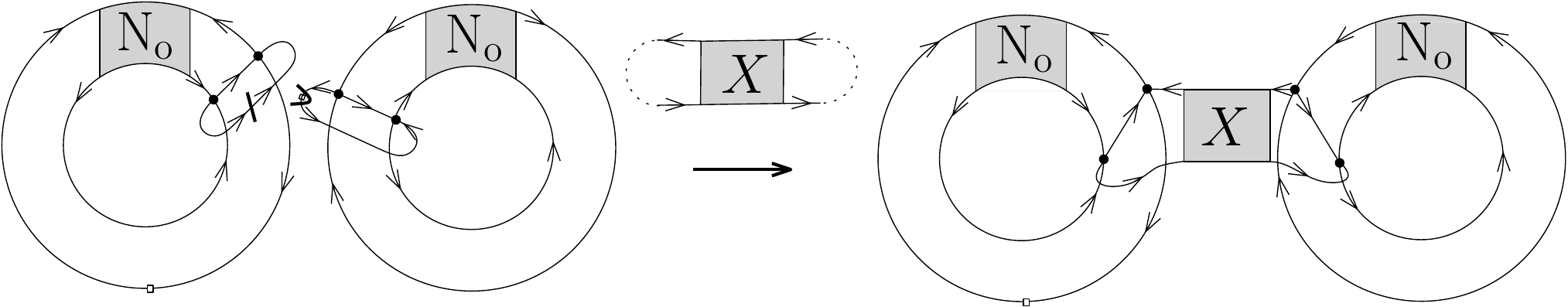}
\caption{\small Insertion of a separating dipole, a separating ladder, or a two-edge connection $X$, in-between two rungs. Starting from two genus one graphs with scheme $S_1$ (Proposition \ref{propo:g1}), we obtain a 2PR graph of genus 2 (in this particular example, $X \in \{\Ne, \; \mathrm{L}, \; \mathrm{R}, \; \mathrm{B}, \; \emptyset \}$).}
\label{fig:algorithm4}
\end{figure}

\item Inserting a separating dipole, a separating ladder, or a two-edge connection in-between a rail of a graph of genus $g_1$, and a rail of a graph of genus $g_2$, yields a 2PR graph of genus $g_1 + g_2$. Planar vertices of the type described in Theorem~\ref{thm:induction} or section \ref{sec:effective-matrix} are generated in the process.
The example given in Fig. \ref{fig:algorithm6} starts out with two 2PI graphs with genus one, and results in a 2PR graph of genus two. When $X$ is a B-ladder, this is well captured by the induction of Theorem~\ref{thm:induction}. 

\begin{figure}[H]
\centering
\includegraphics[scale=.6]{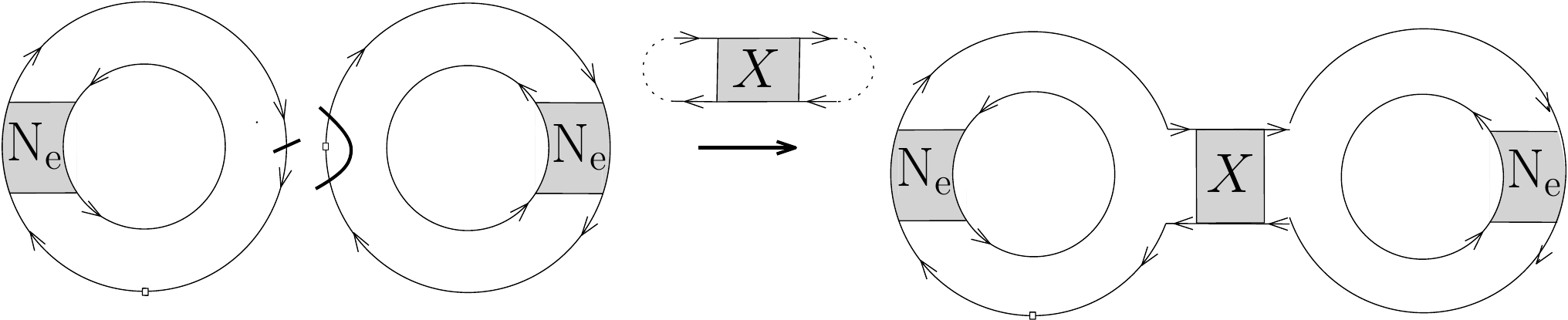}
\caption{\small Insertion of a separating dipole, a separating ladder, or a two-edge connection $X$, in-between two rails. Starting from two genus one graphs with scheme $S_1$ (Proposition \ref{propo:g1}), we obtain a 2PR graph of genus 2 (in this particular example, $X \in \{\Ne, \; \mathrm{L}, \; \mathrm{R}, \; \mathrm{B}, \; \emptyset \}$).}

\label{fig:algorithm6}
\end{figure}

\item Inserting a separating dipole, a separating ladder, or a two-edge connection in-between a rung of a graph of genus $g_1$, and a rail of a graph of genus $g_2$, yields a 2PR graph of genus $g_1 + g_2$. 
Following this procedure, we generate a $6$-point contact vertex and a $6$-point planar vertex. An example is provided in Fig. \ref{fig:algorithm8}. 

\begin{figure}[H]
\centering
\includegraphics[scale=.6]{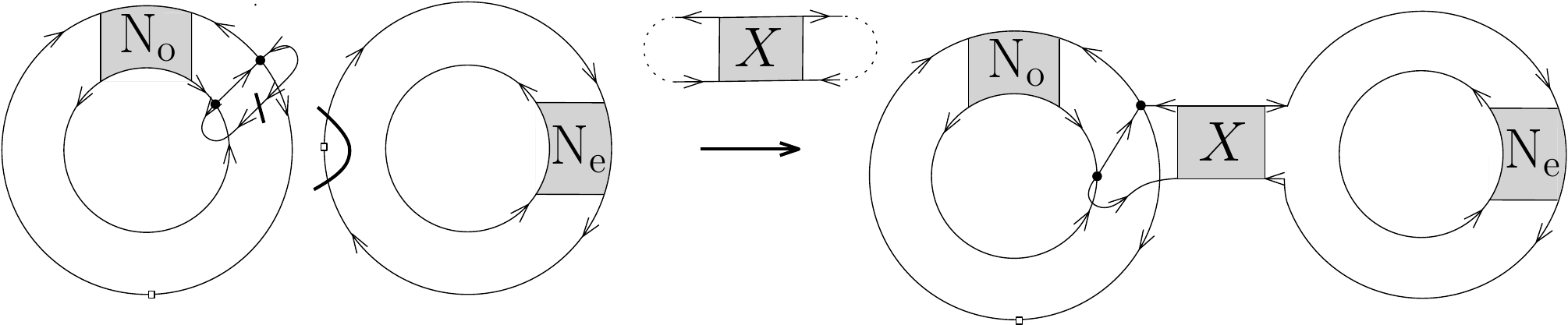}
\caption{\small Insertion of a separating dipole, a separating ladder, or a two-edge connection $X$, in-between a rung and a rail. Starting from two genus one graphs with scheme $S_1$ (Proposition \ref{propo:g1}), we obtain a 2PR graph of genus 2. In this particular example, $X$ must be a $\No$-ladder or a N-dipole.}
\label{fig:algorithm8}
\end{figure}

\end{itemize}


\setlength{\itemsep}{-1pt}
\providecommand{\href}[2]{#2}\begingroup\raggedright\endgroup

\end{document}